\newtheorem{te}{Theorem}[section]
\theoremstyle{definition}
\newtheorem{definition}[te]{Definition}
\theoremstyle{os}
\newtheorem{os}[te]{Remark}
\theoremstyle{prop}
\newtheorem{prop}[te]{Proposition}
\newtheorem{example}{Example}
\theoremstyle{lem}
\theoremstyle{coro}
\numberwithin{equation}{section}
\begin{document}

\title[Fractional directional gradient and its application]{Fractional gradient and its application to the fractional advection equation}

\author{Mirko D'Ovidio}
\email{mirko.dovidio@uniroma1.it}

\author{Roberto Garra}
\email{roberto.garra@sbai.uniroma1.it}

\address{Dipartimento di Scienze di Base e Applicate per l'Ingegneria, Sapienza University of Rome}

\keywords{Fractional vector calculus,directional derivatives, fractional advection equation}
\subjclass[2000]{60J35, 60J70, 35R11}

\date{\today}

\begin{abstract}
In this paper we provide a definition of fractional gradient operators, related to directional derivatives. We develop
a fractional vector calculus, providing a probabilistic interpretation and mathematical tools to treat multidimensional fractional differential equations.
A first application is discussed in relation to the d-dimensional fractional advection-dispersion equation. We also study the connection with multidimensional L\'evy processes.

\end{abstract}

\maketitle

\section{Introduction}

Fractional calculus is a developing field of the applied mathematics
regarding integro-differential equations involving fractional
integrals and derivatives. The increasing interest in fractional calculus
has been motivated by many applications of fractional equations
in different fields of research (see for example
\cite{Caputo,Main,Met,Ors}). However, most of the papers in this field are focused
on the analysis of fractional equations and processes in one
dimension, there are few works regarding fractional vector calculus
and its applications in theory of electromagnetic fields,
fluidodynamics and multidimensional processes. A first attempt to
give a formulation of fractional vector calculus is due to Ben
Adda(\cite{Ben Adda}). Recently a different approach in the
framework of multidimensional fractional advection-dispersion
equation has been developed by Meerschaert et al. (\cite{m1}, \cite{m2},
\cite{m3}). They present a general definition of gradient, divergence
and curl, in relation to fractional directional derivatives. In
their view, the fractional gradient is a weighted sum of fractional
directional derivatives in each direction. We notice that this
general approach to fractional gradient, depending on the choice of
the mixing measure, includes also the definition of fractional
gradient given by Tarasov in \cite{T2},\cite{T3}
(see also the recent book \cite{T1}) as a natural extension of the
ordinary case. Starting from these works, many authors have been
interested in understanding the applications of this fractional vector
calculus
in the theory of electromagnetic fields in fractal media (see for example \cite{ZAMP} and \cite{Baleanu}) and in the
analysis of multidimensional advection-dispersion equation (\cite{Bol, m3}).
Moreover, in \cite{silvia}, the authors study the application of fractional vector calculus to
the multidimensional Bloch-Torrey equation.\\

We study some equations involving a slightly modified version of the
fractional gradient introduced by Meerschaert et al. in \cite{m3} and we provide a new
class of fractional power of operators based on such gradient. We show some consequences of our approach, to treat multidimensional fractional differential equations.
In particular we discuss a first application of the fractional gradient to the fractional advection and dispersion equation and we find deterministic and stochastic solutions. We present a formulation of this equation in relation to the fractional conservation of mass, introduced by Meerschaert et al. in \cite{mass}.\\
Furthermore,
the properties of a class of multidimensional L\'evy processes related to fractional gradients are
investigated. Indeed, we introduce a novel L\'evy-Khinchine formula, involving our fractional gradient,
for the infinitesimal generator of a multidimensional random process.
It is well known that long jump random walks lead to limit processes
governed by the fractional Laplacian. We establish
some connections between compound Poisson processes with given jumps 
and the corresponding limit processes which are driven 
by our new L\'evy-Khinchine formula involving fractional gradient.\\
A general translation semigroup and the related 
Frobenius-Perron operator are also introduced and the associated 
advection equations are investigated. As in the previous cases, we find relation with 
compound Poisson processes.\\
We finally study the fractional power of the second order directional derivative $\left(\boldsymbol{\theta}\cdot\nabla\right)^2$ and the heat-type equation involving this operator.

\section{Fractional gradient operators and fractional directional derivatives}

In the general approach developed by Meerschaert et al.(\cite{m3}) in the framework of the multidimensional fractional advection-dispersion equation,
given a scalar function $f(\mathbf{x})$, the fractional gradient can be defined as
\begin{equation}\label{M}
\nabla^{\beta}_M f(\mathbf{x})=\int_{||\boldsymbol{\theta}||=1}\boldsymbol{\theta}D_{\boldsymbol{\theta}}^{\beta}f(\mathbf{x})M(d\boldsymbol{\theta}),\; \mathbf{x}\in\mathbb{R}^d, \beta\in(0,1)
\end{equation}
where $\boldsymbol{\theta}=(\theta_1,....,\theta_d)$ is a unit column vector; $M(d\boldsymbol{\theta})$ is a positive finite measure, called mixing measure;
\begin{equation}\label{MM}
D_{\boldsymbol{\theta}}^{\beta}f(\mathbf{x})=(\boldsymbol{\theta}\cdot \nabla)^{\beta}f(\mathbf{x}),
\end{equation}
is the fractional directional derivative of order $\beta$ (see for example \cite{Mirko}).\\
The Fourier transform of fractional directional derivatives
\eqref{MM} (in our notation) is given by
\begin{equation}
\widehat{D_{\boldsymbol{\theta}}^{\beta}f}(\mathbf{k})=\widehat{(\boldsymbol{\theta}
\cdot \nabla)^\beta f}(\mathbf{k}) = (- i \, \boldsymbol{\theta}
\cdot \mathbf{k})^\beta \, \widehat{f}(\mathbf{k}),
\end{equation}
where
$$\widehat{f}(\mathbf{k})=\int_{\mathbb{R}^d} e^{i\mathbf{k}\cdot \mathbf{x}}f(\mathbf{x})d\mathbf{x}.$$
Hence the Fourier transform of \eqref{M} is
written as
\begin{equation}\label{TF}
\widehat{\nabla^{\beta}_M f}(\mathbf{k})=
\int_{||\boldsymbol{\theta}||=1}\boldsymbol{\theta}(-i\mathbf{k}\cdot
\boldsymbol{\theta})^{\beta}\widehat{f}(\mathbf{k})M(d\boldsymbol{\theta}).
\end{equation}
This is a general definition of fractional gradient, depending on the choice of the mixing measure $M(d\boldsymbol{\theta})$. We can infer the physical and
geometrical meaning of this definition: it is a weighted sum of the fractional directional derivatives in each direction on a unitary sphere.\\
The definition \eqref{M} is really general and directly related to
multidimensional stable distributions. The divergence of \eqref{M} is given by
\begin{equation}\label{D}
\mathbb{D}_M^{\alpha}f(\mathbf{x}):= \nabla \cdot
\nabla^{\alpha-1}_M f(\mathbf{x})=
\int_{||\boldsymbol{\theta}||=1}D_{\boldsymbol{\theta}}^{\alpha}f(\mathbf{x})M(d\boldsymbol{\theta}),\;
\mathbf{x}\in\mathbb{R}^d, \alpha\in(1,2],
\end{equation}
whose Fourier transform, from \eqref{TF}, is written as
\begin{equation}
\widehat{\mathbb{D}^{\alpha}_Mf}(\mathbf{k})=
\int_{||\boldsymbol{\theta}||=1}(-i\mathbf{k}\cdot\boldsymbol{\theta})^{\alpha}f(\mathbf{k})M(d\boldsymbol{\theta}).
\end{equation}
The scalar operator $\mathbb{D}_M^{\alpha}$ plays the role of fractional Laplacian in the fractional diffusion equation,
introducing a more general class of processes depending on the choice of the measure $M$.\\
For the sake of clarity we
refer to Meerschaert et al. (\cite{m1}) about
multidimensional fractional diffusion-type equations
involving this kind of operators.
Let us consider the multidimensional fractional diffusion-type equation involving $\mathbb{D}_M^{\alpha}$, given
by
\begin{equation}\label{dif1}
\frac{\partial u}{\partial t}(\mathbf{x},t)= \mathbb{D}_M^{\alpha}u(\mathbf{x},t),
\end{equation}
with initial condition
$$u(\mathbf{x},0)=\delta(\mathbf{x}).$$
We obtain by Fourier transform
\begin{equation}
\frac{\partial \hat{u}}{\partial t}(\mathbf{k},t)=
\int_{||\boldsymbol{\theta}||=1}(-i\mathbf{k}\cdot\boldsymbol{\theta})^{\alpha}M(d\boldsymbol{\theta})\hat{u}
(\mathbf{k},t).
\end{equation}
Then, the the solution of \eqref{dif1} in the Fourier space is given by
\begin{equation}
\hat{u}(\mathbf{k},t)=
exp\left(t\int_{||\boldsymbol{\theta}||=1}(-i\mathbf{k}\cdot\boldsymbol{\theta})^{\alpha}M(d\boldsymbol{\theta})\right),
\end{equation}
which is strictly related with multivariate stable distributions, as the following well known result
entails

\begin{te}[\cite{Taqqu}, pag.65]

Let $\alpha \in (0,2)$, then $\boldsymbol{\theta}= (\theta_1,...,\theta_d)$ is an $\alpha$-stable random vector in $\mathbb{R}^d$ if and only if there exists a finite
measure $\Gamma$ on the unitary sphere and a vector $\boldsymbol{\mu}^0=(\mu_1^0, ....\mu_d^0)$ such that its characteristic function is given by\\
$$\mathbb{E}exp\{i(\mathbf{k}\cdot\boldsymbol{\theta})\}= e^{-\sigma \psi(\mathbf{k})},$$
where $\sigma = \cos(\frac{\pi \alpha}{2})$, and
\begin{equation}\nonumber
\psi(\mathbf{k})=
\begin{cases}
\int_{||\boldsymbol{\theta}||=1}|\boldsymbol{\theta}\cdot \mathbf{k}|^{\alpha}(1-i sign
(\boldsymbol{\theta}\cdot\mathbf{k})\tan\frac{\pi\alpha}{2})
\Gamma(d\boldsymbol{\theta})+i(\mathbf{k}\cdot\boldsymbol{\mu}^0), & \mbox{if } \alpha \neq 1, \\
\int_{||\boldsymbol{\theta}||=1}|\boldsymbol{\theta}\cdot \mathbf{k}|(1+i\frac{2}{\pi} sign
(\boldsymbol{\theta}\cdot\mathbf{k})\ln|(\boldsymbol{\theta}\cdot\mathbf{k})|)
\Gamma(d\boldsymbol{\theta})+i(\mathbf{k}\cdot\boldsymbol{\mu}^0), & \mbox{if } \alpha = 1.
\end{cases}
\end{equation}
The pair $(\Gamma, \boldsymbol{\mu}^0)$ is unique.
\end{te}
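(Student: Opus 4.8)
Since this is the classical spectral (Lévy--Khinchine) representation of multivariate stable laws, the plan is to derive it from the general Lévy--Khinchine formula for infinitely divisible distributions together with the defining scaling property of stability. First I would recall that every $\alpha$-stable vector $\boldsymbol{\theta}$ is infinitely divisible, so $\mathbb{E}\exp\{i(\mathbf{k}\cdot\boldsymbol{\theta})\}=\exp\{-\psi(\mathbf{k})\}$ with
\[
\psi(\mathbf{k}) = -i(\mathbf{k}\cdot\boldsymbol{\gamma}) + \tfrac12(\mathbf{k}\cdot A\mathbf{k}) + \int_{\mathbb{R}^d\setminus\{0\}}\!\!\big(1 - e^{i(\mathbf{k}\cdot\mathbf{x})} + i(\mathbf{k}\cdot\mathbf{x})\mathbf{1}_{\{\|\mathbf{x}\|\le 1\}}\big)\,\nu(d\mathbf{x}),
\]
where $A$ is nonnegative definite and $\nu$ is a Lévy measure, and where the triple $(\boldsymbol{\gamma},A,\nu)$ is uniquely determined by the law.

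Second, I would impose strict $\alpha$-stability up to translation: for all $a,b>0$ there is $\mathbf{d}=\mathbf{d}(a,b)$ with $a\boldsymbol{\theta}'+b\boldsymbol{\theta}''\stackrel{d}{=}(a^{\alpha}+b^{\alpha})^{1/\alpha}\boldsymbol{\theta}+\mathbf{d}$, which on exponents reads $\psi(a\mathbf{k})+\psi(b\mathbf{k})=\psi\big((a^{\alpha}+b^{\alpha})^{1/\alpha}\mathbf{k}\big)-i(\mathbf{d}\cdot\mathbf{k})$. Matching the three pieces of the triple under this identity and invoking their uniqueness forces $A=0$ when $\alpha<2$ and pins the Lévy measure to its homogeneous polar form: writing $\mathbf{x}=r\boldsymbol{\theta}$ with $r>0$, $\|\boldsymbol{\theta}\|=1$, one obtains $\nu(dr\,d\boldsymbol{\theta})=c\,r^{-\alpha-1}\,dr\,\sigma(d\boldsymbol{\theta})$ for some finite measure $\sigma$ on the unit sphere. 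This is the step I expect to be the main obstacle: the scaling relation only constrains $\nu$ radially, so one must genuinely rule out other $r$-dependence and, in the delicate case $\alpha=1$, keep track of the logarithmic drift that the scaling generates in $\mathbf{d}(a,b)$.

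Third, I would substitute the polar form back into the integral and carry out the one-dimensional radial integration $\int_0^{\infty}\big(1-e^{iur}+iur\,\mathbf{1}_{\{r\le1\}}\big)\,r^{-\alpha-1}\,dr$ for each fixed direction, with $u=\boldsymbol{\theta}\cdot\mathbf{k}$. A standard contour (or Gamma-function) computation gives, up to a positive constant, $|u|^{\alpha}\big(1-i\,\mathrm{sign}(u)\tan\tfrac{\pi\alpha}{2}\big)$ when $\alpha\ne1$ and $|u|\big(1+i\tfrac{2}{\pi}\,\mathrm{sign}(u)\ln|u|\big)$ when $\alpha=1$. Collecting the factor $\cos(\pi\alpha/2)$ into the prefactor and absorbing all linear-in-$\mathbf{k}$ remainders (including the truncation terms) into a single vector $\boldsymbol{\mu}^0$ yields exactly the claimed formula for $\psi$, with $\Gamma$ a rescaling of $\sigma$.

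For the converse, given $(\Gamma,\boldsymbol{\mu}^0)$ I would reverse the third step to recognise $\sigma\psi$ as a Lévy--Khinchine exponent with zero Gaussian part and Lévy measure of the above polar type, so that $\exp\{-\sigma\psi\}$ is a genuine characteristic function; a direct computation then checks the stability relation, using that $\psi(\mathbf{k})-i(\mathbf{k}\cdot\boldsymbol{\mu}^0)$ is positively homogeneous of degree $\alpha$ (respectively, has the controlled degree-one-plus-logarithm behaviour for $\alpha=1$). Finally, uniqueness of $(\Gamma,\boldsymbol{\mu}^0)$ follows from uniqueness of the Lévy--Khinchine triple combined with uniqueness of the polar decomposition of $\nu$.
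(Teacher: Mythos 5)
This statement is quoted in the paper as a known result from Samorodnitsky and Taqqu and is given no proof there, so there is nothing internal to compare against; your outline is, in substance, the standard argument from that reference (infinite divisibility, scaling kills the Gaussian part and forces the polar form $c\,r^{-\alpha-1}dr\,\sigma(d\boldsymbol{\theta})$ of the L\'evy measure, radial integration, uniqueness of the triple), and it is essentially sound, including your correct flagging of the two delicate points (ruling out non-homogeneous radial behaviour and the logarithmic drift at $\alpha=1$). The only place where your sketch papers over something is the final ``collecting the factor $\cos(\pi\alpha/2)$ into the prefactor'': the radial integral produces $\cos(\pi\alpha/2)$ multiplied into the spectral measure, and for $\alpha\in(1,2)$ this constant is negative, so one must absorb $|\cos(\pi\alpha/2)|$ into the (positive, finite) measure $\Gamma$ rather than literally factor out $\sigma=\cos(\pi\alpha/2)$ as the paper's (nonstandard) transcription suggests; the usual statement in the cited book has no such prefactor. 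This is a defect of the statement as transcribed rather than of your argument.
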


In light of Theorem 2.1 and the fact that
\begin{equation}
(-i\zeta)^{\alpha}=
|\zeta|^{\alpha}e^{-i\frac{\pi}{2}\alpha\frac{\zeta}{|\zeta|}}=
|\zeta|^{\alpha}e^{-i\frac{\pi}{2}\alpha sign(\zeta)},
\end{equation}
 the solution of
\eqref{dif1} can be interpreted as the law of a $d$-dimensional
$\alpha$-stable vector, whose characteristic function is given, for
$\alpha\neq 1$, by the pair $(M,\mathbf{0})$, i.e. the vector
$\boldsymbol{\mu}^0$ is null and the measure $M$ is the spectral
measure of the random vector $\boldsymbol{\theta}$. This is a general approach to multidimensional fractional
differential equations, suggesting the geometrical and
probabilistic meaning of \eqref{dif1}. On the other hand it includes
a wide class of processes, depending on the spectral measure $M$. As
a first notable example, being
$M(d\boldsymbol{\theta})=
m(\boldsymbol{\theta})d\boldsymbol{\theta}$, if we take
$m(\boldsymbol{\theta})= const.$ in \eqref{D}, then we obtain the
well known Riesz derivative (\cite{libro}). In the framework of
fractional vector calculus we obtain a
geometric interpretation of the fractional Laplacian which is strictly related to uniform isotropic measure.\\
We also notice that the definition of fractional gradient given by
Tarasov (\cite{T2}) is a special case of \eqref{M}, corresponding to
the case in which the mixing measure is a point mass at each
coordinate vector $\mathbf{e}_i$, for $i= 1, ...., d$. In this case
the fractional gradient seems to be a formal extension of the
ordinary to the fractional case, i.e.
\begin{equation}\label{gtar}
\nabla^{\beta}f(\mathbf{x})=\sum_{i=1}^{d}\frac{\partial^{\beta} f(\mathbf{x})}{\partial x_i^\beta}\mathbf{e}_i,
\end{equation}
where $\frac{\partial^{\beta} f}{\partial x_i^\beta}$ is the Weyl partial fractional derivative of order $\beta\in(0,1)$, defined as (\cite{libro})
\begin{equation}
\frac{d^{\beta} f}{d x^\beta}= \frac{1}{\Gamma(1-\beta)}\frac{d}{d x}\int_{-\infty}^x (x-y)^{-\beta}f(y)dy,\quad x\in\mathbb{R}.
\end{equation}
Formula \eqref{gtar} seems to be a natural way to generalize the definition of gradient of fractional order.
Indeed, for $\beta =1$ we recover the ordinary gradient.
From a geometrical point of view this is an integration centered on preferred directions given by the Cartesian set of axes.
From a probabilistic point of view this is the unique case in which an $\alpha$-stable random vector has independent components as shown by Samorodnitsky
and Taqqu (\cite{Taqqu}, Example 2.3.5, pag.68). It corresponds to a choice of the spectral measure $\Gamma$ discrete and concentrated on the intersection of
the axes with the unitary sphere.\\

In this paper we adopt an intermediate approach between the special
case treated by Tarasov (\cite{T1}) and the most general one treated
by Meerschaert et al. (\cite{m3}). Indeed, we consider the following
subcase of the general definition \eqref{M}

\begin{definition}

For $\beta \in(0,1)$ and a ``good'' scalar function $f(\mathbf{x})$,
$\mathbf{x}\in \mathbb{R}^d$, being
$(\boldsymbol{\theta}_1,.......,\boldsymbol{\theta}_d)$, with
$\boldsymbol{\theta}_j\in\mathbb{R}^d$, for $j = 1,2,..,d$, an
orthonormal basis, the fractional gradient is written as
\begin{equation}\label{ngr}
\nabla^{\beta}_{\theta}f(\mathbf{x})=\sum_{l= 1}^d
\boldsymbol{\theta}_l(\boldsymbol{\theta}_l\cdot
\nabla)^{\beta}f(\mathbf{x}), \quad f\in L^1(\mathbb{R}^d),
\end{equation}
where we use the subscript $\theta$ to underline the connection with the
mixing measure $M$ which is a point mass measure at each
coordinate vectors $\boldsymbol{\theta}_l$, $l= 1,\cdots, d$.

\end{definition}

\bigskip

This is a superposition of fractional directional derivatives, taking into account all the directions $\boldsymbol{\theta}_i$, it is a more general
approach than that adopted by Tarasov. However, also in this case, for $\beta = 1$ we recover the definition of the ordinary gradient.
An explicit representation of the fractional gradient \eqref{ngr} is given by means of operational methods. Indeed, in \cite{Mirko}, it was shown that the fractional power of the
directional derivative is given by
\begin{equation}
\left(\boldsymbol{\theta}\cdot\nabla\right)^{\beta}f(\mathbf{x})=\frac{\beta}{\Gamma(1-\beta)}
\int_0^{\infty}\left(f(\mathbf{x})-f(\mathbf{x}-s\boldsymbol{\theta})\right)s^{-\beta-1}ds, \quad \beta\in(0,1),
\end{equation}
so that \eqref{ngr} has the following representation
\begin{equation}
\nabla^{\beta}_{\theta}f(\mathbf{x})=\sum_{l= 1}^d
\frac{\beta\,\boldsymbol{\theta}_l}{\Gamma(1-\beta)}
\int_0^{\infty}\left(f(\mathbf{x})-f(\mathbf{x}-s\boldsymbol{\theta}_l)\right)s^{-\beta-1}ds.
\end{equation}
Our specialization of \eqref{M} provides useful and manageable tools
to treat fractional equations in multidimensional spaces in order to
find explicit solutions. We notice that each vector in the
orthonormal basis
$(\boldsymbol{\theta}_1,.......,\boldsymbol{\theta}_d)$ can be
expressed in terms of the canonical basis $\mathbf{e}_i$ by applying
a rotation matrix, such that
$$\boldsymbol{\theta}_i = \sum_{k = 1}^{d}\theta_{ik} \mathbf{e}_k.$$
The Fourier transform of \eqref{ngr} is given by
\begin{equation}\label{fou}
\widehat{\nabla^{\beta}_{\theta}f}(\mathbf{k})= \sum_{l=1}^d
\boldsymbol{\theta}_l(-i\mathbf{k}\cdot
\boldsymbol{\theta}_l)^{\beta}\widehat{f}(\mathbf{k}).
\end{equation}
A relevant point to understand the consequence of this definition in
the framework of fractional vector calculus is given by the
definition of fractional Laplacian. For $\beta \in(1,2]$, given a
scalar function $f(\mathbf{x})$, with $\mathbf{x}\in \mathbb{R}^d$,
the fractional directional operator corresponding to the definition
\eqref{ngr} is given by
\begin{equation}
\mathbb{D}_{\theta}^{\beta}f(\mathbf{x})=\nabla_{\theta}\cdot
\nabla_{\theta}^{\beta-1}f(\mathbf{x}),
\end{equation}
that is the inverse Fourier transform of
\begin{align}\label{lap3}
\widehat{\mathbb{D}_{\theta}^{\beta}f}(\mathbf{k})= \sum_{l=1}^d
(-i\mathbf{k}\cdot
\boldsymbol{\theta}_l)^{\beta}\widehat{f}(\mathbf{k}).
\end{align}
We remark that the fractional operator \eqref{lap3} is given by the
sum of fractional directional derivatives of order $\beta \in
(1,2]$. Indeed, by inverting \eqref{lap3}, we get
$$\mathbb{D}_{\theta}^{\beta}f(\mathbf{x})= \sum_{l= 1}^d (\boldsymbol{\theta}_l\cdot \nabla)^{\beta}f(\mathbf{x}).$$
In the same way we can give a definition of fractional divergence of
a vector field as follows
\begin{equation}
div^{\beta}\mathbf{u}(\mathbf{x},t)=\nabla^\beta_{\theta}\cdot
\mathbf{u}=\sum_{l=1}^d (\boldsymbol{\theta}_l\cdot
\nabla)^{\beta}\boldsymbol{\theta}_l\cdot \mathbf{u}(\mathbf{x},t),
\end{equation}
with $\beta\in(0,1)$.

\begin{example}

Let us consider the case $\mathbf{x}\in \mathbb{R}^2$. In this case we denote $\boldsymbol{\theta}_1\equiv(\cos\theta_1, \sin \theta_1)$ and
$\boldsymbol{\theta}_2\equiv(\cos\theta_2, \sin \theta_2)$. By definition, these two vectors must be orthonormal, hence $\theta_2= \theta_1+\frac{\pi}{2}$.
These two fixed directions are given by a rotation of the cartesian
axes. In this case the fractional gradient is given by
\begin{equation}
\nabla^{\beta}_{\theta}f(\mathbf{x})\equiv\left[(\cos\theta_1, \sin\theta_1)(\cos\theta_1 \partial_x+\sin\theta_1 \partial_y)^{\beta}+
(\cos\theta_2, \sin\theta_2)(\cos\theta_2 \partial_x+\sin\theta_2 \partial_y)^{\beta}\right]f(\mathbf{x}).
\end{equation}
An interesting discussion about this two-dimensional case can be found in \cite{Ervin}.

\end{example}

\begin{os}

We observe that in the case $\boldsymbol{\theta}_i \equiv
\mathbf{e}_i$, we have the definition of fractional
gradient given by Tarasov. The divergence of this operator brings to the analog of the fractional
Laplacian, given by
\begin{equation}\label{yac}
\nabla_{\theta}\cdot \nabla_{\theta}^{\beta}f(\mathbf{x})=
\sum_{k = 1}^d \frac{\partial}{\partial
x_k}\frac{\partial^{\beta}}{\partial
x_k^{\beta}}f(\mathbf{x}),
\end{equation}
which means that, for $\beta = 1$, we recover the classical
definition of Laplacian. On the other hand, it is well known that in
some cases the Riemann-Liouville derivative does not satisfy
the law of exponent,
$$\frac{\partial}{\partial x}\frac{\partial^{\beta}}{\partial x^{\beta}}f(\mathbf{x})\neq
\frac{\partial^{1+\beta}}{\partial x^{1+\beta}} f(\mathbf{x}).$$
Hence, in this case the fractional heat equation, for $d = 2$, has the
following form
\begin{equation}\nonumber
\frac{\partial}{\partial t}f(x,y,t)=\left(\frac{\partial}{\partial x}\frac{\partial^{\beta}}{\partial x^{\beta}}+\frac{\partial}{\partial y}\frac{\partial^{\beta}}{\partial y^{\beta}}\right)f(x,y,t)
\end{equation}
i.e, a multidimensional heat equation with fractional sequential
derivatives. We observe that \eqref{yac} leads to the Riemann-Liouville fractional analog
of the Laplace operator recently studied by Dalla Riva and Yakubovich in \cite{yacu}.
The physical and probabilistic meaning of this
formulation will be discussed below in relation to the general
formulation concerning Definition 2.2.

\end{os}

\section{Multidimensional fractional directional advection equation}

We study the $d$-dimensional
fractional advection equation by following the approach to fractional vector
calculus suggested in the previous section. From a physical point of view we get
inspiration from \cite{m3}, where the fractional vector calculus has been applied in order to study the flow
of contaminants in an heterogeneous porous medium. First of all we
give a different, original derivation of the fractional
multidimensional advection equation, starting from the continuity
equation, that is
\begin{equation}\label{fic}
\frac{\partial\rho_{\alpha}}{\partial t}= - div^{\alpha}\mathbf{V}, \; \alpha\in(0,1),
\end{equation}
where $\mathbf{V}(\mathbf{x},t)$ is the flux of contaminant particles, that is the vector rate at which mass is transported through a unit surface.
The physical meaning of this fractional conservation of mass can be directly related to the recent paper by Wheatcraft and Meerschaert (\cite{mass}).
The relation between flux and density of contaminants is given by the classical Fick's law, its form in absence of dispersion is simply
\begin{equation}
\mathbf{V}(\mathbf{x},t)= \mathbf{u}\rho_{\alpha}(\mathbf{x},t),
\end{equation}
where $\mathbf{u}$ is the velocity field of contaminant particles; for semplicity in the following discussion we take this velocity field constant in all directions.
By substitution we find the $n$-dimensional fractional advection equation in the following form
\begin{equation}
\frac{\partial\rho_{\alpha}}{\partial t}= -
div^{\alpha}(\mathbf{u}\rho_{\alpha})=-
\nabla_{\theta}^{\alpha}\cdot(\mathbf{u} \rho_{\alpha}).
\end{equation}
Hereafter we denote with $\chi_D$ the characteristic function of the set $D$.\\

We are now ready to state the following

\begin{te}

Let us consider the $d$-dimensional fractional advection equation
\begin{equation}\label{main}
\frac{\partial}{\partial t}\rho_{\alpha} +
\nabla^{\alpha}_{\theta}\cdot (\mathbf{u}\rho_{\alpha})= 0, \quad
\mathbf{x} \in \mathbb{R}^d, \, t>0,
\end{equation}
where $\alpha \in (0,1)$, and $\mathbf{u}\equiv (u_1,....., u_d)$ is
the velocity field, with $u_i$, $i= 1,...,d$, constants. The
solution to \eqref{main}, subject to the initial condition
$$\rho_{\alpha}(\mathbf{x}, 0) = f(\mathbf{x})\in L^1(\mathbb{R}^d),$$
is written as
\begin{equation}\label{oggi}
\rho_{\alpha}(\mathbf{x}, t) =
\int_{\mathbb{R}^d}f(\mathbf{y})\prod_{l=1}^{d}\mathcal{U}_{\alpha}(\boldsymbol{\theta}_l\cdot(\mathbf{x}-\mathbf{y}),
(\mathbf{u}\cdot\boldsymbol{\theta}_l)t)\chi_{\{\boldsymbol{\theta}_l\cdot(\mathbf{x}-\mathbf{y})\geq
0\}}(\mathbf{y})d\mathbf{y},
\end{equation}
 where
$\mathcal{U}_{\alpha}$ is the solution to
\begin{equation}\label{main2}
\left( \frac{\partial}{\partial t} +
\lambda\frac{\partial^{\alpha}}{\partial
x^{\alpha}}\right)\mathcal{U}_{\alpha}(x,t) = 0, \quad x \in
\mathbb{R}_{+}, \, t>0 , \lambda\in \mathbb{R}_{+},
\end{equation}
with initial condition $\mathcal{U}_{\alpha}(x,0)=\delta(x)$.

\end{te}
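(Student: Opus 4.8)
The plan is to pass to Fourier variables, reduce \eqref{main} to a scalar linear ODE in $t$, solve it explicitly, and then recognise the resulting Fourier multiplier as a product of one--dimensional multipliers whose inverse transform is the convolution kernel in \eqref{oggi}.

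\textbf{Fourier reduction.} Since $\mathbf{u}$ is a constant vector, for each $l$ the quantity $\mathbf{u}\cdot\boldsymbol{\theta}_l$ is a constant, so by the very definition of $\nabla^{\alpha}_{\theta}\cdot\,$ one has $\nabla^{\alpha}_{\theta}\cdot(\mathbf{u}\rho_{\alpha})=\sum_{l=1}^{d}(\mathbf{u}\cdot\boldsymbol{\theta}_l)(\boldsymbol{\theta}_l\cdot\nabla)^{\alpha}\rho_{\alpha}$. Applying the Fourier transform and the symbol $\widehat{(\boldsymbol{\theta}_l\cdot\nabla)^{\alpha}g}(\mathbf{k})=(-i\boldsymbol{\theta}_l\cdot\mathbf{k})^{\alpha}\widehat{g}(\mathbf{k})$ recalled in Section~2, equation \eqref{main} turns into the ODE (with $\mathbf{k}$ a parameter)
\[
\frac{\partial}{\partial t}\widehat{\rho_{\alpha}}(\mathbf{k},t)=-\Bigl(\sum_{l=1}^{d}(\mathbf{u}\cdot\boldsymbol{\theta}_l)(-i\boldsymbol{\theta}_l\cdot\mathbf{k})^{\alpha}\Bigr)\widehat{\rho_{\alpha}}(\mathbf{k},t),
\]
whose solution with datum $\widehat{\rho_{\alpha}}(\mathbf{k},0)=\widehat{f}(\mathbf{k})$ is
\[
\widehat{\rho_{\alpha}}(\mathbf{k},t)=\widehat{f}(\mathbf{k})\,\prod_{l=1}^{d}\exp\bigl(-t\,(\mathbf{u}\cdot\boldsymbol{\theta}_l)(-i\boldsymbol{\theta}_l\cdot\mathbf{k})^{\alpha}\bigr).
\]

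\textbf{Identification of the factors and inversion.} The same computation in one dimension for \eqref{main2}, using that the Weyl derivative has Fourier symbol $(-ik)^{\alpha}$, gives $\widehat{\mathcal{U}_{\alpha}}(k,t)=e^{-\lambda t(-ik)^{\alpha}}$; writing $\mathcal{U}_\alpha$ for the $\lambda=1$ solution, by self--similarity the solution for general $\lambda$ is $\mathcal{U}_\alpha(x,\lambda t)$, and since it is the law at time $\lambda t$ of a stable subordinator it is supported on $[0,\infty)$. Now let $\Theta$ be the orthogonal matrix with columns $\boldsymbol{\theta}_1,\dots,\boldsymbol{\theta}_d$ and substitute $w_l=\boldsymbol{\theta}_l\cdot\mathbf{z}$, so that $d\mathbf{z}=d\mathbf{w}$ and $\mathbf{k}\cdot\mathbf{z}=\sum_l(\boldsymbol{\theta}_l\cdot\mathbf{k})\,w_l$. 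Then for
\[
G(\mathbf{z},t):=\prod_{l=1}^{d}\mathcal{U}_{\alpha}\bigl(\boldsymbol{\theta}_l\cdot\mathbf{z},(\mathbf{u}\cdot\boldsymbol{\theta}_l)t\bigr)\,\chi_{\{\boldsymbol{\theta}_l\cdot\mathbf{z}\ge 0\}}
\]
the change of variables factorises the $d$--dimensional integral into a product of one--dimensional transforms, giving $\widehat{G}(\mathbf{k},t)=\prod_{l=1}^{d}e^{-t(\mathbf{u}\cdot\boldsymbol{\theta}_l)(-i\boldsymbol{\theta}_l\cdot\mathbf{k})^{\alpha}}$, which is exactly the multiplier found above (the cut--off $\chi_{\{\boldsymbol{\theta}_l\cdot\mathbf{z}\ge 0\}}$ is consistent with the support of $\mathcal{U}_\alpha$). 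Hence $\widehat{\rho_{\alpha}}(\mathbf{k},t)=\widehat{f}(\mathbf{k})\,\widehat{G}(\mathbf{k},t)$, so by Fourier inversion $\rho_{\alpha}(\cdot,t)=f\ast G(\cdot,t)$, which written out is precisely \eqref{oggi}; the initial datum is recovered from $\widehat{G}(\mathbf{k},0)=1$.

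\textbf{Main obstacle.} The delicate points are making the Fourier steps rigorous for merely $L^1$ data and handling the one--sided support together with the sign of $\mathbf{u}\cdot\boldsymbol{\theta}_l$. A clean way is to read \eqref{oggi} as the definition of a candidate, check $G(\cdot,t)\in L^1(\mathbb{R}^d)$ so that $f\ast G$ is well defined and $\partial_t$ commutes with the convolution, and then verify \eqref{main} by differentiating under the integral sign using that each factor solves \eqref{main2}; the orthonormality of $(\boldsymbol{\theta}_1,\dots,\boldsymbol{\theta}_d)$ is what lets the operator $\nabla^{\alpha}_{\theta}\cdot(\mathbf{u}\,\cdot)$ act factor by factor on $G$. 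One must also keep track of the branch $(-i\zeta)^{\alpha}=|\zeta|^{\alpha}e^{-i\frac{\pi}{2}\alpha\,\mathrm{sign}(\zeta)}$ and note that \eqref{oggi} tacitly requires $\mathbf{u}\cdot\boldsymbol{\theta}_l\ge 0$, so that the time argument $(\mathbf{u}\cdot\boldsymbol{\theta}_l)t$ is admissible in \eqref{main2}; otherwise a reflection of the corresponding factor is needed.
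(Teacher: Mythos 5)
Your proposal is correct and follows essentially the same route as the paper: Fourier transform of \eqref{main}, integration of the resulting ODE in $t$, factorisation of the multiplier into the one--dimensional symbols $e^{-\lambda_l t(-i\gamma)^{\alpha}}$ of \eqref{main2}, and inversion via the convolution theorem. If anything, your explicit orthogonal change of variables $w_l=\boldsymbol{\theta}_l\cdot\mathbf{z}$ justifying that $\widehat{G}$ is the product of the one--dimensional transforms is more careful than the paper's bare assertion about the inverse transform of each factor, and your closing remarks on the sign of $\mathbf{u}\cdot\boldsymbol{\theta}_l$ and the $L^1$ setting flag genuine points the paper leaves implicit.
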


\begin{proof}

We start by taking the Fourier transform of equation \eqref{main},
given by
\begin{equation}
\frac{\partial}{\partial t}\widehat{\rho_{\alpha}}(\mathbf{k}, t)
+\mathbf{u}\cdot
\widehat{\nabla_{\theta}^{\alpha}\rho_{\alpha}}(\mathbf{k}, t) = 0.
\end{equation}
From \eqref{fou}, we obtain that
\begin{equation}
\left( \frac{\partial}{\partial t} + \left(\sum_{l=1}^d (\mathbf{u}\cdot\boldsymbol{\theta}_l)(-i\mathbf{k}\cdot \boldsymbol{\theta}_l)^{\alpha}\right)\right)
\widehat{\rho_{\alpha}}(\mathbf{k}, t) = 0,
\end{equation}
and by integration we find
\begin{align}\label{main1a}
\widehat{\rho_{\alpha}}(\mathbf{k}, t) &= \widehat{f}(\mathbf{k})exp\left(-t\sum_{l=1}^d (\mathbf{u}\cdot\boldsymbol{\theta}_l)(-i\mathbf{k}\cdot \boldsymbol{\theta}_l)^{\alpha}\right)\\
\nonumber &=\widehat{f}(\mathbf{k}) \prod_{l=1}^d
exp\left(-t(\mathbf{u}\cdot\boldsymbol{\theta}_l)(-i\mathbf{k}\cdot
\boldsymbol{\theta}_l)^{\alpha}\right).
\end{align}
If we take the Fourier transform of equation \eqref{main2}, then we
obtain
\begin{equation}\label{man}
\left( \frac{\partial}{\partial t} +
\lambda(-i\gamma)^{\alpha}\right)\widehat{\mathcal{U}_{\alpha}}(\gamma,t)
= 0,
\end{equation}
where we used the fact that
$$\widehat{\frac{\partial^{\alpha}}{\partial x^{\alpha}}f}(\gamma)=(-i\gamma)^{\alpha}\widehat{f}(\gamma).$$
By integrating \eqref{man}, and by taking into account the initial
condition, we obtain
\begin{equation}
\widehat{\mathcal{U}_{\alpha}}(\gamma,t)= exp(-\lambda t(-i\gamma)^{\alpha}).
\end{equation}
Thus, we can rearrange \eqref{main1a} as follows
\begin{align}
\widehat{\rho}_{\alpha}(\mathbf{k},t)&=
\widehat{f}(\mathbf{k})\prod_{l=1}^d
exp\big(-t(\mathbf{u}\cdot\boldsymbol{\theta}_l)(-i\mathbf{k}\cdot
\boldsymbol{\theta}_l)^{\alpha}\big)\\
\nonumber&=\widehat{f}(\mathbf{k}) \prod_{l=1}^d
\widehat{\mathcal{U}_{\alpha}}(\gamma_l,\lambda_l t)|_{\gamma_l=
\mathbf{k}\cdot\boldsymbol{\theta}_l,\lambda_l=\mathbf{u}\cdot\boldsymbol{\theta}_l}.
\end{align}
Finally, we observe that the inverse Fourier transform of any
\begin{equation}\nonumber
\widehat{\mathcal{U}_{\alpha}}(\mathbf{k}\cdot\boldsymbol{\theta}_l,\lambda_l
t), \quad l = 1,2,\cdots, d,
\end{equation}
is given by
\begin{equation}
{\mathcal{U}_{\alpha}}(\mathbf{x}\cdot\boldsymbol{\theta}_l,\lambda_l
t)\chi_{\{(\mathbf{x}\cdot \boldsymbol{\theta}_l)\geq 0\}}\quad l =
1,2,\cdots, d,
\end{equation}
 and therefore, we get that
\begin{equation}\label{gre}
\rho_{\alpha}(\mathbf{x}, t)=(f\ast G)(\mathbf{x},t),
\end{equation}
where the symbol $\ast$ stands for Fourier convolution, and
\begin{equation}\label{gre}
G(\mathbf{x},t)=\prod_{l=1}^d
\mathcal{U}_{\alpha}(\mathbf{x}\cdot\boldsymbol{\theta}_l,
(\mathbf{u}\cdot\boldsymbol{\theta}_l)t)\chi_{\{(\mathbf{x}\cdot
\boldsymbol{\theta}_l)\geq 0\}}.
\end{equation}
Formula \eqref{gre} can be explicitly written as
\begin{equation}\label{gre1}
\rho_{\alpha}(\mathbf{x,t})=\int_{\mathbb{R}^d}f(\mathbf{y})
G(\mathbf{x}-\mathbf{y},t)d\mathbf{y},
\end{equation}
therefore \eqref{gre1} coincides with \eqref{oggi} and the proof is
completed.
\end{proof}

\bigskip

Let us consider the L\'evy process
$\left(\boldsymbol{X}_t\right)_{t\geq 0}$, with infinitesimal
generator $\mathcal{A}$ and transition semigroup $P_t=
e^{t\mathcal{A}}$. The transition law of
$\left(\boldsymbol{X}_t\right)_{t\geq 0}$ is written as
\begin{equation}\nonumber
P_t u_0(\mathbf{x})=\mathbb{E}u_0(\mathbf{X}_t+\mathbf{x}),
\end{equation}
and solves the Cauchy problem
\begin{equation}\label{levo}
\begin{cases}
\frac{\partial}{\partial t}u(\mathbf{x},t)= (\mathcal{A}u)(\mathbf{x},t),\\
u(\mathbf{x},0)= u_0(\mathbf{x}).
\end{cases}
\end{equation}
We say that the process $\left(\boldsymbol{X}_t\right)_{t\geq 0}$ is the stochastic solution of \eqref{levo}.
We also consider the integral representation of $\mathcal{A}$, given by
\begin{equation}\label{af}
\mathcal{A} f(\mathbf{x})=
\frac{1}{(2\pi)^d}\int_{\mathbb{R}^d}e^{-i\mathbf{k}\cdot\mathbf{x}}\Phi(\mathbf{k})\widehat{f}(\mathbf{k})d\mathbf{k},
\end{equation}
for all functions $f$ in the domain
\begin{equation}
D(\mathcal{A})=\big\{f(\mathbf{x})\in L^1_{loc}(\mathbb{R}^d, d
\mathbf{x}):\int_{\mathbb{R}^d}\Phi(\mathbf{k})|\widehat{f}(\mathbf{k})|^2d\mathbf{k}<\infty\big\}
\end{equation}
Then, we say that $P_t$ is a pseudo-differential operator with
symbol $\widehat{P}_t= exp(t\Phi)$ and $\Phi$ is the Fourier multiplier of
$\mathcal{A}$. Furthermore from the characteristic function of the
process $(\boldsymbol{X}_t)_{t\geq 0}$, we obtain that
\begin{equation}
\left[\frac{\partial}{\partial t}\mathbb{E}e^{i\mathbf{k}\cdot
\boldsymbol{X}_t}\right]_{t=0}= \Phi(\mathbf{k}).
\end{equation}
We also recall that a stable subordinator
$(\mathfrak{H}^\alpha_t)_{t> 0}$, $\alpha \in (0,1)$, is a L\'evy process with
non-negative, independent and stationary increments, whose law, say
$h_{\alpha}(x,t)$, $x\geq 0$, $t\geq 0$, has the Laplace transform
\begin{equation}\label{lap}
\tilde{h}_{\alpha}(s,t)=\int_0^{+\infty}e^{-sx}h_{\alpha}(x,t)dx=e^{-t
s^{\alpha}}, s\geq 0.
\end{equation}
For more details on this topic we refer to \cite{Bertoin}.\\

Let $P_t$ be the semigroup associated with \eqref{main}, then, for all $t>0$
\begin{equation}\label{bound0}
\|P_t f\|_{\infty}\leq d\|f\|_{L^1}.
\end{equation}
Indeed, from the fact that
$$\|\mathcal{U}_{\alpha}(\cdot, t)\|_{\infty}\leq 1, \mbox{\; uniformly}$$
and, from \eqref{gre},
$$\|G(\cdot, t)\|_{\infty}\leq d\|\mathcal{U}_{\alpha}(\cdot, t)\|_{\infty},$$
we have that
\begin{equation}\nonumber
\|P_t f\|_{\infty}\leq d\|\mathcal{U}_{\alpha}(\cdot, t)\|_{\infty}\|f\|_{L^1}\leq d\|f\|_{L^1}.
\end{equation}
We present the following result concerning the equation
\eqref{main}.

\begin{te}

The stochastic solution to the $d$-dimensional fractional advection
equation \eqref{main}, subject to the initial condition
$\rho_{\alpha}(\mathbf{x}, 0)= \delta(\mathbf{x})$, is given by the
process
$$\boldsymbol{Z}_t = \sum_{l=1}^d \boldsymbol{\theta}_l\mathfrak{H}^{\alpha}_l(\lambda_l t),\quad t\geq 0,$$
which is a random vector in $\mathbb{R}^d$, where for $l = 1,...,d$,
$\lambda_l = \mathbf{u}\cdot\boldsymbol{\theta}_l$ and
$\mathfrak{H}^{\alpha}_l(t)$, $t>0$, are independent $\alpha$-stable
subordinators.

\end{te}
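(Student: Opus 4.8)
The plan is to identify the law of $\boldsymbol{Z}_t$ by computing its characteristic function and matching it against the Fourier transform of the deterministic solution already obtained in Theorem 3.2. First I would specialize \eqref{main1a}--\eqref{oggi} to the initial datum $f=\delta$, for which $\widehat f\equiv 1$, so that the solution of \eqref{main} with $\rho_\alpha(\mathbf{x},0)=\delta(\mathbf{x})$ has Fourier transform $\widehat{\rho_\alpha}(\mathbf{k},t)=\prod_{l=1}^{d}\exp\!\big(-t(\mathbf{u}\cdot\boldsymbol{\theta}_l)(-i\mathbf{k}\cdot\boldsymbol{\theta}_l)^{\alpha}\big)$. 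The task then reduces to showing that this product equals $\mathbb{E}\,e^{i\mathbf{k}\cdot\boldsymbol{Z}_t}$.

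Next I would compute $\mathbb{E}\,e^{i\mathbf{k}\cdot\boldsymbol{Z}_t}$ directly. Writing $\mathbf{k}\cdot\boldsymbol{Z}_t=\sum_{l=1}^{d}(\mathbf{k}\cdot\boldsymbol{\theta}_l)\,\mathfrak{H}^{\alpha}_l(\lambda_l t)$ and using that the subordinators $\mathfrak{H}^{\alpha}_1,\dots,\mathfrak{H}^{\alpha}_d$ are independent, the characteristic function factorizes as $\prod_{l=1}^{d}\mathbb{E}\exp\!\big(i(\mathbf{k}\cdot\boldsymbol{\theta}_l)\mathfrak{H}^{\alpha}_l(\lambda_l t)\big)$. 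The one-dimensional ingredient is the characteristic function of a stable subordinator: starting from its Laplace transform \eqref{lap}, $\tilde h_\alpha(s,t)=e^{-ts^\alpha}$, I would pass to the imaginary axis $s=-i\gamma$ by analytic continuation, choosing the principal branch and invoking the identity $(-i\gamma)^{\alpha}=|\gamma|^{\alpha}e^{-i\frac{\pi}{2}\alpha\,\mathrm{sign}(\gamma)}$ recorded earlier in the excerpt, to obtain $\mathbb{E}\,e^{i\gamma\mathfrak{H}^{\alpha}_t}=e^{-t(-i\gamma)^{\alpha}}$. Applying this with $\gamma=\mathbf{k}\cdot\boldsymbol{\theta}_l$ and at time $\lambda_l t=(\mathbf{u}\cdot\boldsymbol{\theta}_l)t$ gives each factor as $\exp\!\big(-t(\mathbf{u}\cdot\boldsymbol{\theta}_l)(-i\mathbf{k}\cdot\boldsymbol{\theta}_l)^{\alpha}\big)$.

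Comparing the two products shows $\mathbb{E}\,e^{i\mathbf{k}\cdot\boldsymbol{Z}_t}=\widehat{\rho_\alpha}(\mathbf{k},t)$ for all $\mathbf{k}\in\mathbb{R}^d$, whence, by injectivity of the Fourier transform, the law of $\boldsymbol{Z}_t$ admits the density $\rho_\alpha(\cdot,t)$ solving \eqref{main}. Equivalently, the semigroup $P_t u_0(\mathbf{x})=\mathbb{E}\,u_0(\boldsymbol{Z}_t+\mathbf{x})$ solves the Cauchy problem \eqref{levo} with Fourier multiplier $\Phi(\mathbf{k})=-\sum_{l}(\mathbf{u}\cdot\boldsymbol{\theta}_l)(-i\mathbf{k}\cdot\boldsymbol{\theta}_l)^{\alpha}$ associated with \eqref{main}, so $\boldsymbol{Z}_t$ is the stochastic solution in the sense defined above. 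I expect the only genuinely delicate step to be the transition from the Laplace to the Fourier side for the subordinator — that is, justifying the continuation $s\mapsto -i\gamma$ and fixing the branch of $w\mapsto w^\alpha$ consistently with the paper's convention — together with the tacit sign constraint $\lambda_l=\mathbf{u}\cdot\boldsymbol{\theta}_l\ge 0$ (inherited from $\lambda\in\mathbb{R}_+$ in \eqref{main2}) that is needed for $\mathfrak{H}^{\alpha}_l(\lambda_l t)$ to make sense; the remaining factorization over the independent coordinates is routine.
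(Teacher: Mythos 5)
Your proposal is correct and follows essentially the same route as the paper: both identify the Fourier transform of the fundamental solution, $\prod_{l=1}^{d}\exp\big(-t(\mathbf{u}\cdot\boldsymbol{\theta}_l)(-i\mathbf{k}\cdot\boldsymbol{\theta}_l)^{\alpha}\big)$, with the characteristic function of $\boldsymbol{Z}_t$ by factorizing over the independent subordinators and using the Laplace transform \eqref{lap} continued to the imaginary axis. You are in fact somewhat more careful than the paper, which applies \eqref{lap} at $s=-i\gamma$ without comment; your explicit attention to the branch of $w\mapsto w^{\alpha}$ and to the sign condition $\lambda_l=\mathbf{u}\cdot\boldsymbol{\theta}_l\ge 0$ addresses points the paper leaves tacit.
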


\begin{proof}

We recall that
\begin{equation}\label{c1}
\widehat{\rho_{\alpha}}(\mathbf{k}, t) = \prod_{l=1}^d
exp\big((-t(\mathbf{u}\cdot\boldsymbol{\theta}_l)(-i\mathbf{k}\cdot
\boldsymbol{\theta}_l)^{\alpha}\big),
\end{equation}
is the Fourier transform of the solution to \eqref{main}, with
initial condition $\rho_0(\mathbf{x})=\delta(\mathbf{x})$. By using
\eqref{lap}, formula \eqref{c1} can be written as
\begin{align}\label{carat}
\widehat{\rho_{\alpha}}(\mathbf{k}, t) &= \prod_{l=1}^d \mathbb{E}
exp\big((i\mathbf{k}\cdot
\boldsymbol{\theta}_l)\mathfrak{H}^\alpha_l(\lambda_l t)\big)\\
 \nonumber & =\mathbb{E} exp\left(i\sum_{l=1}^d (\mathbf{k}\cdot
\boldsymbol{\theta}_l)\mathfrak{H}^\alpha_l(\lambda_l t)\right)\\
\nonumber &=\mathbb{E} exp\left(i \mathbf{k}\cdot\sum_{l=1}^d
\boldsymbol{\theta}_l\mathfrak{H}^\alpha_l(\lambda_l t)\right)=
\mathbb{E}e^{i\mathbf{k}\cdot\boldsymbol{Z}_t}.
\end{align}
Hence $\rho_{\alpha}$ is the law of the process $\boldsymbol{Z}_t = \sum_{l=1}^d \boldsymbol{\theta}_l \mathfrak{H}^\alpha_l(\lambda_l t)$,
that is a random vector whose components are given by different linear combination of $d$ independent $\alpha$-stable subordinators.

\end{proof}
We observe that these processes can be studied in the general framework of L\'evy additive processes.\\

We now study the Cauchy problem for the multidimensional fractional advection equation with random initial data.
The theory of random solutions of partial differential equations
has a long history, starting from the pioneeristic works of Kamp\'e de F\'eriet (\cite{kam}).

\begin{te}
Let us consider the Cauchy problem
\begin{equation}\label{ranc}
\begin{cases}
\frac{\partial}{\partial t}\rho_{\alpha} +
\nabla^{\alpha}_{\theta}\cdot (\mathbf{u}\rho_{\alpha})= 0, \quad
\mathbf{x} \in \mathbb{R}^d, \, t>0,\, \alpha\in(0,1),\\
\rho_{\alpha}(\mathbf{x},0)=X(\mathbf{x})\in L^2(\mathbb{R}),
\end{cases}
\end{equation}
where the random field $X(\mathbf{x})$, $\mathbf{x}\in \mathbb{R}_+^d$, is a random initial condition
 $X:\left(\Omega, \mathcal{A}, P\right)\mapsto
\left(\mathbb{R},B(\mathbb{R}),
e^{-x^2/2}/\sqrt{2\pi}\right),$ such
that
\begin{equation}\label{condi1}
X(\mathbf{x})=\sum_{j\in \mathbb{N}}c_j \varphi_j(\mathbf{x}), \quad c_j= \int_{\mathbb{R}^d}X(\mathbf{x})\varphi_j(\mathbf{x})d\mathbf{x},
\end{equation}
where $\{\Phi_j\}$ is dense in $L^2(\mathbb{R})$.
Then, the stochastic solution of \eqref{ranc} is given by
\begin{equation}
\rho_{\alpha}(\mathbf{x},t)=\sum_{j\in\mathbb{N}}c_j P_t \varphi_j(\mathbf{x}),
\end{equation}
where $P_t$ is the transition semigroup associated with \eqref{main}.

\end{te}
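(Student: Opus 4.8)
The plan is to exploit the linearity of the transition semigroup $P_t$ associated with \eqref{main} together with the series representation \eqref{condi1} of the random datum. By Theorem 3.1, for every deterministic initial condition in $L^1(\mathbb{R}^d)$ the operator $P_t$ produces the solution of \eqref{main}; since the randomness in \eqref{ranc} enters only through the coefficients $c_j=c_j(\omega)$, which for each fixed realization $\omega\in\Omega$ play the role of constants, one is led to the identity
\begin{equation}\nonumber
\rho_{\alpha}(\mathbf{x},t)=P_t X(\mathbf{x})=P_t\Big(\sum_{j\in\mathbb{N}}c_j\varphi_j\Big)(\mathbf{x})=\sum_{j\in\mathbb{N}}c_j\,P_t\varphi_j(\mathbf{x}).
\end{equation}
The content of the theorem is thus to legitimate the last exchange of the infinite sum with $P_t$, and then to check that the resulting object solves \eqref{ranc}.

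First I would record a bound for $P_t$ in the topology in which the expansion \eqref{condi1} converges. On one side \eqref{bound0} gives $\|P_tf\|_{\infty}\le d\|f\|_{L^1}$; on the other, Plancherel together with the estimate $|\widehat{\mathcal{U}_{\alpha}}(\gamma,\lambda t)|=e^{-\lambda t|\gamma|^{\alpha}\cos(\pi\alpha/2)}\le 1$ (for $\lambda\ge 0$, $\alpha\in(0,1)$) shows that each factor in $\widehat{G}$ has modulus not exceeding $1$, so $P_t$ is a contraction on $L^2(\mathbb{R}^d)$. Writing $S_N=\sum_{j=1}^{N}c_j\varphi_j$, the hypothesis that $\{\varphi_j\}$ is a dense orthonormal system in $L^2$ and $X\in L^2$ give $S_N\to X$ in $L^2(\mathbb{R}^d)$ for a.e.\ $\omega$; applying the bounded operator $P_t$ yields $P_tS_N\to P_tX$ in $L^2(\mathbb{R}^d)$, while $P_tS_N=\sum_{j=1}^{N}c_j\,P_t\varphi_j$ by finite linearity. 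Hence $\sum_{j\in\mathbb{N}}c_j\,P_t\varphi_j=P_tX$ in $L^2$, for almost every $\omega$.

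It then remains to identify $\rho_{\alpha}=P_tX$ as the stochastic solution of \eqref{ranc}. For every fixed $\omega$, Theorem 3.1 tells us that $P_tX(\cdot,\omega)$ solves \eqref{main} with the deterministic datum $X(\cdot,\omega)$; differentiating the series in $t$ term by term — justified again by the uniform boundedness of $P_t$ and of $\nabla^{\alpha}_{\theta}\cdot(\mathbf{u}\,\cdot)$ on the summands $\varphi_j$ — one gets $\partial_t\rho_{\alpha}+\nabla^{\alpha}_{\theta}\cdot(\mathbf{u}\rho_{\alpha})=\sum_{j}c_j\big(\partial_t+\nabla^{\alpha}_{\theta}\cdot\mathbf{u}\big)P_t\varphi_j=0$, while $\rho_{\alpha}(\mathbf{x},0)=\sum_{j}c_j\varphi_j(\mathbf{x})=X(\mathbf{x})$; measurability of $(\mathbf{x},t,\omega)\mapsto\rho_{\alpha}$ follows from the convergence of the partial sums and the measurability of the $c_j$.

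The step I expect to be the main obstacle is precisely the passage of the infinite sum through $P_t$ in a mode of convergence that is at once strong enough to read the equation and compatible with the randomness: one must reconcile the $L^2(\mathbb{R}^d)$ convergence used above with an almost sure (or $L^2(\Omega)$) statement in $\omega$, and if one insists on the $L^{\infty}$ formulation suggested by \eqref{bound0} one would instead have to control $\sum_{j}|c_j|\,\|\varphi_j\|_{L^1}$, which does not follow from $X\in L^2$ alone and would demand extra decay on the $c_j$ or on the basis. Working throughout in $L^2(\mathbb{R}^d)\otimes L^2(\Omega)$, where Parseval controls both sums simultaneously, is the cleanest way to circumvent this difficulty.
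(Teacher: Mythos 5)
Your argument is correct and reaches the same final object as the paper, $\rho_{\alpha}(\mathbf{x},t)=\sum_{j}c_j P_t\varphi_j(\mathbf{x})$, by the same skeleton (expand $X$ in the orthonormal system, push the solution operator through the sum, verify the equation term by term and check $P_0=Id$). Where you genuinely diverge is in how the interchange of the infinite sum with the solution operator is justified. The paper proceeds probabilistically: it writes the solution as the conditional expectation $\mathbb{E}\left[X(\mathbf{x}+\mathbf{Z}_t)\,\middle|\,\mathcal{F}_X\right]$ using the stochastic representation of Theorem~3.2, pulls the sum out of the conditional expectation, and identifies $\mathbb{E}\varphi_j(\mathbf{x}+\mathbf{Z}_t)=P_t\varphi_j(\mathbf{x})$; the convergence question is then addressed only at the end, via the bound $\|P_tf\|_{\infty}\leq d\|f\|_{L^1}$ and the estimate $\|\rho_{\alpha}(\cdot,t)\|_{\infty}\leq\sum_j|c_j|\,\|\varphi_j\|_{\infty}$. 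You instead observe that the Fourier multiplier of $P_t$ is a product of factors of modulus $e^{-\lambda_l t|\gamma|^{\alpha}\cos(\pi\alpha/2)}\leq 1$, so that $P_t$ is an $L^2(\mathbb{R}^d)$ contraction by Plancherel, and the termwise identity follows from continuity of $P_t$ applied to the partial sums $S_N\to X$. This buys you a clean, quantitative justification of the exchange in exactly the topology in which \eqref{condi1} converges, and your closing remark is well taken: the paper's final $L^{\infty}$ bound requires summability of $|c_j|\,\|\varphi_j\|_{\infty}$, which does not follow from $\sum_j c_j^2<\infty$ alone, so the $L^2$ framework you adopt is actually the more defensible one. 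What the paper's route buys, and what you lose by bypassing it, is the explicit probabilistic meaning of the solution as $\mathbb{E}\left[X(\mathbf{x}+\mathbf{Z}_t)\,\middle|\,\mathcal{F}_X\right]$, i.e.\ the sense in which $\mathbf{Z}_t$ is the \emph{stochastic} solution with random data; if you keep your argument, it would be worth adding one line stating that $P_tX(\mathbf{x})$ coincides with this conditional expectation, so that the theorem's phrase ``stochastic solution'' is actually earned.
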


\begin{proof}
Since $X\in L^2$, then there exists an orthonormal system $\{\varphi_j:j\in \mathbb{N}\}$ such that
\eqref{condi1} holds true in $L^2$. Indeed the first identity in \eqref{condi1} must be understood in
$L^2(dP\times d\mathbf{x})$ sense as follows
\begin{equation}
\lim_{L\rightarrow \infty}\mathbb{E}\left[\int_{\mathbb{R}^d}\left(X(\mathbf{x})-\sum_{j=0}^L c_j \varphi_j(\mathbf{x})\right)^2
d\mathbf{x}\right]=0
\end{equation}
From Theorem 3.2, we know that $\mathbf{Z}_t$ is the stochastic solution to the $d$-dimensional
fractional advection equation \eqref{main}. In view
of these facts we write the solution of \eqref{ranc} as follows
\begin{align}\label{ramo}
\rho_{\alpha}(\mathbf{x},t)&=\mathbb{E}\left[X(\mathbf{x}+\mathbf{Z}_t)|\mathcal{F}_X\right]\\
\nonumber &= \mathbb{E}\left[\sum_{j\in \mathbb{N}} c_j\varphi_j(\mathbf{x}+\mathbf{Z}_t)|\mathcal{F}_X\right]\\
\nonumber &= \sum_{j\in \mathbb{N}} c_j\mathbb{E}\varphi_j(\mathbf{x}+\mathbf{Z}_t),
\end{align}
where $\mathcal{F}_X$ is the $\sigma$-algebra generated by $X$ and we recall that
$$c_j= \int_{\mathbb{R}^d}X(\mathbf{x})\varphi_j(\mathbf{x})d\mathbf{x}.$$
We observe that
$$\mathbb{E}\varphi_j(\mathbf{x}+\mathbf{Z}_t)= P_t \varphi_j(\mathbf{x}),$$
is the solution to the Cauchy problem
\begin{equation}
\begin{cases}\label{ranc1}
\frac{\partial}{\partial t}\rho_{\alpha} +
\nabla^{\alpha}_{\theta}\cdot (\mathbf{u}\rho_{\alpha})= 0, \quad
\mathbf{x} \in \mathbb{R}^d_{+}, \, t>0,\\
\rho_{\alpha}(\mathbf{x},0)=\varphi_j(\mathbf{x}).
\end{cases}
\end{equation}
Therefore, \eqref{ramo} becomes
\begin{equation}
\rho_{\alpha}(\mathbf{x},t)= \sum_{j\in\mathbb{N}}c_j P_t \varphi_j(\mathbf{x}),
\end{equation}
and solves \eqref{ranc} as claimed, being \eqref{ranc1} satisfied term by term.
Also, from the fact that $P_0= Id$, we get that
$$\rho_{\alpha}(\mathbf{x},0)= \sum_{j\in\mathbb{N}}c_j P_0 \varphi_j(\mathbf{x})= \sum_{j\in\mathbb{N}}c_j \varphi_j(\mathbf{x})=X(\mathbf{x}).$$
If $X$ is represented as \eqref{condi1}, then $X$ is square-summable, that is
$$\int_{\mathbb{R}^d}X^2(\mathbf{x})d\mathbf{x}=\sum_{j\in\mathbb{N}}c_j^2< \infty.$$
Therefore, from \eqref{bound0}, we have that
$$\|\rho_{\alpha}(\cdot,t)\|_{\infty}\leq \sum_{j \in \mathbb{N}}|c_j|  \|\varphi_j\|_{\infty}< \infty.$$

\end{proof}

\subsection{Multidimensional fractional advection-dispersion equation}

We follow our approach to study a general fractional
advection-dispersion equation (FADE). We provide a
multidimensional nonlocal formulation of the Fick's law, written as
follows

\begin{equation}\label{liga}
\mathbf{V}(\mathbf{x},t)=
-\nu\nabla_{\theta}^{\beta-1}\rho_{\beta}(\mathbf{x},t), \quad \beta
\in (1,2), \nu\in\mathbb{R^+},
\end{equation}
such that
\begin{equation}
\nabla \cdot\mathbf{V}(\mathbf{x},t)
=-\nu\mathbb{D}_{\theta}^{\beta}\rho_{\beta}(\mathbf{x},t)
\end{equation}
The one-dimensional fractional Fick's law has been at the core
of many recent papers (see for example \cite{Para} and the references therein).
The total flux in the conservation of mass \eqref{fic} is given by the sum of the advective flux  and the dispersive
flux. Hence we obtain
the formulation of the FADE investigated in the next theorem.

\begin{te}

Let us consider the $d$-dimensional fractional advection-dispersion
equation
\begin{equation}\label{fade}
 \frac{\partial}{\partial t}\rho_{\alpha,\beta}+  \nabla^{\alpha}_{\theta}\cdot(\mathbf{u}\rho_{\alpha,\beta}) =
  \mathbb{D}_{\theta}^{\beta}\rho_{\alpha,\beta}, \quad \mathbf{x} \in \mathbb{R}^d, \,
  t>0,
\end{equation}
where $\alpha \in (0,1)$, $\beta \in (1,2)$ and $\mathbf{u}\equiv
(u_1,....., u_n)$ is the velocity field, with $u_i$, $i= 1,...d$,
are constants. The solution to \eqref{fade}, subject to the initial
condition
$$\rho_{\alpha,\beta}(\mathbf{x}, 0) = \delta(\mathbf{x}),$$
is written as
$$\rho_{\alpha,\beta}(\mathbf{x}, t)=\prod_{l=1}^d \mathcal{U}_{\alpha}(\boldsymbol{\theta}_l
\cdot \mathbf{x},
(\mathbf{u}\cdot\boldsymbol{\theta}_l)t)\ast\mathcal{U}_{\beta}(\boldsymbol{\theta}_l\cdot
\mathbf{x},t)\chi_{(\mathbf{x}\cdot \boldsymbol{\theta}_l)\geq 0},$$
where $\ast$ stands for convolution with respect to $\mathbf{x}$,
$\mathcal{U}_{\alpha}$ is the solution to the one-dimensional
fractional advection equation
\begin{equation}
\left( \frac{\partial}{\partial t} +
\lambda\frac{\partial^{\alpha}}{\partial
x^{\alpha}}\right)\mathcal{U}_{\alpha}(x,t) = 0, \quad x \in
\mathbb{R}_{+}, \, t>0 , \lambda\in \mathbb{R}_{+},
\end{equation}
with initial condition $\mathcal{U}_{\alpha}(x,0)=\delta(x)$ and
$\mathcal{U}_{\beta}$ is the solution of the space-fractional
diffusion equation
\begin{equation}\label{dif}
\left( \frac{\partial}{\partial t} -
\frac{\partial^{\beta}}{\partial
x^{\beta}}\right)\mathcal{U}_{\beta}(x,t) = 0, \quad x \in
\mathbb{R}_{+}, \, t>0, \beta\in (1,2).
\end{equation}

\end{te}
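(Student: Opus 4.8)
The plan is to mimic the proof of Theorem 3.1 via the Fourier transform, the only new ingredient being the dispersive term $\mathbb{D}_\theta^\beta$. First I would take the Fourier transform of \eqref{fade}; using \eqref{fou} for the advective term and \eqref{lap3} for the dispersive term, the equation becomes the ODE in $t$
\begin{equation}\nonumber
\frac{\partial}{\partial t}\widehat{\rho_{\alpha,\beta}}(\mathbf{k},t) + \left(\sum_{l=1}^d (\mathbf{u}\cdot\boldsymbol{\theta}_l)(-i\mathbf{k}\cdot\boldsymbol{\theta}_l)^{\alpha}\right)\widehat{\rho_{\alpha,\beta}}(\mathbf{k},t) = \left(\sum_{l=1}^d (-i\mathbf{k}\cdot\boldsymbol{\theta}_l)^{\beta}\right)\widehat{\rho_{\alpha,\beta}}(\mathbf{k},t),
\end{equation}
with $\widehat{\rho_{\alpha,\beta}}(\mathbf{k},0)=1$. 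Integrating in $t$ gives
\begin{equation}\nonumber
\widehat{\rho_{\alpha,\beta}}(\mathbf{k},t)=\prod_{l=1}^d \exp\!\big(-t(\mathbf{u}\cdot\boldsymbol{\theta}_l)(-i\mathbf{k}\cdot\boldsymbol{\theta}_l)^{\alpha}\big)\exp\!\big(t(-i\mathbf{k}\cdot\boldsymbol{\theta}_l)^{\beta}\big).
\end{equation}

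Next I would identify each factor with a one-dimensional transform. As in Theorem 3.1, $\exp(-t(\mathbf{u}\cdot\boldsymbol{\theta}_l)(-i\mathbf{k}\cdot\boldsymbol{\theta}_l)^{\alpha})=\widehat{\mathcal{U}_\alpha}(\gamma,\lambda_l t)|_{\gamma=\mathbf{k}\cdot\boldsymbol{\theta}_l}$ with $\lambda_l=\mathbf{u}\cdot\boldsymbol{\theta}_l$, where $\mathcal{U}_\alpha$ solves \eqref{main2}. Similarly, taking the Fourier transform of \eqref{dif} and using $\widehat{\partial^\beta_x f}(\gamma)=(-i\gamma)^\beta\widehat f(\gamma)$ gives $\widehat{\mathcal{U}_\beta}(\gamma,t)=\exp(t(-i\gamma)^\beta)$, so $\exp(t(-i\mathbf{k}\cdot\boldsymbol{\theta}_l)^\beta)=\widehat{\mathcal{U}_\beta}(\gamma,t)|_{\gamma=\mathbf{k}\cdot\boldsymbol{\theta}_l}$. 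Thus the $l$-th factor of $\widehat{\rho_{\alpha,\beta}}$ is the product $\widehat{\mathcal{U}_\alpha}(\mathbf{k}\cdot\boldsymbol{\theta}_l,\lambda_l t)\,\widehat{\mathcal{U}_\beta}(\mathbf{k}\cdot\boldsymbol{\theta}_l,t)$, which is the Fourier transform (in the one-dimensional variable dual to $\mathbf{k}\cdot\boldsymbol{\theta}_l$) of the convolution $\mathcal{U}_\alpha(\,\cdot\,,\lambda_l t)\ast\mathcal{U}_\beta(\,\cdot\,,t)$.

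Finally I would invert. As observed in the proof of Theorem 3.1, the inverse Fourier transform of a function of the single combination $\mathbf{k}\cdot\boldsymbol{\theta}_l$ produces the corresponding function of $\mathbf{x}\cdot\boldsymbol{\theta}_l$ restricted to the half-space $\{\mathbf{x}\cdot\boldsymbol{\theta}_l\ge 0\}$ (because $\mathcal{U}_\alpha$, $\mathcal{U}_\beta$ are supported on $\mathbb{R}_+$), and since $(\boldsymbol{\theta}_1,\dots,\boldsymbol{\theta}_d)$ is an orthonormal basis the $d$ factors in $\mathbf{k}$ correspond to a genuine $d$-dimensional product (the change of variables $\mathbf{k}\mapsto(\mathbf{k}\cdot\boldsymbol{\theta}_1,\dots,\mathbf{k}\cdot\boldsymbol{\theta}_d)$ is orthogonal, with unit Jacobian). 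This yields
\begin{equation}\nonumber
\rho_{\alpha,\beta}(\mathbf{x},t)=\prod_{l=1}^d \mathcal{U}_\alpha(\boldsymbol{\theta}_l\cdot\mathbf{x},(\mathbf{u}\cdot\boldsymbol{\theta}_l)t)\ast\mathcal{U}_\beta(\boldsymbol{\theta}_l\cdot\mathbf{x},t)\,\chi_{\{\mathbf{x}\cdot\boldsymbol{\theta}_l\ge 0\}},
\end{equation}
as claimed, and the initial condition is recovered from $\widehat{\rho_{\alpha,\beta}}(\mathbf{k},0)=1$.

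The routine parts — writing down the ODE, integrating, matching to $\mathcal{U}_\alpha$ and $\mathcal{U}_\beta$ — are immediate from \eqref{fou}, \eqref{lap3} and the one-dimensional transforms. The one point that deserves care, and which I would treat as the main obstacle, is the rigorous justification of the inversion: that the product of one-dimensional transforms in the variables $\mathbf{k}\cdot\boldsymbol{\theta}_l$ inverts to the stated product of half-space-supported convolutions in $\mathbf{x}$. This rests on the orthogonality of the basis $(\boldsymbol{\theta}_l)$ (so that the map $\mathbf{k}\mapsto(\mathbf{k}\cdot\boldsymbol{\theta}_l)_l$ is a measure-preserving linear isomorphism) together with the support properties of $\mathcal{U}_\alpha$ and $\mathcal{U}_\beta$; it is exactly the same step used in the proof of Theorem 3.1, so I would simply invoke that argument, noting only the extra convolution factor $\mathcal{U}_\beta$.
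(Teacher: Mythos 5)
Your proposal is correct and follows essentially the same route as the paper: Fourier transform of the equation via \eqref{fou} and \eqref{lap3}, integration of the resulting ODE in $t$, identification of each factor with $\widehat{\mathcal{U}_\alpha}$ and $\widehat{\mathcal{U}_\beta}$, and inversion by the convolution theorem exactly as in Theorem 3.1. The only difference is that you spell out the inversion step (orthogonal change of variables and half-space supports) in slightly more detail than the paper, which simply invokes the convolution theorem.
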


\begin{proof}

The proof follows the same arguments of Theorem 3.1.
To begin with, we take the Fourier transform of equation
\eqref{fade}: by using \eqref{fou}, we obtain
\begin{equation}
\left(\frac{\partial}{\partial t} + (\sum_{l=1}^d
(\mathbf{u}\cdot\boldsymbol{\theta}_l)(-i\mathbf{k}\cdot
\boldsymbol{\theta}_l)^{\alpha})\right)
\widehat{\rho_{\alpha,\beta}}(\mathbf{k}, t) = \left(\sum_{l=1}^d
(-i\mathbf{k}\cdot \boldsymbol{\theta}_l)^{\beta}\right)
\widehat{\rho_{\alpha,\beta}}(\mathbf{k}, t),
\end{equation}
and by integration we find
\begin{align}\label{main1}
\widehat{\rho_{\alpha,\beta}}(\mathbf{k}, t) &=
exp\left(-t\sum_{l=1}^d
(\mathbf{u}\cdot\boldsymbol{\theta}_l)(-i\mathbf{k}\cdot
\boldsymbol{\theta}_l)^{\alpha}\right)
exp\left(t\sum_{l=1}^d (-i\mathbf{k}\cdot \boldsymbol{\theta}_l)^{\beta}\right)\\
\nonumber &=\prod_{l=1}^d
exp\left(-t(\mathbf{u}\cdot\boldsymbol{\theta}_l)(-i\mathbf{k}\cdot
\boldsymbol{\theta}_l)^{\alpha}\right)exp\left(t(-i\mathbf{k}\cdot
\boldsymbol{\theta}_l)^{\beta}\right).
\end{align}
On the other hand if we take the Fourier transform of equation
\eqref{dif}, we obtain
\begin{equation}
\left( \frac{\partial}{\partial t} -
(-i\gamma)^{\beta}\right)\widehat{\mathcal{U}_{\beta}}(\gamma,t) =
0, \quad \beta \in (1,2),
\end{equation}
then, integrating, we obtain
\begin{equation}
\widehat{\mathcal{U}_{\beta}}(\gamma,t)= exp( t(-i\gamma)^{\beta}).
\end{equation}
Thus, we can rearrange \eqref{fade} in the following
way
\begin{align}
\widehat{\rho}_{\alpha,\beta}(\mathbf{k},t)&=\prod_{l=1}^d\bigg(
\widehat{\mathcal{U}_{\alpha}}(\gamma_l,\lambda_l
t)\bigg|_{\gamma_l= \mathbf{k}
\cdot\boldsymbol{\theta}_l,\lambda_l=\mathbf{u}\cdot\boldsymbol{\theta}_l}\bigg)\bigg(\widehat{\mathcal{U}_{\beta}}(\gamma_l,t)\bigg|_{\gamma_l=
\mathbf{k}\cdot\boldsymbol{\theta}_l}\bigg)\\
\nonumber &=\prod_{l=1}^d
exp\left(-t(\mathbf{u}\cdot\boldsymbol{\theta}_l)(-i\mathbf{k}\cdot
\boldsymbol{\theta}_l)^{\alpha}\right)exp\left(t(-i\mathbf{k}\cdot
\boldsymbol{\theta}_l)^{\beta}\right).
\end{align}
Finally, from the convolution theorem, we conclude the proof.
\end{proof}
For the reader's convenience, we recall that the explicit form of
the fundamental solution of the Riemann-Liouville space-fractional
equation \eqref{dif} can be found for example in \cite{Main}. It is also
possible to give an explicit form to the solution of \eqref{fade} in
terms of
one-sided stable probability density function.\\
We notice that in \eqref{liga} we have considered two different
order $\alpha \neq \beta$, respectively for the advection and
dispersion term. Indeed, from a physical point of view the two orders
$\alpha$ and $\beta$ can be different, although they are certainly
related. The parameter $\alpha$ was introduced from the fractional
conservation of mass, hence it depends by the geometry of the porous
medium. The parameter $\beta$ takes into account nonlocal effects in
the Fick's law. Both of them are physically related to the
heterogeneity of the porous medium; an explicit relation between
them must be object of further investigations.

\begin{os}
The stochastic solution to \eqref{fade} is given by the sum of a random vector whose components are given by different linear combination of
$d$ independent $\alpha$-stable subordinators ($Z_t$ in Theorem 3.2) and a multivariable $\alpha$-stable random vector with discrete spectral measure. This second term corresponds
to the unique case in which an $\alpha$-stable random vector has independent components (see \cite{Taqqu}).
The proof is a direct consequence of Theorem 2.1 and 3.2.
\end{os}

\section{Fractional power of operators and fractional shift operator}

    In order to highlight the applications of the fractional gradient, we recall some general results about fractional power of operators.
    The final aim is to find an operational rule for a shift operator involving fractional gradients, in analogy  with the exponential shift operator.
    A power $\alpha$ of a closed linear operator $\mathcal{A}$ can be represented by means of the Dunford integral (\cite{Komatsu})
    \begin{equation}\label{AalphaDunf}
    \mathcal{A}^\alpha = \frac{1}{2\pi i} \int_\Gamma d\lambda \, \lambda^\alpha \, (\lambda - \mathcal{A})^{-1}, \quad \Re\{ \alpha \} >0
    \end{equation}
    under the conditions
    \begin{equation*}
    \begin{array}{rl} (i) & \lambda \in \rho(\mathcal{A}) \, (\textrm{the resolvent set of  } \mathcal{A}) \textrm{ for all } \lambda >0;\\
    (ii) & \| \lambda (\lambda I + \mathcal{A})^{-1} \| < M < \infty \textrm{ for all }\lambda >0  \end{array}
    \end{equation*}
    where $\Gamma$ encircles the spectrum $\sigma(\mathcal{A})$ counterclockwise avoiding the negative real axis and $\lambda^\alpha$ takes the principal branch.
    For $\Re\{\alpha\} \in (0,1)$, the integral \eqref{AalphaDunf} can be rewritten in the Bochner sense as follows
    \begin{equation}\label{AalphaBoch}
    \mathcal{A}^\alpha = \frac{\sin \pi \alpha}{\pi} \int_0^\infty d\lambda\, \lambda^{\alpha-1} (\lambda + \mathcal{A})^{-1} \mathcal{A}.
    \end{equation}
    By inserting (Hille-Yosida theorem)
    \begin{equation}\nonumber
    (\lambda + \mathcal{A})^{-1} = \int_0^\infty dt \, e^{-\lambda t} e^{-t \mathcal{A}}
    \end{equation}
    into \eqref{AalphaBoch} we get that
    \begin{align*}
    \int_0^\infty d\lambda\, \lambda^{\alpha-1} (\lambda + \mathcal{A})^{-1} = \left( \int_0^\infty s^{-\alpha} e^{-s} ds \right)
    \left( \int_0^\infty ds\, s^{\alpha -1} e^{-s\mathcal{A}} \right)
    \end{align*}
    where
    \begin{equation*}
    \int_0^\infty s^{-\alpha} e^{-s} ds = \Gamma(1-\alpha), \quad \alpha \in (0,1)
    \end{equation*}
    and
    \begin{equation*}
    \frac{1}{\Gamma(\alpha)} \int_0^\infty ds\, s^{\alpha -1} e^{-s\mathcal{A}} = \mathcal{A}^{\alpha -1}
    \end{equation*}
    which holds only if $0 < \alpha < 1$.
    The representation \eqref{AalphaBoch} can be therefore rewritten as
    \begin{equation*}
    \mathcal{A}^\alpha = \mathcal{A}^{\alpha - 1} \mathcal{A}, \quad \alpha \in (0,1).
    \end{equation*}
    and, for $\alpha \in (0,1)$, we get that
    \begin{equation}
    \mathcal{A}^\alpha = \mathcal{A} \mathcal{A}^{\alpha -1} = \mathcal{A} \left[ \frac{1}{\Gamma(1-\alpha)} \int_0^\infty ds\, s^{-\alpha}
    e^{-s \mathcal{A}} \right].
    \end{equation}
    On the other hand we can write the fractional power of the operator $\mathcal{A}$ as follows
    \begin{equation}
    \mathcal{A}^{\alpha} = \mathcal{A}^n \mathcal{A}^{\alpha-n}, \quad n-1 < \alpha < n, \; n \in \mathbb{N},
    \end{equation}
    and therefore, we can immediately recover the Riemann-Liouville fractional derivative of order
    $\alpha \in (0,1)$ as a fractional power of the ordinary first derivative $\mathcal{A} = \partial_x$
    (see for example \cite{libro}).\\
    We also remark that, given the operator $\mathcal{A}$ as before, the strong solution to the space fractional equation
    $$ \left( \frac{\partial}{\partial t} + \mathcal{A}^{\alpha} \right) u(x,t) = 0$$
    subject to a good initial condition $u(x, 0)=u_0(x)$,
    can be represented as the convolution
    \begin{equation} \label{convAalpha}
    u(x, t) = e^{-t \mathcal{A}^{\alpha}} u_0(x) = \mathbb{E} e^{- \mathfrak{H}^{\alpha}_t \, \mathcal{A}} \, u_0(x),
    \end{equation}
    in the sense that
    $$\lim_{t\rightarrow 0}\Bigg\|\frac{e^{-t\mathcal{A}^{\alpha}}u-u}{t}-\mathcal{A}^{\alpha}u\Bigg\|_{L^p(\mu)}=0,$$
    for some $p\geq 1$, with a Radon measure $\mu$.
    In \eqref{convAalpha}, we recall that
    $\mathfrak{H}^{\alpha}_t$, with $t>0$, is the $\alpha$-stable subordinator and
    \begin{equation}\label{koma}
    \mathbb{E} e^{- \mathfrak{H}^{\alpha}_t \, \mathcal{A}} =  \int_0^\infty ds\, h_{\alpha}(s, t) \, e^{-s \mathcal{A}},
    \end{equation}
    where $h_{\alpha}$ is the density law of the stable subordinator.
    For $\alpha=1$, we obtain the solution
    $$u(x,t) = e^{-t\mathcal{A}}u_0(x),$$
    from the fact that, we formally have that
    $$\lim_{\alpha \to 1}h_\alpha(x, s) = \delta(x-s). $$
    Indeed, for $\alpha \to 1$, we get that $\mathfrak{H}^\alpha_t \stackrel{a.s.}{\longrightarrow} t$ which is the elementary subordinator (\cite{Bertoin}).
    Equation \eqref{convAalpha} appears of interest in relation to operatorial methods in quantum mechanics and, generally
    to solve differential equations.
    Actually, we recall the notion of exponential shift operator.
    It is well known that
    \begin{equation}
    e^{\theta \partial_x}f(x) = f(x+\theta), \theta\in \mathbb{R},
    \end{equation}
    for $f(x)\in C_b(0,+\infty)$, that is the space of continuous bounded functions (\cite{Hille}). This operational rule comes directly
    from the Taylor expansion of the analytic function $f(x)$ near $x$.
    It provides a clear physical meaning to this operator as a generator of translations in quantum mechanics.\\
    In a recent paper, Miskinis (\cite{Miskinis}) discusses the properties of the generalized one-dimensional quantum  operator of the momentum in the framework
    of the fractional quantum mechanics. This is a relevant topic because of the role of the momentum operator as a generator of translation.
    In its analysis he suggested the following definition of the generealized momentum
    \begin{equation}\label{Misk}
    \hat{p}= C\frac{\partial^\alpha}{\partial x^{\alpha}}, \;\alpha \in (0,1),
    \end{equation}
    with $C$ a complex coefficient, such that, if $\alpha = 1$ then we have the classical quantum operator $\hat{p}= -i\hbar\partial_x$.\\
    In the same way, under the previous analysis  we can introduce a fractional shift operator as
    \begin{equation}\label{con1}
    e^{-\theta \partial_x^\alpha}f(x) = \int_0^\infty ds\, h_\alpha(s, \theta) \,e^{-s\partial_x}
    f(x)= \int_0^\infty ds\, h_\alpha(s, \theta) \,
    f(x-s), \quad \theta >0.
    \end{equation}
    This fractional operator does not give a pure translation, it is a convolution of the initial condition with the density law of the stable subordinator,
    stressing again the possible role of this stochastic analysis in the framework of the fractional quantum mechanics. However, in the special case $\alpha = 1$,
    it gives again the classical shift operator. This operational rule has a direct interpretation in relation to the definition of a generalized quantum
    operator, similar to that of \eqref{Misk}. This stochastic view of the generator of translations can be, in our view, a good starting point
    for further investigations. Moreover, we can generalize these considerations to multidimensional fractional operators and give the operational
    solution of a general class of fractional equations as follows

\begin{prop}
    Consider the multidimensional fractional advection equation
    \begin{equation}
    \left(\frac{\partial}{\partial t}+ \sum_{i=1}^d\frac{\partial^{\alpha}}{\partial x_i^{\alpha}}\right)
    \rho_{\alpha}(\mathbf{x},t)=0, \; \alpha \in (0,1),\; \mathbf{x}\in \mathbb{R}^d_+, t>0,
    \end{equation}
    subject to the initial and boundary conditions
    $$\rho_{\alpha}(\mathbf{x}, 0) = \prod_{i=1}^d\rho_{0}(x_i), \qquad \rho_{\alpha}(\boldsymbol{0}, t)=0.$$
    Then its analytic solution is given by
    \begin{equation}
    \rho_{\alpha}(\mathbf{x}, t) = e^{-t\sum_{i=1}^d \partial_{x_i}^{\alpha}}\rho_{\alpha}(\mathbf{x}, 0).
    \end{equation}
\end{prop}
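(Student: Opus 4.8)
The plan is to exploit the product structure shared by the operator and the initial datum. Set $\mathcal{B}:=\sum_{i=1}^d \partial_{x_i}^{\alpha}$, so that the equation reads $(\partial_t+\mathcal{B})\rho_\alpha=0$, an abstract Cauchy problem whose solution with good initial data is $\rho_\alpha(\cdot,t)=e^{-t\mathcal{B}}\rho_\alpha(\cdot,0)$. The point is therefore to identify $e^{-t\mathcal{B}}$ concretely. Since $\partial_{x_i}^{\alpha}$ and $\partial_{x_j}^{\alpha}$ act on distinct coordinates they commute on the relevant domain, and consequently the semigroup generated by $-\mathcal{B}$ factorizes as a composition of one-dimensional fractional shift operators,
\[
e^{-t\mathcal{B}}=e^{-t\sum_{i=1}^d \partial_{x_i}^{\alpha}}=\prod_{i=1}^d e^{-t\partial_{x_i}^{\alpha}}.
\]
First I would justify this factorization: by the Bochner representation \eqref{AalphaBoch}–\eqref{koma}, each $e^{-t\partial_{x_i}^{\alpha}}$ is a bounded (subordinated translation) operator, and the composition of commuting contraction semigroups acting in separate variables is the semigroup with generator $-\mathcal B$; equivalently one checks on the Fourier side that $\widehat{e^{-t\mathcal{B}}g}(\mathbf{k})=\prod_{i}e^{-t(-ik_i)^{\alpha}}\widehat g(\mathbf{k})$, exactly as in the proof of Theorem 3.1.

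Next I would recall, from the one-dimensional theory behind equation \eqref{main2} and the fractional shift operator \eqref{con1}, that
\[
\bigl(e^{-t\partial_{x}^{\alpha}}\rho_0\bigr)(x)=\int_0^{\infty}h_{\alpha}(s,t)\,\rho_0(x-s)\,ds=\mathbb{E}\,\rho_0\bigl(x-\mathfrak{H}^{\alpha}_t\bigr)=:v(x,t)
\]
is precisely the (one-sided) solution of $(\partial_t+\partial_x^{\alpha})v=0$ with $v(x,0)=\rho_0(x)$, of the same nature as the $\mathcal U_\alpha$ appearing in Theorem 3.1. Moreover, extending $\rho_0$ by zero on $(-\infty,0)$ — as is forced by the half-line setting and is consistent with the characteristic functions occurring in \eqref{gre} — one has $v(0,t)=\int_0^{\infty}h_{\alpha}(s,t)\rho_0(-s)\,ds=0$ because $\mathfrak H^{\alpha}_t\ge 0$ almost surely; hence the homogeneous boundary condition at the origin is automatically met by each factor.

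Then I would assemble the pieces: applying the factorization to the product initial condition,
\[
\rho_\alpha(\mathbf{x},t)=e^{-t\mathcal{B}}\Bigl(\prod_{i=1}^d\rho_0(x_i)\Bigr)=\prod_{i=1}^d\bigl(e^{-t\partial_{x_i}^{\alpha}}\rho_0\bigr)(x_i)=\prod_{i=1}^d v(x_i,t),
\]
and I would verify directly that this product solves the Cauchy problem. Differentiating in $t$ and using $\partial_t v(x_i,t)=-\partial_{x_i}^{\alpha}v(x_i,t)$ gives
\[
\partial_t\rho_\alpha=\sum_{i=1}^d\partial_t v(x_i,t)\prod_{j\neq i}v(x_j,t)=-\sum_{i=1}^d\partial_{x_i}^{\alpha}v(x_i,t)\prod_{j\neq i}v(x_j,t)=-\sum_{i=1}^d\partial_{x_i}^{\alpha}\rho_\alpha,
\]
which is the equation; the initial condition follows from $e^{0}=\mathrm{Id}$, and the boundary condition $\rho_\alpha(\boldsymbol 0,t)=\prod_i v(0,t)=0$ from the previous step. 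Finally one reads off $\rho_\alpha(\mathbf{x},t)=e^{-t\sum_{i}\partial_{x_i}^{\alpha}}\rho_\alpha(\mathbf{x},0)$, which is the claim.

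The main obstacle is the rigorous justification of the operational identity $e^{-t\sum_i\partial_{x_i}^{\alpha}}=\prod_i e^{-t\partial_{x_i}^{\alpha}}$ together with the term-by-term time differentiation: this amounts to pinning down the class of ``good'' data $\rho_0$ (sufficient smoothness and decay, with the zero extension to the negative half-line) on which the Weyl fractional derivatives, the subordination integrals \eqref{koma}, and the exponential series all converge and may be interchanged. Once one fixes, say, a weighted $L^1$ or $L^p$ space in which the stable-subordinator semigroup is strongly continuous, the commutation and the differentiation are legitimate and the remaining steps are routine, entirely parallel to the proof of Theorem 3.1.
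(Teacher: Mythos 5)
Your proposal is correct and follows essentially the same route as the paper: factorizing $e^{-t\sum_{i}\partial_{x_i}^{\alpha}}$ into a product of one-dimensional fractional shift operators and evaluating each factor by the subordination formula \eqref{con1}. The extra material you supply (the Fourier-side justification of the factorization, the direct check that the product solves the equation, and the verification of the boundary condition via the zero extension) fills in details the paper's short operational argument leaves implicit, but does not change the underlying method.
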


\begin{proof}
We can write
\begin{align}
\rho_{\alpha}(\mathbf{x},t)&= e^{-t\sum_{i=1}^d\partial_x^{\alpha}}\rho_{0}(\mathbf{x},0)\\
\nonumber &= \prod_{i=1}^d
e^{-t\partial_{x_i}^{\alpha}}\rho_{0}(x_i,0).
\end{align}
Hence, by direct application of \eqref{con1} we have
\begin{align}
\rho_{\alpha}(\mathbf{x},t)&= \prod_{i=1}^d \int_0^\infty ds\, h_\alpha(s, t) \, \rho_{0}(x_i-s)\\
\nonumber &=\int_0^\infty ds\, h_\alpha(s, t) \,
\rho_{0}(\mathbf{x}-s).
\end{align}
Thus, we conclude that
\begin{equation}
\rho_{\alpha}(\mathbf{x}, t)=\int_0^{\infty}ds h_{\alpha}(s,t)\rho_0(\mathbf{x}-s) =
e^{-t\sum_{i=1}^d \partial_{x_i}^{\alpha}}\rho_{\alpha}(\mathbf{x}, 0),
\end{equation}
as claimed.

\end{proof}

Let us recall definition and main properties of the compound
Poisson process. Consider a sequence of
independent $\mathbb{R}^n$-valued random variables $Y_i$, $i\in \mathbb{N}$, with identical law $\nu(\cdot)$.
Let $(N_t)_{t\geq 0}$ be a Poisson
process with intensity $\lambda >0$. The compound Poisson process is
the L\'evy process
\begin{equation}
X_t= \sum_{i=1}^{N(t)}\tau(Y_i),
\end{equation}
with infinitesimal generator (see for example \cite{Jacobs}, pag.131)
\begin{equation}
\mathcal{A}f(x)=\int_{\mathbb{R}^n}\left(f(x+\tau(y))-f(x)\right)\nu(dy).
\end{equation}
We state the following result about the stochastic processes driven by equations involving the fractional gradient \eqref{ngr}.\\

\begin{te}
Let us consider the random vector $(\boldsymbol{Z}_t)_{t\geq 0}$ in
$\mathbb{R}^d$, given by
\begin{equation}\label{comPu}
\boldsymbol{Z}_t= \sum_{j =1}^{d}\boldsymbol{\theta}_j
\mathfrak{H}_j^\alpha(X_t),
\end{equation}
where $\mathfrak{H}^\alpha_j$ are independent $\alpha$-stable
subordinators, with $\alpha \in (0,1)$ and $(X_t)_{t\geq 0}$ is an independent
compound Poisson process $$X_t= \sum_{i=1}^{N(t)}\tau(Y_i),$$ with
$\tau:\mathbb{R}^d\mapsto \mathbb{R}_+$.
The infinitesimal generator
of the process \eqref{comPu} is given by
\begin{equation}
(\mathcal{A}f)(\mathbf{x})= \sum_{j=1}^d\int_{\mathbb{R}^d}\left[(e^{-\tau(\mathbf{y})(\boldsymbol{\theta}_j\cdot\nabla)^{\alpha}}-1)f(\mathbf{x})\right]\nu(d\mathbf{y}).
\end{equation}
Moreover assuming that $\boldsymbol{\theta}_j\equiv \mathbf{e}_j$, $\forall j \in \mathbb{N}$ and
$$f(\mathbf{x}) = \prod_{i = 0}^d g_i(x_i),$$
where $g_i(x_i)$ are analytic functions, we find
\begin{equation}\label{seco}
(\mathcal{A}f)(\mathbf{x})=
\sum_{j=1}^d\int_{\mathbb{R}^d}\left\{\left[\int_0^{+\infty}ds\,
h_{\alpha}(s,\tau(\mathbf{y}))g_j(x_j-s)\right]-g_j(x_j)\right\}\nu(d\mathbf{y}),
\end{equation}
where $h_{\alpha}$ is the density law of a stable subordinator.
\end{te}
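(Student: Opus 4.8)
The plan is to recognize $(\boldsymbol{Z}_t)_{t\ge 0}$ of \eqref{comPu} as an $\mathbb{R}^d$-valued compound Poisson process and to read off its generator through the representation \eqref{af}. Since $(X_t)_{t\ge0}$ is constant between the jump epochs of $(N_t)_{t\ge0}$ and jumps by $\tau(Y_i)$ at those epochs, the time-changed bundle $\boldsymbol{Z}_t=\sum_{j}\boldsymbol{\theta}_j\mathfrak{H}^\alpha_j(X_t)$ is itself piecewise constant and jumps exactly when $(N_t)$ does; at a jump carrying the mark $\mathbf{y}$ its increment is an independent copy of $W_{\tau(\mathbf{y})}$, where $W_s:=\sum_{j=1}^d\boldsymbol{\theta}_j\,\mathfrak{H}^\alpha_j(s)$ for $s\ge 0$. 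By the compound Poisson generator formula recalled just before the statement, everything then reduces to evaluating $\mathbb{E}f(\mathbf{x}+W_{\tau(\mathbf{y})})$, that is, to identifying the semigroup $R_sf(\mathbf{x}):=\mathbb{E}f(\mathbf{x}+W_s)$ of the directional bundle $W$.

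To identify $R_s$ I would combine the independence of the $\mathfrak{H}^\alpha_j$ with the Laplace transform \eqref{lap}, continued to the closed right half-plane; this continuation is legitimate because $(-i\zeta)^\alpha=|\zeta|^\alpha e^{-i\frac{\pi}{2}\alpha\,\mathrm{sign}(\zeta)}$ has $\Re\{(-i\zeta)^\alpha\}=|\zeta|^\alpha\cos\!\big(\tfrac{\pi\alpha}{2}\big)\ge 0$ for $\alpha\in(0,1)$. One obtains
\begin{equation}\nonumber
\mathbb{E}\,e^{i\mathbf{k}\cdot W_s}=\prod_{j=1}^d\mathbb{E}\,e^{i(\mathbf{k}\cdot\boldsymbol{\theta}_j)\mathfrak{H}^\alpha_j(s)}=\exp\Big(-s\sum_{j=1}^d(-i\mathbf{k}\cdot\boldsymbol{\theta}_j)^\alpha\Big),
\end{equation}
so that, comparing with \eqref{fou}, $R_s$ is the Fourier multiplier $R_s=\exp\big(-s\sum_{j=1}^d(\boldsymbol{\theta}_j\cdot\nabla)^\alpha\big)$; in particular $\mathbb{E}f(\mathbf{x}+W_{\tau(\mathbf{y})})=e^{-\tau(\mathbf{y})\sum_{j}(\boldsymbol{\theta}_j\cdot\nabla)^\alpha}f(\mathbf{x})$.

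It remains to assemble the pieces. Conditioning on $X_t$ and on the marks, the L\'evy--Khinchine form of the compound Poisson law gives
\begin{equation}\nonumber
\mathbb{E}\,e^{i\mathbf{k}\cdot\boldsymbol{Z}_t}=\mathbb{E}\,e^{-X_t\Psi(\mathbf{k})}=\exp\Big(t\int_{\mathbb{R}^d}\big(e^{-\tau(\mathbf{y})\Psi(\mathbf{k})}-1\big)\,\nu(d\mathbf{y})\Big),\qquad \Psi(\mathbf{k})=\sum_{j=1}^d(-i\mathbf{k}\cdot\boldsymbol{\theta}_j)^\alpha,
\end{equation}
and differentiating at $t=0$ (as in $\Phi(\mathbf{k})=[\partial_t\mathbb{E}e^{i\mathbf{k}\cdot\boldsymbol{X}_t}]_{t=0}$) identifies the symbol $\Phi$ of $\mathcal{A}$; inverting via \eqref{af} and \eqref{fou} yields the asserted generator $(\mathcal{A}f)(\mathbf{x})=\sum_{j=1}^d\int_{\mathbb{R}^d}\left[(e^{-\tau(\mathbf{y})(\boldsymbol{\theta}_j\cdot\nabla)^\alpha}-1)f(\mathbf{x})\right]\nu(d\mathbf{y})$. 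For the second formula I would set $\boldsymbol{\theta}_j\equiv\mathbf{e}_j$, so that $(\mathbf{e}_j\cdot\nabla)^\alpha=\partial_{x_j}^\alpha$ acts in the $j$-th variable alone and the operators $\partial_{x_j}^\alpha$ pairwise commute; on the separated datum of the statement each $e^{-\tau(\mathbf{y})\partial_{x_j}^\alpha}-1$ sees only the piece $g_j$, and the fractional shift-operator identity \eqref{con1}, $e^{-\theta\partial_x^\alpha}g(x)=\int_0^\infty h_\alpha(s,\theta)g(x-s)\,ds$, applied coordinatewise produces \eqref{seco}.

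The step I expect to be the main obstacle is making the identification $R_s=\exp\big(-s\sum_j(\boldsymbol{\theta}_j\cdot\nabla)^\alpha\big)$ rigorous: this needs the analytic continuation of the subordinator Laplace transform \eqref{lap} to arguments of the form $-i\mathbf{k}\cdot\boldsymbol{\theta}_j$, together with the verification that $(R_s)_{s\ge0}$ is a strongly continuous semigroup with this symbol on a suitable space of test functions, and some care in rewriting $e^{-\tau(\mathbf{y})\sum_j(\boldsymbol{\theta}_j\cdot\nabla)^\alpha}$ as a sum over the directions, the subordinators $\mathfrak{H}^\alpha_j$ sharing the common random clock $X_t$. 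In the second part, the termwise application of \eqref{con1} is precisely where the analyticity of the $g_i$ is used. The remaining manipulations — the compound Poisson generator formula, the conditioning on $X_t$, the differentiation in $t$, and the coordinatewise use of \eqref{con1} — are routine, all the $\nu$-integrals converging since $\nu$ is finite and the $R_s$ are contractions.
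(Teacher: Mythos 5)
Your overall strategy (characteristic function $\to$ Fourier symbol $\to$ inverse transform $\to$ generator, followed by the coordinatewise fractional shift \eqref{con1} on the product datum) is the same as the paper's, and your second part matches the paper's almost verbatim. But there is a genuine gap at the decisive step of the first part. Your conditioning on $X_t$ is carried out correctly and gives
\begin{equation}\nonumber
\mathbb{E}e^{i\mathbf{k}\cdot\boldsymbol{Z}_t}=\mathbb{E}e^{-X_t\Psi(\mathbf{k})}
=\exp\Big(-\lambda t\int_{\mathbb{R}^d}\big(1-e^{-\tau(\mathbf{y})\Psi(\mathbf{k})}\big)\nu(d\mathbf{y})\Big),
\qquad \Psi(\mathbf{k})=\sum_{j=1}^d(-i\mathbf{k}\cdot\boldsymbol{\theta}_j)^{\alpha}
\end{equation}
(you also dropped the Poisson rate $\lambda$, though the paper is equally cavalier about it). The symbol you obtain is therefore $\Phi(\mathbf{k})=\lambda\int\big(e^{-\tau(\mathbf{y})\Psi(\mathbf{k})}-1\big)\nu(d\mathbf{y})$, whose inverse Fourier transform is $\lambda\int\big[\big(e^{-\tau(\mathbf{y})\sum_j(\boldsymbol{\theta}_j\cdot\nabla)^{\alpha}}-1\big)f(\mathbf{x})\big]\nu(d\mathbf{y})$: a single exponential of the \emph{summed} operator. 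This is not the asserted generator $\sum_j\int\big[\big(e^{-\tau(\mathbf{y})(\boldsymbol{\theta}_j\cdot\nabla)^{\alpha}}-1\big)f\big]\nu(d\mathbf{y})$, because $e^{-\tau(A_1+\cdots+A_d)}-1\neq\sum_j\big(e^{-\tau A_j}-1\big)$; the two agree only to first order in $\tau$. The sentence ``inverting via \eqref{af} and \eqref{fou} yields the asserted generator'' is therefore a non sequitur: your own displayed symbol does not invert to the claimed formula.

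The paper arrives at the stated sum by a different manipulation: it writes $\mathbb{E}\exp\big(i\sum_j(\mathbf{k}\cdot\boldsymbol{\theta}_j)\mathfrak{H}^{\alpha}_j(X_t)\big)=\prod_j\mathbb{E}\exp\big(i(\mathbf{k}\cdot\boldsymbol{\theta}_j)\mathfrak{H}^{\alpha}_j(X_t)\big)$ \emph{before} averaging over $X_t$, i.e.\ it treats the $d$ variables $\mathfrak{H}^{\alpha}_j(X_t)$ as independent even though they share the common random clock $X_t$ (they are only conditionally independent given $X_t$), and only then evaluates each factor as a compound Poisson characteristic function. That factorization is exactly the step your conditioning avoids, and it is what produces the sum over $j$ outside the integral. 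You must therefore choose: either adopt the paper's factorization (in which case you should note that it really requires each direction to carry its own independent copy of the compound Poisson clock rather than a common $X_t$), or keep your correct computation and accept that it yields the exponential-of-the-sum operator rather than the stated one. As written, the proposal proves a formula different from the one it claims to prove.
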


\begin{proof}

The characteristic function of the random vector \eqref{comPu} is
\begin{align}
\mathbb{E}e^{i\mathbf{k}\cdot\boldsymbol{Z}_t}&=\mathbb{E}exp\left(i\sum_{j=1}^{d}\mathbf{k}\cdot \boldsymbol{\theta}_j
\mathfrak{H}_j^\alpha(X_t)\right)\\
\nonumber &=\prod_{j=1}^d \mathbb{E}exp\left(i\mathbf{k}\cdot
\boldsymbol{\theta}_j \mathfrak{H}_j^\alpha(X_t)\right)\\
\nonumber &= \prod_{j=1}^d \mathbb{E}exp\left(-X_t(-i\mathbf{k}\cdot\boldsymbol{\theta}_j)^{\alpha} \right)\\
\nonumber &=  \prod_{j=1}^d exp\left(-\lambda
t(1-\mathbb{E}e^{-(-i\mathbf{k}\cdot\boldsymbol{\theta}_j)^{\alpha}\tau(Y)})\right).
\end{align}

Then by differentiation we can find the Fourier multiplier
\begin{align}
\Phi(\mathbf{k})&=\left[\partial_t \mathbb{E}e^{i\mathbf{k}\cdot\boldsymbol{Z}_t}\right]_{t=0}\\
\nonumber &=\lambda\sum_{j=1}^d
\int_{\mathbb{R}^d}\left(e^{-(-i\mathbf{k}\cdot\boldsymbol{\theta}_j)^{\alpha}\tau(\mathbf{y})}-1\right)\nu(d\mathbf{y}),
\end{align}
of the generator $\mathcal{A}$,
where $\nu(\cdot)$ is the law of the jumps of the compound Poisson process.
Finally we have, by inverse Fourier transform
\begin{equation}
(\mathcal{A}f)(\mathbf{x})=
\sum_{j=1}^d\int_{\mathbb{R}^d}\left[(e^{-\tau(\mathbf{y})(\boldsymbol{\theta}_j\cdot
\nabla)^{\alpha}}-1)f(\mathbf{x})\right]\nu(d\mathbf{y}).
\end{equation}
In order to prove \eqref{seco}, we notice that
\begin{equation}
e^{-t\partial_x^{\alpha}}f(x)=\mathbb{E}e^{-\mathfrak{H}_t^{\alpha}\partial_x}f(x),
\end{equation}
where
$$\mathbb{E}e^{-\mathfrak{H}_t^{\alpha}\partial_x}=\int_0^{+\infty}ds \, h_{\alpha}(s,t)e^{-s\partial_x},$$
and $h_{\alpha}$ is the density law of the stable
subordinator.
Recalling that, given an analytic function, the exponential operator
acts as a shift operator, i.e.
$$e^{-t\partial_x}f(x)= f(x-t),$$
we find that
\begin{equation}
e^{-\tau(\mathbf{y})\partial_{x_i}^{\alpha}}f(x_i)=\int_0^{+\infty}ds\,
h_{\alpha}(s,\tau(\mathbf{y}))f(x_i-s)ds.
\end{equation}
Hence, assuming that
$$f(\mathbf{x})=\prod_{k=1}^d g_i(x_i),$$
in the case $\boldsymbol{\theta}_j \equiv \mathbf{e}_j$, $\forall j \in \mathbb{N}$, we conclude that
\begin{equation}
(\mathcal{A}f)(\mathbf{x})=
\sum_{j=1}^d\int_{\mathbb{R}^d}\left\{\left[\int_0^{+\infty}ds\,
h_{\alpha}(s,\tau(\mathbf{y}))g_j(x_j-s)\right]-g_j(x_j)\right\}\nu(d\mathbf{y}).
\end{equation}

\end{proof}

\section{L\'evy-Khinchine formula with fractional gradient}

In this section we discuss some results about Markov processes related to the above definition of fractional gradient.
We present a new version of the L\'evy-Khinchine formula involving fractional operators and we discuss some possible applications.
It is well known that the L\'evy-Khinchine formula provides a representation of characteristic functions of
infinitely divisible distributions. Let us recall that, given a one-dimensional L\'evy process $(X_t)_{t\geq 0}$, we have
\begin{align}
\mathbb{E}e^{ikX_t}&=e^{\Phi(k)t},
\end{align}
with characteristic exponent given by
\begin{equation}
\Phi(k)=ik b-\frac{k^2 c}{2}+\int_{\mathbb{R}}\left(e^{ikx}-1-(ikx)\chi_{\{|x|<1\}}\right)\nu(dx),
\end{equation}
where $b\in \mathbb{R}$ is the drift term, $c\in \mathbb{R}$ is the diffusion term and $\nu(\cdot)$ is a L\'evy measure.

In the following we will consider the case $b = c = 0$. In this case the infinitesimal generator of $(X_t)_{t\geq 0}$, is given by
\begin{equation}\label{lk}
\mathcal{A}f (x)=\frac{1}{(2\pi)}\int_{\mathbb{R}}e^{-ik x}
\Phi(k)\widehat{f}(k)dk =\int_{\mathbb{R}}\left(f(x+y)-f(x)- y \partial_x f(x)\,\chi_{\{|y|<1\}}\right)\nu(dy).
\end{equation}
Hereafter the symbol "$\sim$" stands for equality in law.

\begin{te}

Consider the random vector $(\boldsymbol{Z}_t)_{t\geq 0}$ in
$\mathbb{R}^d$, given by
\begin{equation}\label{comP}
\boldsymbol{Z}_t= \sum_{j
=1}^{N(t)}\mathbf{Y}_j-\sum_{l=1}^d\boldsymbol{\theta}_l(\boldsymbol{\theta}_l\cdot\mathbb{E}\mathbf{Y})^{1/\alpha}
\mathfrak{H}^\alpha_l(\lambda t)\chi_{\mathcal{D}}(\mathbf{Y}),
\end{equation}
where
$$\mathcal{D}=\{\mathbf{Y}\in \mathbb{R}^d: \mathbb{E}(\boldsymbol{\theta}_l\cdot\mathbf{Y})>0, l = 1, \cdots, d\},$$
$\mathfrak{H}^\alpha_l$ are i.i.d $\alpha$-stable
subordinators, with $\alpha \in (0,1)$, and $\mathbf{Y}_j$ are
$d$-dimensional i.i.d. random vectors such that $\mathbf{Y}_j\sim
\mathbf{Y}$, for all $j\in \mathbb{N}$ and
$P(\mathbf{Y} \in A )=\int_A \nu(d\mathbf{y}),$
as before.
Then, the infinitesimal generator of the process \eqref{comP} is
given by
\begin{equation}\label{lk}
\mathcal{L}^{\theta}f(\mathbf{x})=
\int_{\mathbb{R}^d}\left[(f(\mathbf{x}+\mathbf{y})-f(\mathbf{x})-
\mathbf{y}\cdot\boldsymbol{\nabla}_{\theta}^{\alpha}f(\mathbf{x})\chi_{D(\boldsymbol{\theta})}(\mathbf{y})\right]\nu(d\mathbf{y}),
\end{equation}
where $\boldsymbol{\nabla}_{\theta}^{\alpha}$ is the fractional
gradient in the sense of equation \eqref{ngr}, and
$$D(\boldsymbol{\theta})=\bigcap_{l=1}^d\{\mathbf{y}\in\mathbb{R}^d: \boldsymbol{\theta}_l\cdot \mathbf{y}\geq
0\}.$$

\end{te}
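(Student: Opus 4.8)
The plan is to carry out, for the process \eqref{comP}, the same scheme used in the proof for the compound Poisson process \eqref{comPu}: compute the characteristic function $\mathbb{E}e^{i\mathbf{k}\cdot\boldsymbol{Z}_t}$, extract the Fourier multiplier $\Phi(\mathbf{k})=\big[\partial_t\,\mathbb{E}e^{i\mathbf{k}\cdot\boldsymbol{Z}_t}\big]_{t=0}$ of the generator, and invert the Fourier transform through \eqref{af}, \eqref{fou} and the definition \eqref{ngr} of $\nabla^{\alpha}_{\theta}$.

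First I would note that $\mathcal{D}$ is a \emph{deterministic} event (it is defined through $\mathbb{E}\mathbf{Y}$) and that, when it occurs, $\boldsymbol{Z}_t$ is the sum of two independent L\'evy processes run at times proportional to $t$: the $\mathbb{R}^d$-valued compound Poisson process $\sum_{j=1}^{N(t)}\mathbf{Y}_j$ and the fixed linear combination $\sum_{l=1}^d\boldsymbol{\theta}_l(\boldsymbol{\theta}_l\cdot\mathbb{E}\mathbf{Y})^{1/\alpha}\mathfrak{H}^{\alpha}_l(\lambda t)$ of independent stable subordinators; hence $(\boldsymbol{Z}_t)_{t\ge0}$ is a L\'evy process and its characteristic function factorizes. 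The compound Poisson factor is the classical $\exp\!\big(\lambda t\int_{\mathbb{R}^d}(e^{i\mathbf{k}\cdot\mathbf{y}}-1)\,\nu(d\mathbf{y})\big)$. For the subordinator factors I would use \eqref{lap}, analytically continued to $\mathbb{E}e^{i\xi\mathfrak{H}^{\alpha}_l(t)}=e^{-t(-i\xi)^{\alpha}}$; since $\boldsymbol{\theta}_l\cdot\mathbb{E}\mathbf{Y}>0$ on $\mathcal{D}$, the positive scalar $(\boldsymbol{\theta}_l\cdot\mathbb{E}\mathbf{Y})^{1/\alpha}$ can be pulled through the $\alpha$-th power and each factor equals $\exp\!\big(-\lambda t\,(\boldsymbol{\theta}_l\cdot\mathbb{E}\mathbf{Y})(-i\,\mathbf{k}\cdot\boldsymbol{\theta}_l)^{\alpha}\big)$, the branch of $(-i\,\mathbf{k}\cdot\boldsymbol{\theta}_l)^{\alpha}$ being the one in \eqref{fou}. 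Using $\boldsymbol{\theta}_l\cdot\mathbb{E}\mathbf{Y}=\int_{D(\boldsymbol{\theta})}(\boldsymbol{\theta}_l\cdot\mathbf{y})\,\nu(d\mathbf{y})$ and collecting all factors I obtain
\[
\mathbb{E}e^{i\mathbf{k}\cdot\boldsymbol{Z}_t}=\exp\!\left(\lambda t\int_{\mathbb{R}^d}\Big(e^{i\mathbf{k}\cdot\mathbf{y}}-1-\sum_{l=1}^d(\boldsymbol{\theta}_l\cdot\mathbf{y})(-i\,\mathbf{k}\cdot\boldsymbol{\theta}_l)^{\alpha}\,\chi_{D(\boldsymbol{\theta})}(\mathbf{y})\Big)\nu(d\mathbf{y})\right),
\]
which is a L\'evy-Khinchine representation with a fractional compensation term.

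Differentiating at $t=0$ gives the Fourier multiplier $\Phi(\mathbf{k})=\lambda\int_{\mathbb{R}^d}\big(e^{i\mathbf{k}\cdot\mathbf{y}}-1-\sum_{l=1}^d(\boldsymbol{\theta}_l\cdot\mathbf{y})(-i\,\mathbf{k}\cdot\boldsymbol{\theta}_l)^{\alpha}\,\chi_{D(\boldsymbol{\theta})}(\mathbf{y})\big)\nu(d\mathbf{y})$, the constant $\lambda$ being absorbed as in the proof of the compound Poisson theorem. Inverting \eqref{af} term by term under the $\nu$-integral, $e^{i\mathbf{k}\cdot\mathbf{y}}-1$ produces $f(\mathbf{x}+\mathbf{y})-f(\mathbf{x})$, and by \eqref{fou} and \eqref{ngr} the symbol $\sum_l(\boldsymbol{\theta}_l\cdot\mathbf{y})(-i\,\mathbf{k}\cdot\boldsymbol{\theta}_l)^{\alpha}\widehat{f}(\mathbf{k})$ is the Fourier transform of $\sum_l(\boldsymbol{\theta}_l\cdot\mathbf{y})(\boldsymbol{\theta}_l\cdot\nabla)^{\alpha}f(\mathbf{x})=\mathbf{y}\cdot\nabla^{\alpha}_{\theta}f(\mathbf{x})$. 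Reassembling gives \eqref{lk}.

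The factorization, the compound Poisson formula and the Fourier inversion are routine (as for \eqref{comPu}); the substantive point is the algebraic matching that produces the fractional gradient. One has to see that $1/\alpha$ is the unique scaling exponent for which the $\alpha$-th power coming from the subordinator's Laplace exponent returns $\boldsymbol{\theta}_l\cdot\mathbb{E}\mathbf{Y}$, and that $\sum_l\boldsymbol{\theta}_l(\boldsymbol{\theta}_l\cdot\mathbf{v})=\mathbf{v}$ for an orthonormal basis, so that for $\alpha=1$ (when $\mathfrak{H}^{1}_l(\lambda t)=\lambda t$) the construction \eqref{comP} reduces to the compensated compound Poisson process and \eqref{lk} to the ordinary L\'evy-Khinchine generator $\int(f(\mathbf{x}+\mathbf{y})-f(\mathbf{x})-\mathbf{y}\cdot\nabla f(\mathbf{x})\chi_{D(\boldsymbol{\theta})}(\mathbf{y}))\nu(d\mathbf{y})$ --- the consistency check the fractional version must pass. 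Care is also needed with the principal branch in the analytic continuation of the subordinator's characteristic function (so the sign conventions agree with \eqref{fou}), with the exact relation between $\mathcal{D}$, the cone $D(\boldsymbol{\theta})$ and the support of $\nu$, and with specifying a domain for $\mathcal{L}^{\theta}$ --- e.g. $f$ smooth with $\widehat{f}$ decaying and $\mathbf{Y}$ integrable --- on which the interchanges of $\sum_l$, $\int\nu(d\mathbf{y})$, $\partial_t$ and Fourier inversion are valid.
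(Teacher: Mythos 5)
Your proposal follows essentially the same route as the paper: factorize the characteristic function into the compound Poisson factor and the subordinator factors, use the stable scaling to turn $(\boldsymbol{\theta}_l\cdot\mathbb{E}\mathbf{Y})^{1/\alpha}\mathfrak{H}^{\alpha}_l(\lambda t)$ into $\mathfrak{H}^{\alpha}_l((\boldsymbol{\theta}_l\cdot\mathbb{E}\mathbf{Y})\lambda t)$, differentiate at $t=0$ to read off the Fourier multiplier, rewrite $\boldsymbol{\theta}_l\cdot\mathbb{E}\mathbf{Y}$ as a $\nu$-integral against $\chi_{D(\boldsymbol{\theta})}$, and invert via \eqref{af} and \eqref{fou}. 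Your added remarks (that $\mathcal{D}$ is a deterministic event, the $\alpha=1$ consistency check, and the caveat about the relation between $\mathcal{D}$, $D(\boldsymbol{\theta})$ and the support of $\nu$ in the step $\boldsymbol{\theta}_l\cdot\mathbb{E}\mathbf{Y}=\int_{D(\boldsymbol{\theta})}(\boldsymbol{\theta}_l\cdot\mathbf{y})\,\nu(d\mathbf{y})$, which the paper uses silently) are sound clarifications rather than a different argument.
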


\begin{proof}
We consider the characteristic function of the random vector
\eqref{comP}
\begin{equation}\label{equo}
\mathbb{E}e^{i\mathbf{k}\cdot\boldsymbol{Z}_t}=\mathbb{E}exp\left(i\sum_{j=1}^{N(t)}\mathbf{Y}_j\cdot\mathbf{k}\right)\,\mathbb{E}exp\left(i\sum_{l=1}^d
(\boldsymbol{\theta}_l\cdot\mathbb{E}\mathbf{Y})^{1/\alpha}\mathfrak{H}^\alpha_l(\lambda
t)(\mathbf{k}\cdot\boldsymbol{\theta}_l) \chi_{\mathcal{D}}(\mathbf{Y})\right).
\end{equation}
The first term can be written as follows
\begin{equation}
\mathbb{E}exp\left(i\sum_{j=1}^{N(t)}\mathbf{Y}_j\cdot\mathbf{k}\right)
=\mathbb{E}\left(\mathbb{E}e^{i\sum_{j=1}^{N(t)}\mathbf{Y}_j\cdot\mathbf{k}}\right),
\end{equation}
and, from the fact that $\mathbf{Y_j}\sim\mathbf{Y}$, we have
\begin{align}\label{compa}
\mathbb{E}\left(\mathbb{E}e^{i\sum_{j=1}^{n}\mathbf{Y}_j\cdot\mathbf{k}}|N(t)=n\right)&=
\mathbb{E}\left((\mathbb{E}e^{i\mathbf{Y}\cdot\mathbf{k}})^{n}|N(t)=n\right)\\
\nonumber &=
\sum_{n=0}^{\infty}[\mathbb{E}e^{i\mathbf{Y}\cdot\mathbf{k}}]^n
Pr\{N(t)=n\}\\
\nonumber &=\sum_{n=0}^{\infty}[\mathbb{E}e^{i\mathbf{Y}\cdot\mathbf{k}}]^n\frac{(\lambda t)^n}{n!}e^{-\lambda t}\\
\nonumber &= e^{-\lambda
t(1-\mathbb{E}e^{i\boldsymbol{Y}\cdot\mathbf{k}})}.
\end{align}
Regarding the second term in \eqref{equo}, we notice that
\begin{equation}
(\boldsymbol{\theta}_l\cdot\mathbb{E}\mathbf{Y})^{1/\alpha}
\mathfrak{H}^\alpha_l(\lambda
t)\chi_{\mathcal{D}}(\mathbf{Y})\stackrel{d}{=}\mathfrak{H}^\alpha_l((\boldsymbol{\theta}_l\cdot\mathbb{E}\mathbf{Y})\lambda
t)\chi_{\mathcal{D}}(\mathbf{Y}),
\end{equation}
where $\stackrel{d}{=}$ stands for equality in distribution.
From the fact that $\mathfrak{H}^\alpha_l$ are i.i.d $\alpha$-stable
subordinators, we obtain
\begin{align}
\mathbb{E}exp\left(i\sum_{l=1}^d(\boldsymbol{\theta}_l\cdot\mathbb{E}\mathbf{Y})^{1/\alpha}\mathfrak{H}^\alpha_l(\lambda
t)\mathbf{k}\cdot\boldsymbol{\theta}_l \chi_{\mathcal{D}}(\mathbf{Y})\right)&=\prod_{l=1}^d\mathbb{E}
exp\left(i(\boldsymbol{\theta}_l\cdot\mathbb{E}\mathbf{Y})^{1/\alpha}\mathfrak{H}^\alpha_l(\lambda
t)(\mathbf{k}\cdot\boldsymbol{\theta}_l) \chi_{\mathcal{D}}(\mathbf{Y})\right)\\
\nonumber &=\prod_{l=1}^d exp\left(-\lambda
t(-i\mathbf{k}\cdot\boldsymbol{\theta}_l
)^{\alpha}(\boldsymbol{\theta}_l\cdot \mathbb{E}\mathbf{Y}) \chi_{\mathcal{D}}(\mathbf{Y})\right)
\end{align}
Finally, we get
\begin{equation}
\mathbb{E}e^{i\mathbf{k}\cdot\boldsymbol{Z}_t}= exp\left( \lambda t
(\mathbb{E}e^{i\mathbf{k}\cdot\mathbf{Y}}-1-\sum_{l=1}^d(-i\mathbf{k}\cdot\boldsymbol{\theta}_l)^{\alpha}(\boldsymbol{\theta}_l\cdot\mathbb{E}\mathbf{Y})
\chi_{\mathcal{D}}(\mathbf{Y}).
\right),
\end{equation}
where the Fourier multiplier $-\Phi(\mathbf{k})$, of $\mathcal{L}^{\theta}$, is
given by
\begin{align}
\Phi(\mathbf{k})&=\left[\partial_t \mathbb{E}e^{i\mathbf{k}\cdot\boldsymbol{Z}_t}\right]_{t=0}\\
\nonumber &=\lambda\left(\mathbb{E}e^{i\mathbf{k}\cdot\mathbf{Y}}-1-\sum_{l=1}^d(-i\mathbf{k}\cdot\boldsymbol{\theta}_l)^{\alpha}(\boldsymbol{\theta}_l\cdot\mathbb{E}\mathbf{Y})\chi_{\mathcal{D}}(\mathbf{Y})\right)\\
\nonumber &= \lambda\int_{\mathbb{R}^d}\left[e^{i\mathbf{k}\cdot\mathbf{y}}-1-\left(\sum_{l=1}^d(-i\mathbf{k}\cdot\boldsymbol{\theta}_l)^{\alpha}(\boldsymbol{\theta}_l\cdot
\mathbf{y})\right)\chi_{D(\theta)}(\mathbf{y})\right]\nu(d\mathbf{y}),
\end{align}
where we recall that $\nu(\cdot)$ is the law of $Y$.
Then, we can
use equation \eqref{af} and, by inverse Fourier transform, we get
\begin{equation}
\mathcal{L}^{\theta}f(\mathbf{x})=\int_{\mathbb{R}^d}\left[(f(\mathbf{x}+\mathbf{y})-
f(\mathbf{x}))-
\mathbf{y}\cdot\boldsymbol{\nabla}_{\theta}^{\alpha}f(\mathbf{x})\chi_{D(\theta)}(\mathbf{y})\right]\nu(d\mathbf{y}),
\end{equation}
which is the claim.
\end{proof}

\begin{os}
In the case $\boldsymbol{\theta}_l \equiv \mathbf{e}_l$, for all $l$,
$$D(\boldsymbol{\theta})=\bigcap_{l=1}^d\{\mathbf{y}\in\mathbb{R}^d: \boldsymbol{e}_l\cdot \mathbf{y}\geq
0\}\equiv \mathbb{R}^d_{+},$$
and \eqref{lk} becomes
\begin{equation}
\mathcal{L}^{\theta}f(\mathbf{x})=
\int_{\mathbb{R}^d}\left[(f(\mathbf{x}+\mathbf{y})-f(\mathbf{x})-
\sum_{j=1}^d y_j\partial_{x_j}^{\alpha}f(\mathbf{x})\chi_{\mathbb{R}^{+}}(\mathbf{y})\right]\nu(d\mathbf{y}).
\end{equation}

\end{os}

\begin{os}
We observe that in the special case $d = 1$, $\alpha = 1$, the
process \eqref{comP} becomes the compensated Poisson process
\begin{equation}
Z_t= \sum_{j =1}^{N(t)}Y_j-\lambda t\mathbb{E}Y, \quad t>0.
\end{equation}
In this case, the law of $(Z_t)_{t\geq 0}$ is given by
\begin{equation}
P(Z_t\in dy)/dy= \sum_{n=0}^{\infty}f_Y^{*n}(y+\lambda
t\mathbb{E}Y)e^{-\lambda t}\frac{(\lambda t)^n}{n!},
\end{equation}
where $f_Y$ is the law of the jumps $Y_j\sim Y$ and $f^{*n}$ is the $n$-convolution of $f_Y$.
Straightforward calculations lead to the explicit representation of the law
for $\alpha \neq 1$. Indeed for $\alpha \in (0,1)$, we have that
\begin{equation}
P(Z_t\in dy)/dy= \sum_{n=0}^{\infty}\mathbb{E}f_Y^{*n}(y+(\lambda\mathbb{E}Y)^{1/\alpha}\mathfrak{H}^\alpha_t).
\end{equation}
\end{os}

Let us consider the random vector $\mathbf{W}$, whose components are
independent folded Gaussian random variables with variance
$rE_{\beta}$, where $rE_{\beta}$ is the inverse Gamma distribution,
with probability density function given by
$$P\{rE_{\beta}\in ds\}/ds= \frac{1}{\Gamma(\beta)}\left(\frac{s}{r}\right)^{-\beta-1}e^{-\frac{r}{s}}, \quad s\geq 0$$
where $r\geq 0$ is a scale parameter and $\beta>0$ a shape
parameter. We observe that
\begin{align}\label{mis}
P\{\mathbf{W}\in d\mathbf{y}\}/d\mathbf{y}&=
2^d\int_0^{\infty}\frac{e^{-\frac{|\mathbf{y}|^2}{4s}}}{\sqrt{(4\pi
s)^d}}P\{rE_{\beta}\in ds\}\\
\nonumber &= \frac{2^d
r^{\beta}}{\sqrt{(4\pi)^d}\Gamma(\beta)}\int_0^{\infty}s^{-\beta-1-\frac{d}{2}}e^{-s^{-1}(\frac{|\mathbf{y}|^2}{4}+r)}ds\\
\nonumber
&=\frac{\Gamma(\beta+\frac{d}{2})}{\Gamma(\beta)}\frac{2^{2(\beta+d)}
}{\sqrt{(4\pi)^d}}\frac{r^{\beta}}{\left(|\mathbf{y}|^2+4r\right)^{\beta+\frac{d}{2}}}=m_r(|\mathbf{y}|^2).
\end{align}
Then we have that
\begin{align}
\mathbb{E}W_j&=\frac{1}{\Gamma(\beta)}\frac{2
r^{\beta+1}}{\sqrt{4\pi}}\int_0^{\infty}\int_0^{\infty}y\,
s^{-\beta-\frac{3}{2}}e^{-s^{-1}(\frac{\mathbf{y}^2}{4}+r)}ds
dy\\
\nonumber &=\frac{\Gamma(\beta-\frac{1}{2})}{\Gamma(\beta)}\frac{2
r^{\frac{3}{2}}}{\sqrt{\pi}}.
\end{align}
We assume that the random vectors $\mathbf{Y}_j$ appearing in
\eqref{comP} are taken such that
\begin{equation}\label{jump}
\mathbf{Y}_j\sim \epsilon_j \mathbf{W}_j, j\in \mathbb{N},
\end{equation}
where $\epsilon_j$ is the Rademacher random variable, i.e.
$P(\epsilon_j = +1)=p$ and $P(\epsilon_j = -1)=q$ and $\mathbf{W}_j$
are the i.i.d random vectors distributed like $\mathbf{W}$.
It is worth to notice that, in this case, the set $\mathcal{D}$ is given by
$$\mathcal{D}=\{(p-q)\boldsymbol{\theta}_l\cdot\mathbb{E}\mathbf{W})>0, l = 1, \cdots, d\},$$
where $\mathbb{E}\mathbf{W}$ is positive.\\
We are now able to state the following theorem.

\begin{te}
Let us consider the process \eqref{comP} with jumps \eqref{jump}. For $p\neq q$ and $\beta \in(0,1/2)$, we have that
\begin{equation}
 \boldsymbol{Z}(t/r^{\beta})\xrightarrow[r\rightarrow 0]{d}\boldsymbol{Q}(t),
\end{equation}
where $\boldsymbol{Q}(t)$, $t\geq 0$, has generator
\begin{align}\label{scal}
\mathcal{L}_{p,q}^{\theta}f(\mathbf{x})&=
C_d(\beta)\int_{\mathbb{R}^d}\left[(p\, f(\mathbf{x}+\mathbf{y})+q\,
f(\mathbf{x}-\mathbf{y})-f(\mathbf{x})- (p-q)
\mathbf{y}\cdot\boldsymbol{\nabla}_{\theta}^{\alpha}f(\mathbf{x})\chi_{D(\boldsymbol{\theta})}(\mathbf{y})\right]
\frac{d\mathbf{y}}{|\mathbf{y}|^{2\beta+d}}\\
\label{scal1}&=C_d(\beta)p\int_{\mathbb{R}^d}\left[(
f(\mathbf{x}+\mathbf{y})-f(\mathbf{x})-
\mathbf{y}\cdot\boldsymbol{\nabla}_{\theta}^{\alpha}f(\mathbf{x})\chi_{D(\boldsymbol{\theta})}(\mathbf{y})\right]
\frac{d\mathbf{y}}{|\mathbf{y}|^{2\beta+d}}\\
\label{q} &+C_d(\beta) q\int_{\mathbb{R}^d}\left[(
f(\mathbf{x}-\mathbf{y})-f(\mathbf{x})+
\mathbf{y}\cdot\boldsymbol{\nabla}_{\theta}^{\alpha}f(\mathbf{x})\chi_{D(\boldsymbol{\theta})}(\mathbf{y})\right]
\frac{d\mathbf{y}}{|\mathbf{y}|^{2\beta+d}},
\end{align}
with $p,q\geq 0$ such that $p+q=1$.
\end{te}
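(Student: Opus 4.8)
The plan is to establish the convergence at the level of the characteristic functions and then to invoke L\'evy's continuity theorem. Since $(\boldsymbol{Z}_t)_{t\ge 0}$ in \eqref{comP} with jumps \eqref{jump} is a L\'evy process whose infinitesimal generator is identified in Theorem 5.1, a deterministic time change merely rescales its symbol: if $\Phi_r$ denotes the Fourier multiplier of $(\boldsymbol{Z}_t)_{t\ge0}$, then $\boldsymbol{Z}(t/r^{\beta})$ has characteristic function $\exp(t\,r^{-\beta}\Phi_r(\mathbf{k}))$. Thus it suffices to prove that, for every fixed $\mathbf{k}\in\mathbb{R}^d$, the quantity $r^{-\beta}\Phi_r(\mathbf{k})$ converges as $r\to0$ to the Fourier multiplier of the operator $\mathcal{L}_{p,q}^{\theta}$ in \eqref{scal}, the limit being a continuous L\'evy exponent (the Poisson intensity being absorbed into the constant $C_d(\beta)$).

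Concretely, I would first make Theorem 5.1 explicit for the choice \eqref{jump}. Writing $\nu_r$ for the common law of the $\mathbf{Y}_j$, the Rademacher randomization gives $\mathbb{E}\,e^{i\mathbf{k}\cdot\mathbf{Y}}=p\,\mathbb{E}\,e^{i\mathbf{k}\cdot\mathbf{W}}+q\,\mathbb{E}\,e^{-i\mathbf{k}\cdot\mathbf{W}}$ and $\mathbb{E}\mathbf{Y}=(p-q)\,\mathbb{E}\mathbf{W}$, so that Theorem 5.1 yields
\[
\Phi_r(\mathbf{k})=\int_{\mathbb{R}^d}\Big[e^{i\mathbf{k}\cdot\mathbf{y}}-1-\Big({\textstyle\sum_{l=1}^d}(-i\mathbf{k}\cdot\boldsymbol{\theta}_l)^{\alpha}(\boldsymbol{\theta}_l\cdot\mathbf{y})\Big)\chi_{D(\boldsymbol{\theta})}(\mathbf{y})\Big]\,\nu_r(d\mathbf{y}).
\]
Next I would insert the explicit density \eqref{mis}: $\nu_r$ is the law of $\epsilon\mathbf{W}$, with $\mathbf{W}$ of density $m_r(|\mathbf{y}|^2)$ carried by $D(\boldsymbol{\theta})$, and the key scaling identity is
\[
r^{-\beta}\,m_r(|\mathbf{y}|^2)=C_d(\beta)\,\big(|\mathbf{y}|^2+4r\big)^{-(\beta+d/2)}\;\xrightarrow[\;r\to0\;]{}\;C_d(\beta)\,|\mathbf{y}|^{-(2\beta+d)},
\]
together with the $r$-uniform bound $r^{-\beta}m_r(|\mathbf{y}|^2)\le C_d(\beta)\,|\mathbf{y}|^{-(2\beta+d)}$. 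Substituting this into $r^{-\beta}\Phi_r(\mathbf{k})$, splitting the mass of $\nu_r$ into the part carried by $D(\boldsymbol{\theta})$ (weight $p$) and that carried by $-D(\boldsymbol{\theta})$ (weight $q$), and changing variables $\mathbf{y}\mapsto-\mathbf{y}$ in the latter, one is left in the limit with precisely the two integrals appearing in \eqref{scal1} and \eqref{q}. Identifying that expression with the Fourier multiplier of $\mathcal{L}_{p,q}^{\theta}$ is then the same inverse Fourier transform / pseudo-differential identification via \eqref{af} already used in the proof of Theorem 5.1, now carried out against the power-law measure $C_d(\beta)|\mathbf{y}|^{-(2\beta+d)}d\mathbf{y}$ in place of the finite measure $\nu$; passing the limit inside the integral is done by dominated convergence.

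The crux of the argument is exactly this dominated-convergence step near the origin, and this is where the standing hypotheses enter. Since $(-i\mathbf{k}\cdot\boldsymbol{\theta}_l)^{\alpha}\ne i\mathbf{k}\cdot\boldsymbol{\theta}_l$ whenever $\alpha\ne1$, the fractional correction $\sum_l(-i\mathbf{k}\cdot\boldsymbol{\theta}_l)^{\alpha}(\boldsymbol{\theta}_l\cdot\mathbf{y})\chi_{D(\boldsymbol{\theta})}(\mathbf{y})$ does not cancel the linear part of $e^{i\mathbf{k}\cdot\mathbf{y}}-1$, so the integrand is only $O(|\mathbf{y}|)$ as $\mathbf{y}\to0$, and $\int_{|\mathbf{y}|\le1}|\mathbf{y}|\,|\mathbf{y}|^{-(2\beta+d)}\,d\mathbf{y}<\infty$ holds precisely when $2\beta<1$, i.e. $\beta\in(0,1/2)$; with the compensator understood in the customary truncated sense near the origin, the same estimate shows that the limiting L\'evy exponent is finite and continuous and that $\mathcal{L}_{p,q}^{\theta}$ is well defined. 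The condition $p\ne q$ ensures $\mathbb{E}\mathbf{Y}\ne0$, so that the stable-subordinator term in \eqref{comP} does not degenerate and, after the time change, is exactly what reproduces the compensating term $(p-q)\,\mathbf{y}\cdot\boldsymbol{\nabla}_{\theta}^{\alpha}f(\mathbf{x})\,\chi_{D(\boldsymbol{\theta})}(\mathbf{y})$ in \eqref{scal}; the case $p<q$ follows from $p>q$ by the reflection $\mathbf{y}\mapsto-\mathbf{y}$, which interchanges $p$ and $q$ together with the two orthant contributions. Finally, the convergence of the one-dimensional laws obtained above propagates to convergence in distribution of the processes by the standard arguments for convergence of L\'evy processes, and, by construction of its symbol, the limit $\boldsymbol{Q}(t)$ is the process generated by $\mathcal{L}_{p,q}^{\theta}$, which completes the proof.
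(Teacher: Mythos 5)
Your proposal follows essentially the same route as the paper's proof: compute the characteristic exponent $\Phi_r(\mathbf{k})$ explicitly for the jumps \eqref{jump}, use the scaling $r^{-\beta}m_r(|\mathbf{y}|^2)=C_d(\beta)\left(|\mathbf{y}|^2+4r\right)^{-(\beta+d/2)}\to C_d(\beta)|\mathbf{y}|^{-(2\beta+d)}$ to pass to the limit in $r^{-\beta}\Phi_r(\mathbf{k})$, invert the Fourier transform via \eqref{af}, and verify that the $O(|\mathbf{y}|)$ behaviour of the integrand at the origin forces $\beta\in(0,1/2)$. You make the dominated-convergence step and the appeal to L\'evy's continuity theorem more explicit than the paper does, but the decomposition and all key computations coincide with the published argument.
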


\begin{proof}

Under the assumption that $\mathbf{Y}_j\sim \epsilon_j \mathbf{W}_j$
in \eqref{comP}  we have that
\begin{align}
\mathbb{E}\mathbf{Y}=p\mathbb{E}\mathbf{W}-q\mathbb{E}\mathbf{W}=(p-q)\mathbb{E}\mathbf{W}.
\end{align}
Hence, we have
\begin{equation}
\mathbf{Z}_t =\sum_{j
=1}^{N(t)}\epsilon_j\mathbf{W}_j-\sum_{l=1}^d\boldsymbol{\theta}_l((p-q)\boldsymbol{\theta}_l\cdot\mathbb{E}\mathbf{W})^{1/\alpha}
\mathfrak{H}^\alpha_l(\lambda t)\chi_{\mathcal{D}}(\epsilon\mathbf{W}).
\end{equation}
Its characteristic function is given by
\begin{equation}\label{equo1}
\mathbb{E}e^{i\mathbf{k}\cdot\boldsymbol{Z}_t}=\mathbb{E}exp\left(i\sum_{j=1}^{N(t)}\epsilon_j\mathbf{W}_j\cdot\mathbf{k}\right)\,
\mathbb{E}exp\left(i\sum_{l=1}^d
((p-q)\boldsymbol{\theta}_l\cdot\mathbb{E}\mathbf{W})^{1/\alpha}\mathfrak{H}^\alpha_l(\lambda
t)\mathbf{k}\cdot\boldsymbol{\theta}_l\, \chi_{\mathcal{D}}(\epsilon\mathbf{W})\right).
\end{equation}
The first operand in \eqref{equo1} can be written as follows
\begin{align}
\mathbb{E}exp\left(i\sum_{j=1}^{N(t)}\epsilon_j\mathbf{W}_j\cdot\mathbf{k}\right)&=\mathbb{E}\left(\mathbb{E}e^{i\sum_{j=1}^{n}\epsilon_j\mathbf{Y}_j\cdot\mathbf{k}}|N(t)=n\right)=
e^{-\lambda
t(1-\mathbb{E}e^{i\epsilon\boldsymbol{Y}\cdot\mathbf{k}})}\\
\nonumber &= e^{-\lambda
t((p+q)-p\mathbb{E}e^{i\boldsymbol{Y}\cdot\mathbf{k}}-q\mathbb{E}e^{-i\boldsymbol{Y}\cdot\mathbf{k}})}.
\end{align}
The second term in \eqref{equo1}, being $\mathfrak{H}^\alpha_l$
i.i.d $\alpha$-stable subordinators, is given by
\begin{align}
&\mathbb{E}exp\left(i\sum_{l=1}^d((p-q)\boldsymbol{\theta}_l\cdot\mathbb{E}\mathbf{W})^{1/\alpha}\mathfrak{H}^\alpha_l(\lambda
t)(\mathbf{k}\cdot\boldsymbol{\theta}_l)\,\chi_{\mathcal{D}}(\epsilon \mathbf{W})\right)\\
\nonumber &=\prod_{l=1}^d\mathbb{E}exp\left(i((p-q)\boldsymbol{\theta}_l\cdot\mathbb{E}\mathbf{W})^{1/\alpha}\mathfrak{H}^\alpha_l(\lambda
t)\mathbf{k}\cdot\boldsymbol{\theta}_l\,\chi_{\mathcal{D}}(\epsilon\mathbf{W})\right)\\
\nonumber &=\prod_{l=1}^d exp\left(-\lambda
t(-i\mathbf{k}\cdot\boldsymbol{\theta}_l
)^{\alpha}((p-q)\boldsymbol{\theta}_l\cdot
\mathbb{E}\mathbf{W})\chi_{\mathcal{D}}(\epsilon \mathbf{W})\right).
\end{align}
Finally, we have that
\begin{align}
\mathbb{E}e^{i\mathbf{k}\cdot\boldsymbol{Z}_t}&= exp\left( \lambda t
(p\mathbb{E}e^{i\mathbf{k}\cdot\mathbf{Y}}+q
\mathbb{E}e^{-i\mathbf{k}\cdot\mathbf{Y}}-1)-\sum_{l=1}^d(-i\mathbf{k}\cdot\boldsymbol{\theta}_l)^{\alpha}((p-q)\boldsymbol{\theta}_l\cdot\mathbb{E}\mathbf{W})
\chi_{\mathcal{D}}(\epsilon \mathbf{W})\right)\\
\nonumber &=e^{t\Phi_r(\mathbf{k})},
\end{align}
where
\begin{equation}
\Phi_r(\mathbf{k})=\lambda\int_{\mathbb{R}^d}[pe^{i\mathbf{k}\cdot\mathbf{y}}+qe^{-i\mathbf{k}\cdot\mathbf{y}}-1-(p-q)\sum_{l=1}^d(-i\mathbf{k}\cdot\boldsymbol{\theta}_l)^{\alpha}(\boldsymbol{\theta}_l\cdot
\mathbf{y})\chi_{D(\theta)}(\mathbf{y})]m_r(|\mathbf{y}|^2),
\end{equation}
with $m_r(|\mathbf{y}|^2)$ given by equation \eqref{mis}.\\

We now
consider the process $\boldsymbol{Z}(t/r^{\beta})$, whose
characteristic function is given by
\begin{equation}
\mathbb{E}e^{i\mathbf{k}\cdot\boldsymbol{Z}(t/r^{\beta})}= exp\left(
\frac{t}{r^{\beta}}\Phi_r(\mathbf{k})\right).
\end{equation}
Then, we get the Fourier symbol

\begin{align}
&\left[\partial_t \mathbb{E}e^{i\mathbf{k}\cdot\boldsymbol{Z}(t/r^{\beta})}\right]_{t=0}=\frac{1}{r^{\beta}}\Phi_r(\mathbf{k}) \\
\nonumber
&=\frac{\lambda}{r^{\beta}}\left(p\mathbb{E}e^{i\mathbf{k}\cdot\mathbf{Y}}+q\mathbb{E}e^{-i\mathbf{k}\cdot\mathbf{Y}}-1-\sum_{l=1}^d(-i\mathbf{k}\cdot\boldsymbol{\theta}_l)^{\alpha}
((p-q)\boldsymbol{\theta}_l\cdot\mathbb{E}\mathbf{Y}\chi_{\mathcal{D}}(\epsilon \mathbf{W})\right)\\
\nonumber &=C_d(\beta)\lambda\int_{\mathbb{R}^d}\left[p
e^{i\mathbf{k}\cdot\mathbf{y}}+q
e^{-i\mathbf{k}\cdot\mathbf{y}}-1-(p-q)\sum_{l=1}^d(-i\mathbf{k}\cdot\boldsymbol{\theta}_l)^{\alpha}(\boldsymbol{\theta}_l\cdot
\mathbf{y})\chi_{D(\theta)}(\mathbf{y})\right]\frac{d\mathbf{y}}{\left(|\mathbf{y}|^2+4r\right)^{\beta+\frac{d}{2}}},
\end{align}
where
$$C_d(\beta)=\frac{\Gamma(\beta+\frac{d}{2})}{\Gamma(\beta)}\frac{2^{2(\beta+d)}
}{\sqrt{(4\pi)^d}}.$$
 This implies that the process
$\boldsymbol{Q}(t)$, obtained from
$$\boldsymbol{Z}(t/r^{\beta})\xrightarrow[r\rightarrow
0]{d}\boldsymbol{Q}(t),$$
has a generator with Fourier multiplier
\begin{equation}\label{la}
\frac{1}{r^{\beta}}\Phi_r(\mathbf{k}) \xrightarrow{r\rightarrow
0}\Phi(\mathbf{k}),
\end{equation}
where
\begin{equation}
\Phi(\mathbf{k})= C_d(\beta)\lambda\int_{\mathbb{R}^d}\left[pe^{i\mathbf{k}\cdot\mathbf{y}}+qe^{-i\mathbf{k}\cdot\mathbf{y}}-1-(p-q)\sum_{l=1}^d(-i\mathbf{k}\cdot\boldsymbol{\theta}_l)^{\alpha}(\boldsymbol{\theta}_l\cdot
\mathbf{y})\chi_{D(\theta)}(\mathbf{y})\right]\frac{d\mathbf{y}}{|\mathbf{y}|^{2\beta+ d}}.
\end{equation}

We conclude that the generator of the process $\boldsymbol{Q}(t)$
is given by the inverse Fourier transform of \eqref{la}, i.e.
\begin{align}\label{inte}
&\mathcal{L}_{p,q}^{\theta}f(\mathbf{x})=
C_d(\beta)\int_{\mathbb{R}^d}\left[(p\, f(\mathbf{x}+\mathbf{y})+q\,
f(\mathbf{x}-\mathbf{y})-f(\mathbf{x})- (p-q) \mathbf{y}\cdot
\boldsymbol{\nabla}_{\theta}^{\alpha}f(\mathbf{x})\chi_{D(\boldsymbol{\theta})}(\mathbf{y})\right]\frac{1}{|\mathbf{y}|^{2\beta+d}},
\end{align}
as claimed.\\

We now study the convergence of the integral
\eqref{inte}. By taking the multidimensional MacLaurin expansion of
the integrand up to the second order term, we have
\begin{align}\label{Mac}
&p\, f(\mathbf{x}+\mathbf{y})+q\,
f(\mathbf{x}-\mathbf{y})-(p+q)f(\mathbf{x})- (p-q) \mathbf{y}\cdot
\boldsymbol{\nabla}_{\theta}^{\alpha}f(\mathbf{x})\chi_{D(\boldsymbol{\theta})}(\mathbf{y})\\
&\nonumber \approx (p-q)\left[\mathbf{y}\cdot\nabla f(\mathbf{x})-\mathbf{y}\cdot\nabla^{\alpha}_{\theta} f(\mathbf{x})
\chi_{D(\boldsymbol{\theta})}(\mathbf{y})\right]+ |\mathbf{y}|^2\Delta f(\mathbf{x}),
\end{align}
hence we obtain
\begin{align}
&\frac{|p\, f(\mathbf{x}+\mathbf{y})+q\,
f(\mathbf{x}-\mathbf{y})-(p+q)f(\mathbf{x})- (p-q) \mathbf{y}\cdot
\boldsymbol{\nabla}_{\theta}^{\alpha}f(\mathbf{x})\chi_{D(\boldsymbol{\theta})}(\mathbf{y})|}{|\mathbf{y}|^{2\beta+d}}\\
\nonumber &\leq \frac{|P(|\mathbf{y}|)|\|D^{2,\alpha}_{p,q}f(\mathbf{x})\|_{\infty}}{|\mathbf{y}|^{2\beta+d}},
\end{align}
where $P(z)$ is a second order polynomial in the variable
$z=|\mathbf{y}|$, arising from the MacLaurin expansion \eqref{Mac}.
We observe that
\begin{equation}
|\nabla^{\beta}_{\theta}f|\leq
\int_{\mathbb{R}^d} |\widehat{\nabla^{\beta}_{\theta}f}(\mathbf{x})|d
\mathbf{x},
\end{equation}
and therefore, by definition (see \eqref{fou}),
$$|\nabla^{\beta}_{\theta}f|<+\infty.$$
Due to the first order term appearing in $|P(|\mathbf{y}|)|$, we
have that
\begin{equation}
\frac{|P(|\mathbf{y}|)|\|D^{2,\alpha}_{p,q}f\|_{\infty}}{|\mathbf{y}|^{2\beta+d}}\leq
\frac{\|D^{2,\alpha}_{p,q}f\|_{\infty}}{|\mathbf{y}|^{2\beta+d-1}},
\end{equation}
which implies that \eqref{inte} converges for $\beta \in (0,1/2)$.
The same reasoning applies for the convergence of \eqref{scal1} and \eqref{q}.

\end{proof}

We notice that, considering jumps \eqref{jump}, the second term in \eqref{comP} reduces to a sum of orthonormal vectors, whose
components are given by independent stable subordinators, that is
\begin{equation}
\sum_{l=1}^d\boldsymbol{\theta}_l((p-q)\boldsymbol{\theta}_l\cdot\mathbb{E}\mathbf{W})^{1/\alpha}
\mathfrak{H}^\alpha_l(\lambda t)\chi_{\mathcal{D}}(\epsilon \mathbf{W})=
r^{\frac{3}{2\alpha}}\sum_{l=1}^d\boldsymbol{\theta}_l\, C_l \mathfrak{H}^\alpha_l(\lambda t)\chi_{\mathcal{D}}(\epsilon\mathbf{W}),
\end{equation}
where
$$C_l = \left((p-q)\frac{2\Gamma(\beta-\frac{1}{2})}{\sqrt{\pi}\Gamma(\beta)}\sum_{i=1}^d\theta_{li}\right)^{1/\alpha}.$$

Moreover, we observe that, by considering zero-mean jumps in \eqref{comP}, we obtain that (see for
example \cite{Mirko1})
\begin{equation}
\boldsymbol{Z}_{t/r^{\beta}}=\sum_{j
=1}^{N(t/r^{\beta})}\mathbf{Y}_j\xrightarrow{d}\mathbf{S}^{2\beta}_t, \quad \mbox{as $r\rightarrow 0$,}
\end{equation}
where $(\mathbf{S}_t)_{t\geq 0}$ is an isotropic vector of stable processes.

\begin{os}
We recall that the fractional Laplacian is defined as follows
\begin{align}
(-\Delta)^{\alpha}f(\mathbf{x})&= p.v. \,C_{d}(\alpha)\int_{\mathbb{R}^d}\frac{f(\mathbf{x})-f(\mathbf{y})}{|\mathbf{x}-\mathbf{y}|^{2\alpha+d}}d\mathbf{y}\\
\nonumber & = \frac{C_d(\alpha)}{2}\int_{\mathbb{R}^d}\frac{f(\mathbf{x}+\mathbf{y})+f(\mathbf{x}-\mathbf{y})-2f(\mathbf{x})}{|\mathbf{y}|^{2\alpha+d}}d\mathbf{y},
\end{align}
where $\alpha \in (0,1)$ and $"p.v.''$ stands for "principle value''. Also, the fractional Laplacian is commonly defined in terms of its Fourier transform, i.e.
\begin{equation}\label{lapf}
-(-\Delta)^{\alpha}f(\mathbf{x})=\frac{1}{(2\pi)^d}\int_{\mathbb{R}^d}e^{-i\mathbf{k}\cdot
\mathbf{x}}|\mathbf{k}|^{2\alpha}\widehat{f}(\mathbf{k})d\mathbf{k},
\end{equation}
with domain given by the Sobolev space of $L^2$ functions for which \eqref{lapf} converges.\\
Formula \eqref{scal}, for $p=q=1/2$, takes the form
\begin{equation}\label{plap}
\mathcal{L}_{1/2,1/2}^{\theta}f(\mathbf{x})=
\frac{C_d(\beta)}{2}\int_{\mathbb{R}^d}\left[(\, f(\mathbf{x}+\mathbf{y})+\,
f(\mathbf{x}-\mathbf{y})-
2f(\mathbf{x})\right]\frac{d\mathbf{y}}{|\mathbf{y}|^{2\beta+d}}=-(-\Delta)^{\beta}f(\mathbf{x}),
\end{equation}
which is independent from the direction $\boldsymbol{\theta}$.
We observe that \eqref{plap} converges for $\beta \in (0,1)$.
This comes from the fact that the first order term in $P(|\mathbf{y}|)$ disappears and therefore
\begin{equation}
\frac{|P(|\mathbf{y}|)|\|D^{2,\alpha}_{p,q}f\|_{\infty}}{|\mathbf{y}|^{2\beta+d}}\leq
\frac{\|D^{2,\alpha}_{p,q}f\|_{\infty}}{|\mathbf{y}|^{2\beta+d-2}}.
\end{equation}
Moreover, by a simple change of variable in \eqref{q}, we recover \eqref{scal1}.
Then, we get that
\begin{align}\label{new}
\mathcal{L}_{1/2,1/2}^{\theta}f(\mathbf{x})&=\frac{C_d(\beta)}{2}\int_{\mathbb{R}^d}\left[(
f(\mathbf{x}+\mathbf{y})-f(\mathbf{x})-
\mathbf{y}\cdot\boldsymbol{\nabla}_{\theta}^{\alpha}f(\mathbf{x})\chi_{D(\boldsymbol{\theta})}(\mathbf{y})\right]
\frac{d\mathbf{y}}{|\mathbf{y}|^{2\beta+d}}\\
\nonumber &+ \frac{C_d(\beta)}{2}\int_{\mathbb{R}^d}\left[( f(\mathbf{x}+\mathbf{y})-f(\mathbf{x})-
\mathbf{y}\cdot\boldsymbol{\nabla}_{\theta}^{\alpha}f(\mathbf{x})\chi_{D(\boldsymbol{\theta})}(-\mathbf{y})\right]\frac{d\mathbf{y}}{|\mathbf{y}|^{2\beta+d}}\\
\nonumber &= C_d(\beta)\int_{\mathbb{R}^d}\left[(
f(\mathbf{x}+\mathbf{y})-f(\mathbf{x})-
\frac{\mathbf{y}}{2}\cdot\boldsymbol{\nabla}_{\theta}^{\alpha}f(\mathbf{x})\right]
\frac{d\mathbf{y}}{|\mathbf{y}|^{2\beta+d}},
\end{align}
being
$$\mathbb{R}^d \equiv \{\mathbf{y}\in \mathbb{R}^d: \boldsymbol{\theta}\cdot\mathbf{y}\geq 0\} \cup
\{\mathbf{y}\in \mathbb{R}^d: \boldsymbol{\theta}\cdot\mathbf{y}< 0\}.$$
We also remark that, in this case, equation \eqref{new} gives a new representation of the fractional Laplacian operator involving fractional gradients.
On the other hand, much care must be done about the convergence. As discussed in the previous theorem, the representation \eqref{new} is convergent
for $\beta\in(0,1/2)$.
\end{os}

\begin{os}
We notice that formula \eqref{scal} includes as special cases,
completely positively or negatively skewed operators. Indeed, we have
the specular cases
\begin{equation}
\begin{cases}
&\mathcal{L}_{1,0}^{\theta}f(\mathbf{x})=
C_d(\beta)\int_{\mathbb{R}^d}\left[(\,
f(\mathbf{x}+\mathbf{y})-f(\mathbf{x})- \mathbf{y}\cdot
\boldsymbol{\nabla}_{\theta}^{\alpha}f(\mathbf{x})\chi_{D(\boldsymbol{\theta})}(\mathbf{y})\right]\frac{1}{|\mathbf{y}|^{2\beta+d}},\\
&\mathcal{L}_{0,1}^{\theta}f(\mathbf{x})=
C_d(\beta)\int_{\mathbb{R}^d}\left[(f(\mathbf{x}-\mathbf{y})-f(\mathbf{x})+
\mathbf{y}\cdot
\boldsymbol{\nabla}_{\theta}^{\alpha}f(\mathbf{x})\chi_{D(\boldsymbol{\theta})}(\mathbf{y})\right]\frac{1}{|\mathbf{y}|^{2\beta+d}}.
\end{cases}
\end{equation}
The first operator is the infinitesimal generator governing processes with only positive jumps, the second one with purely negative jumps.
\end{os}

\begin{os}
It is well known that the generator of the subordinate process
$(\mathbf{X}_{\mathfrak{H}^\alpha_t})_{t>0}$ is given by
\begin{equation}
-(-\mathcal{L})^{\alpha}f(\mathbf{x})=\frac{\alpha}{\Gamma(1-\alpha)}\int_0^{\infty}(P_s
f(\mathbf{x})-f(\mathbf{x}))\frac{ds}{s^{\alpha+1}}
\end{equation}
where $P_s = e^{s\mathcal{L}}$ is the Feller semigroup of the L\'evy process
$(\mathbf{X}_s)_{s>0}$ (see for example \cite{libre}).
\end{os}

\section{Frobenius-Perron operator and fractional equations}

In this section we recall some results about
transport equations involving Frobenius-Perron operator. Then we
show some applications of this approach in the framework of
differential equations involving fractional operators. In particular, we
consider the transport equation
\begin{equation}\label{fro}
\frac{\partial u}{\partial t}= Au- \lambda(I-K)u,
\end{equation}
where
\begin{equation}
Au = -\sum_{k=1}^n\frac{\partial}{\partial x_k}(a(\mathbf{x})u),
\end{equation}
and $K$ is the Frobenius-Perron operator associated with the map $T: x\mapsto x-\tau(x)$ (see for example \cite{Traple}). The
stochastic solution, say $(\mathbf{X}_t)_{t\geq 0}$, to \eqref{fro} is the solution to the stochastic differential equation
$$d\mathbf{X}_t= a(\mathbf{X}_t)dt+\tau(\mathbf{X}_t)d\mathbf{N}_t,$$
where $(\mathbf{N}_t)_{t\geq 0}$ is the Poisson process such that
\begin{equation}
d\mathbf{N}_t=
\begin{cases}
1, \quad \mbox{Poisson arrival at time $t$},\\
0, \quad \mbox{otherwise}.
\end{cases}
\end{equation}

We notice that, if $a(\mathbf{x})=0$,
then $K$ is the backward operator $B$ and $u(k,t)= p_k(t)$, $k\in \mathbb{N}$, $t>0$, becomes the law of the
homogeneous Poisson process. Indeed formula \eqref{fro} takes the form
\begin{align}
\partial_t p_k(t)&=-\lambda(I-B)p_k(t)\\
\nonumber &=- \lambda (p_k(t)-p_{k-1}(t)).
\end{align}
On the other hand, as already pointed out before, the compound Poisson process
$$Z_t = \sum_{j=1}^{N(t)}Y_j,$$
has a generator written as
\begin{equation}
\mathcal{A}f (x)=\int_{\mathbb{R}}\left(f(x-y)-f(x)\right)P(Y\in
dy),
\end{equation}
where the jump $\tau$ equals $Y$ with law $P(Y\in dy)/dy$.

We can now state the following

\begin{te}

Let us consider the process
\begin{equation}\label{per5}
Z_t = \sum_{j=1}^{N(t)}Y_j,
\end{equation}
with
\begin{equation}
Y_j \stackrel{d}{=} Y(1), \quad \forall j \in \mathbb{N},
\end{equation}
where $(Y_t)_{t\geq 0}$ is the stochastic process driven by
\begin{equation}
\frac{\partial f}{\partial t}=\mathcal{G}f.
\end{equation}
Then \eqref{per5} is the stochastic solution to the equation

\begin{equation}\label{per}
\frac{\partial u}{\partial t}= -\lambda(I-e^{\mathcal{G}})u, \quad x\in \mathbb{R}, t>0.
\end{equation}

\end{te}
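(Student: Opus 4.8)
The plan is to show that $(Z_t)_{t\ge0}$ is a L\'evy process whose transition semigroup is exactly $e^{-\lambda t(I-e^{\mathcal G})}$, and then to invoke the correspondence between L\'evy generators and Cauchy problems recalled around \eqref{levo}: a L\'evy process with generator $\mathcal B$ is the stochastic solution of $\partial_t u=\mathcal Bu$. So it suffices to compute the semigroup of $(Z_t)$ and read off its generator.

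First I would identify the transition operator of $(Y_t)_{t\ge0}$. Since $(Y_t)$ is driven by $\partial_t f=\mathcal Gf$, its transition semigroup is $Q_tf(\mathbf x)=\mathbb E f(Y_t+\mathbf x)=e^{t\mathcal G}f(\mathbf x)$, and in particular $\mathbb E f(Y(1)+\mathbf x)=(e^{\mathcal G}f)(\mathbf x)$. Because $(Y_t)$ is a L\'evy process, the sum of $n$ independent copies of $Y(1)$ has the law of $Y(n)$; hence for the i.i.d.\ jumps $Y_j\stackrel{d}{=}Y(1)$ one gets $\mathbb E f\big(\mathbf x+\sum_{j=1}^n Y_j\big)=(e^{n\mathcal G}f)(\mathbf x)=(e^{\mathcal G})^nf(\mathbf x)$ by the semigroup property.

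Next I would condition on the driving Poisson process $N(t)$, which is independent of the $Y_j$ and has intensity $\lambda$:
\begin{align*}
P^Z_tf(\mathbf x)&=\mathbb E f(Z_t+\mathbf x)=\sum_{n=0}^\infty e^{-\lambda t}\frac{(\lambda t)^n}{n!}\,\mathbb E f\Big(\mathbf x+\sum_{j=1}^nY_j\Big)\\
&=e^{-\lambda t}\sum_{n=0}^\infty\frac{(\lambda t)^n}{n!}(e^{\mathcal G})^nf(\mathbf x)=e^{-\lambda t}\,e^{\lambda t\,e^{\mathcal G}}f(\mathbf x)=e^{-\lambda t(I-e^{\mathcal G})}f(\mathbf x).
\end{align*}
Thus $P^Z_t=e^{t\mathcal B}$ with $\mathcal B=-\lambda(I-e^{\mathcal G})$, so $u(\mathbf x,t):=P^Z_tf(\mathbf x)=\mathbb E f(Z_t+\mathbf x)$ solves $\partial_t u=-\lambda(I-e^{\mathcal G})u$ with $u(\mathbf x,0)=f(\mathbf x)$, which is \eqref{per}. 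Equivalently, in Fourier variables one checks $\mathbb E e^{i\mathbf k\cdot Z_t}=\exp\big(-\lambda t(1-\mathbb E e^{i\mathbf k\cdot Y(1)})\big)=\exp\big(t(-\lambda)(1-e^{\Psi(\mathbf k)})\big)$, where $\Psi$ is the Fourier symbol of $\mathcal G$; hence the symbol of the generator of $(Z_t)$ is $-\lambda(1-e^{\Psi(\mathbf k)})$, i.e. the Fourier multiplier of $-\lambda(I-e^{\mathcal G})$ in the sense of \eqref{af}. This gives the claim.

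The \emph{main obstacle} is making the operational calculus with $e^{\mathcal G}$ rigorous: one must fix that $e^{\mathcal G}$ means the time-one transition operator $Q_1$, that $e^{n\mathcal G}=(e^{\mathcal G})^n$, and that the series $\sum_n\frac{(\lambda t)^n}{n!}(e^{\mathcal G})^n$ converges in operator norm — which is immediate on $C_b$ (or $L^p$) since the Feller/L\'evy semigroup generated by $\mathcal G$ is a contraction, $\|e^{\mathcal G}\|\le1$ — together with the attendant identification of the domain of $\mathcal B=-\lambda(I-e^{\mathcal G})$. The only other point needing care is the identity $\sum_{j=1}^nY_j\stackrel{d}{=}Y(n)$, which rests on $(Y_t)$ being a L\'evy process (equivalently, on $Y(1)$ being infinitely divisible with the stated convolution structure), and the interchange of expectation with the (a.s.\ convergent, conditionally finite) sum defining $Z_t$.
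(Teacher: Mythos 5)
Your argument is correct, and its closing ``equivalently, in Fourier variables'' remark is in fact the paper's entire proof: the authors take the Fourier transform of \eqref{per} to get $\widehat{u}(k,t)=\exp(-\lambda t(1-e^{\Phi(k)}))$, compute the characteristic function of $Z_t$ by conditioning on $N(t)$ (their formula \eqref{compa}), and match the two using $\mathbb{E}e^{ikY(1)}=e^{\Phi(k)}$. What you do differently is to make your \emph{primary} route an operator-level computation: conditioning on $N(t)$ and summing the operator series $\sum_n \frac{(\lambda t)^n}{n!}(e^{\mathcal G})^n$ to exhibit the full transition semigroup $P^Z_t=e^{-\lambda t(I-e^{\mathcal G})}$ directly, with $e^{\mathcal G}$ identified as the time-one transition operator $Q_1$. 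This buys something the paper's Fourier matching does not: it identifies the semigroup acting on arbitrary initial data (not just the one-point marginal law), it needs no Fourier-multiplier hypotheses on $\mathcal G$, and the convergence issue reduces to the contraction bound $\|e^{\mathcal G}\|\le 1$, which you correctly flag. One small over-statement: the identity $\mathbb{E}f\big(x+\sum_{j=1}^n Y_j\big)=(Q_1)^n f(x)$ needs only that the $Y_j$ are i.i.d.\ with the law of $Y(1)$ and an iterated conditioning; you do not actually need $\sum_{j=1}^n Y_j\stackrel{d}{=}Y(n)$ or the L\'evy structure of $(Y_t)$ for that step. Both routes are sound; yours is the more robust packaging, the paper's is the shorter one.
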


This means that the law of the jumps in the compound Poisson process
is fixed by the operator $\mathcal{G}$.

\begin{proof}

The process $(Y_t)_{t\geq 0}$ has infinitesimal generator $\mathcal{G}$ and transition semigroup
$P_t= e^{t\mathcal{G}}$ with symbol $\widehat{P}_t=e^{t \Phi}$. The transition law is written as
$$P_t f_0(x)=\mathbb{E}f_0(Y_t+x),$$
and solves the Cauchy problem
\begin{equation}
\begin{cases}
\frac{\partial f}{\partial t}=\mathcal{G}f,\\
f(x,0)= f_0(x).
\end{cases}
\end{equation}
Then, we have that $P_1 = e^{\mathcal{G}}$.
Let us consider the Fourier transform of \eqref{per},
\begin{equation}
\frac{\partial \widehat{u}}{\partial
t}=-\lambda(I-e^{\Phi(k)})\widehat{u},
\end{equation}
where $\Phi$ is the Fourier multiplier of the operator
$\mathcal{G}$. By integrating with respect to time, we obtain
\begin{equation}\label{peron}
\widehat{u}(k,t)=exp\left(-\lambda t(I-e^{\Phi(k)})\right).
\end{equation}
The characteristic function of the process $(Z_t)_{t\geq 0}$ is given by (see formula \eqref{compa} above)
\begin{align}\label{peron1}
\mathbb{E}e^{ikZ_t}&= exp\left[-\lambda t(I-\mathbb{E}e^{iY(1)k})\right].
\end{align}
Since
$$\mathbb{E}e^{iY(1)k}= e^{\Phi(k)},$$
we have that \eqref{peron1} coincides with \eqref{peron}, as claimed.

\end{proof}

\begin{os}
We specialize formula \eqref{per} in order to obtain some connections with \eqref{per5}.
In the case $\mathcal{G}= -\partial_x$, the Perron-Frobenius operator $K$ is associated to the map
$T: x\mapsto x-1$. Then we have that \eqref{per} becomes
\begin{equation}\label{pois}
\frac{\partial u}{\partial t}= -\lambda(I-e^{-\partial_x})u= \lambda(u(x-1,t)-u(x,t)),
\end{equation}
and $e^{\mathcal{G}}=B$, is the backward operator.
The stochastic solution to \eqref{pois} is therefore
$$Z_t = N(t),$$
that is the homogenous Poisson process.\\
If $\mathcal{G}= -\partial_x^{\alpha}$, that is the Riemann-Liouville derivative of order $\alpha \in (0,1)$ then, by using \eqref{con1} we have that
\begin{equation}
e^{\mathcal{G}}f(x)=e^{-\partial_x^{\alpha}}f(x)= \int_0^\infty ds\, h_\alpha(s, 1) \,
    f(x-s).
\end{equation}
Hence, we have that
$$Y_j\stackrel{d}{=} \mathfrak{H}^\alpha(1),\quad \forall j, $$
so that
\begin{equation}
Z_t\stackrel{d}{=}\sum_{j=1}^{N(t)}\mathfrak{H}_j^\alpha(1).
\end{equation}
Moreover, by using the fact that (see \eqref{koma} above)
$$e^{-t\partial_x^{\alpha}}f(x)=\mathbb{E}e^{-\mathfrak{H}^{\alpha}_t\partial_x}f(x),$$
we have that
\begin{align}
-\lambda(I-e^{-\partial_x^{\alpha}})f(x)&= \lambda \mathbb{E}(e^{-\mathfrak{H}^{\alpha}_t\partial_x}-1)f(x)\\
\nonumber &=\lambda\int_0^{+\infty}\left(e^{-y\partial_x}f(x)-f(x)\right)h_{\alpha}(dy,1)\\
\nonumber &= \lambda\int_0^{+\infty}\left(f(x-y)-f(x)\right)h_{\alpha}(dy,1).
\end{align}
\end{os}

\begin{te}
Let us consider the equation
\begin{equation}\label{per1}
\frac{\partial v}{\partial
t}+\nabla_{\theta}^{\alpha}(\mathbf{u}v)=-\lambda (I-K)v,\quad \mathbf{x}\in\mathbb{R}^d, t>0,
\end{equation}
subject to the initial condition
$v(\mathbf{x},0)=\delta(\mathbf{x})$, where $\alpha \in(0,1)$,
$\mathbf{u}$ is a vector with constant coefficients and $K=
e^{-\mathbf{1}\cdot \nabla}$. The
stochastic solution to \eqref{per1} is given by
\begin{equation}\label{per3}
\mathbf{Y}_t=\mathbf{N}_t+\sum_{j=1}^{d}\boldsymbol{\theta}_j\mathfrak{H}^\alpha\left((\boldsymbol{\theta}_j\cdot
\mathbf{u})t\right),
\end{equation}
where $\mathbf{N}_t=\mathbf{1}N_t$ and $\mathbf{1}=(1,1,\dots,1)$. Furthermore,
\begin{equation}\label{dopo}
v(\mathbf{x},t)= \sum_{m =0}^{\infty}\rho_{\alpha}(\mathbf{x}-m\mathbf{1},t)e^{-\lambda t}\frac{(\lambda t)^m}{m!},
\end{equation}
where $\rho_{\alpha}(\mathbf{x}, t)$ is the fundamental solution of \eqref{main}.
\end{te}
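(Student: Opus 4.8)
The plan is to mimic, in the Fourier variable, the argument already used for Theorems 3.1 and 3.2. First I would Fourier transform \eqref{per1}. By \eqref{fou} the advection term contributes the symbol $\sum_{l=1}^{d}(\mathbf{u}\cdot\boldsymbol{\theta}_l)(-i\mathbf{k}\cdot\boldsymbol{\theta}_l)^{\alpha}$, while $K=e^{-\mathbf{1}\cdot\nabla}$ is the pure translation $Kf(\mathbf{x})=f(\mathbf{x}-\mathbf{1})$ (cf.\ the shift rule recalled around \eqref{con1}), so its Fourier multiplier is $e^{i\mathbf{k}\cdot\mathbf{1}}$. Hence $\widehat{v}(\mathbf{k},t)$ solves the linear ODE in $t$
\[
\partial_t\widehat{v}(\mathbf{k},t)=-\Big(\sum_{l=1}^{d}(\mathbf{u}\cdot\boldsymbol{\theta}_l)(-i\mathbf{k}\cdot\boldsymbol{\theta}_l)^{\alpha}\Big)\widehat{v}(\mathbf{k},t)-\lambda\big(1-e^{i\mathbf{k}\cdot\mathbf{1}}\big)\widehat{v}(\mathbf{k},t),
\]
with $\widehat{v}(\mathbf{k},0)=1$, and integration gives the factorization
\[
\widehat{v}(\mathbf{k},t)=\exp\Big(-t\sum_{l=1}^{d}(\mathbf{u}\cdot\boldsymbol{\theta}_l)(-i\mathbf{k}\cdot\boldsymbol{\theta}_l)^{\alpha}\Big)\cdot\exp\big(-\lambda t(1-e^{i\mathbf{k}\cdot\mathbf{1}})\big).
\]

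Second, I would recognise the two factors. By Theorem 3.1 (equivalently by the computation in Theorem 3.2) the first factor equals $\widehat{\rho_\alpha}(\mathbf{k},t)$, the Fourier transform of the fundamental solution of \eqref{main}, which is the law of $\mathbf{Z}_t=\sum_{l=1}^{d}\boldsymbol{\theta}_l\,\mathfrak{H}^{\alpha}_l\big((\mathbf{u}\cdot\boldsymbol{\theta}_l)t\big)$. The second factor is the characteristic function of $\mathbf{N}_t=\mathbf{1}N_t$, where $N_t$ is a homogeneous Poisson process of rate $\lambda$, since $\mathbb{E}e^{i\mathbf{k}\cdot\mathbf{1}N_t}=\exp(-\lambda t(1-e^{i\mathbf{k}\cdot\mathbf{1}}))$. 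Taking $N_t$ independent of the subordinators $\mathfrak{H}^{\alpha}_l$, the product of characteristic functions is the characteristic function of the sum, so $\widehat{v}(\mathbf{k},t)=\mathbb{E}e^{i\mathbf{k}\cdot\mathbf{Y}_t}$ with $\mathbf{Y}_t$ as in \eqref{per3}; inverting the Fourier transform shows that $v(\mathbf{x},t)$ is the law of $\mathbf{Y}_t$, i.e.\ $\mathbf{Y}_t$ is the stochastic solution of \eqref{per1}.

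Third, for the series \eqref{dopo} I would condition on $N_t$: on the event $\{N_t=m\}$ one has $\mathbf{Y}_t=m\mathbf{1}+\mathbf{Z}_t$, whose density is $\rho_\alpha(\mathbf{x}-m\mathbf{1},t)$, and weighting by the Poisson probabilities $e^{-\lambda t}(\lambda t)^m/m!$ and summing over $m$ produces exactly \eqref{dopo}. Alternatively one checks directly that the right-hand side of \eqref{dopo} solves \eqref{per1}: since $K$ sends $\rho_\alpha(\cdot-m\mathbf{1},t)$ to $\rho_\alpha(\cdot-(m+1)\mathbf{1},t)$ and each $\rho_\alpha(\cdot-m\mathbf{1},t)$ solves \eqref{main} (the operators being translation invariant), substituting the series into \eqref{per1} reduces, term by term, to the forward Kolmogorov relation for the Poisson weights, exactly as in the preceding remark.

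I expect the only delicate points to be bookkeeping rather than substance. One has to pin down the Fourier symbols of the divergence-type term $\nabla^{\alpha}_{\theta}\cdot(\mathbf{u}v)$ and of $K$ so that no spurious sign or factor of $i$ slips in; one has to make precise that $K=e^{-\mathbf{1}\cdot\nabla}$ acts as the genuine shift $f\mapsto f(\cdot-\mathbf{1})$ on the relevant (generalized) functions and that $N_t$ is meant to be independent of the stable subordinators; and one has to justify the convergence of the series in \eqref{dopo} and its term-by-term differentiation, which follows from a sup-norm bound on $\rho_\alpha(\cdot,t)$ of the type used for \eqref{bound0} together with the factorial decay of the Poisson weights. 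None of these is deep, so the argument is essentially a variant of the proofs of Theorems 3.1 and 3.2.
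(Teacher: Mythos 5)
Your proposal is correct and follows essentially the same route as the paper: Fourier transform of \eqref{per1}, factorization of $\widehat{v}$ into the advection symbol and the Poisson symbol, and identification with the characteristic function of $\mathbf{Y}_t$ computed as in Theorem 3.2. The only cosmetic difference is that you obtain \eqref{dopo} by conditioning on $N_t$, whereas the paper verifies it by Fourier-transforming the series via the shift rule $\rho_{\alpha}(\mathbf{x}-m\mathbf{1},t)=e^{-m(\mathbf{1}\cdot\nabla)}\rho_{\alpha}(\mathbf{x},t)$ and summing; these are two phrasings of the same computation.
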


\begin{proof}
The characteristic function of \eqref{per3}, is given by
\begin{align}\label{per4}
\mathbb{E}e^{i\mathbf{k}\cdot\mathbf{Y}}&=
\mathbb{E}exp\left(i\mathbf{k}\cdot\mathbf{1}N_t+\sum_{j=1}^d
i\mathbf{k}\cdot \boldsymbol{\theta}_j
\mathfrak{H}^{\alpha}\left((\boldsymbol{\theta}_j\cdot
\mathbf{u})t\right)\right)\\
\nonumber &= exp\left(-\lambda t
(1-e^{i\mathbf{k}\cdot\mathbf{1}})-\sum_{j=1}^d(\boldsymbol{\theta}_j\cdot
\mathbf{u})(-i\mathbf{k}\cdot
\boldsymbol{\theta}_j)^{\alpha}t\right).
\end{align}
From \eqref{per1}, by taking the Fourier
transform we obtain
\begin{equation}
\frac{\partial \widehat{v}}{\partial t}+\sum_{j=1}^d
(\boldsymbol{\theta}_j\cdot\mathbf{u})(-i\mathbf{k}\cdot\boldsymbol{\theta}_j)^{\alpha}\widehat{v}=-\lambda
(I-e^{i\mathbf{k}\cdot\mathbf{1}})\widehat{v},
\end{equation}
which leads to
\begin{equation}\label{kav}
\widehat{v}(\mathbf{k},t)= exp\left(-\lambda t
(I-e^{i\mathbf{k}\cdot\mathbf{1}})-\sum_{j=1}^d(\boldsymbol{\theta}_j\cdot
\mathbf{u})(-i\mathbf{k}\cdot
\boldsymbol{\theta}_j)^{\alpha}t\right).
\end{equation}
Formula \eqref{kav} coincides with \eqref{per4}, as claimed.\\
In order to prove \eqref{dopo}, we observe that \eqref{kav} can be written as follows
\begin{align}\label{dopo1}
\widehat{v}(\mathbf{k},t)&= exp\left(-\lambda t
(I-e^{i\mathbf{k}\cdot\mathbf{1}})\right)exp\left(-\sum_{j=1}^d(\boldsymbol{\theta}_j\cdot
\mathbf{u})(-i\mathbf{k}\cdot
\boldsymbol{\theta}_j)^{\alpha}t\right)\\
\nonumber &= exp\left(-\lambda t
(I-e^{i\mathbf{k}\cdot\mathbf{1}})\right)\widehat{\rho}_{\alpha}(\mathbf{k},t),
\end{align}
where $\widehat{\rho}_{\alpha}(\mathbf{k},t)$ is the Fourier transform of the fundamental solution of \eqref{main}.
We now consider the Fourier transform of \eqref{dopo}. Recalling the operational rule
$$\rho_{\alpha}(\mathbf{x}-m\mathbf{1},t)= e^{-m(\mathbf{1}\cdot \nabla)}\rho_{\alpha}(\mathbf{x},t),$$
we have
\begin{align}
\widehat{\rho}_{\alpha}(\mathbf{k},t)e^{-\lambda t}\sum_{m=0}^{\infty}e^{i(\mathbf{1}\cdot\mathbf{k})m}\frac{(\lambda t)^m}{m!}=
\widehat{\rho}_{\alpha}(\mathbf{k},t)e^{-\lambda t(1-e^{i\mathbf{k}\cdot\mathbf{1}})},
\end{align}
that coincides with \eqref{dopo1}.
\end{proof}

\begin{os}
We observe that for $\lambda = 0$, we have that
$$v(\mathbf{x},t)= \rho_{\alpha}(\mathbf{x},t),$$
is the fundamental solution of \eqref{main}.
\end{os}

\section{Second order directional derivatives and their fractional power}

We start to deepen the meaning of second order directional
derivative $(\boldsymbol{\theta}\cdot \nabla)^2$. We notice that
\begin{align}\label{ito}
(\boldsymbol{\theta}\cdot \nabla)^2 &= \sum_{i,j}\theta_i \theta_j
\partial_{x_i}\partial_{x_j}\\
\nonumber &=\sum_{i,j}a_{ij}
\partial_{x_i}\partial_{x_j},
\end{align}
where the associated matrix $\{a_{ij}\}$ is  symmetric and singular. Also we assume that $\|\boldsymbol{\theta}\|=1$.\\

The solution to the equation
\begin{equation}\label{calo}
\frac{\partial}{\partial t}u(\mathbf{x},t)
=(\boldsymbol{\theta}\cdot \nabla)^2 u(\mathbf{x},t),\quad
\mathbf{x}\in \mathbb{R}^d, t\geq 0,
\end{equation}
subject to the initial condition
$u(\mathbf{x},0)=\delta(\boldsymbol{\theta}\cdot \mathbf{x})$, is given by (see for example
\cite{Mirko})
\begin{equation}\label{pote2}
u(\mathbf{x},t) = g((\boldsymbol{\theta}\cdot \mathbf{x}),t),
\end{equation}
where
\begin{equation}
g(x,t)=\frac{e^{-\frac{x^2}{4t}}}{\sqrt{4 \pi t}},
\end{equation}
is the law of the one-dimensional Brownian motion $(B_t)_{t>0}$. We
will write $(\mathbf{B}_t)_{t>0}$ for the $d$-dimensional vector,
whose elements are completely correlated one dimensional Brownian motions.
We say that
\begin{equation}
\mathcal{I}_t= \boldsymbol{\theta}\cdot \mathbf{B}_t, \quad t\geq
0,
\end{equation}
is the stochastic solution to \eqref{calo}.
Notice that $\mathcal{I}_t$ is a Gaussian process with singular covariance matrix and degenerate
multivariate normal distribution.

 Therefore, we can write the solution to
\eqref{calo}, subject to an initial condition
$u(\mathbf{x},0)=u_0(\mathbf{x})$, as
\begin{equation}\label{sera}
P_t
u_0(\mathbf{x})=\int_{\mathbb{R}^d}u_0(\mathbf{y})\frac{e^{-\frac{|\boldsymbol{\theta}\cdot(\mathbf{y}-\mathbf{x})|^2}
{4t}}}{\sqrt{4 \pi t}}d\mathbf{y},
\end{equation}
where $P_t=e^{t(\boldsymbol{\theta}\cdot\nabla)^2}$ is the
associated semigroup. We note that: $P_0 = Id$; $P_t 1= 1$ and $P_t P_s f= P_{t+s}f$.

We are now ready to present an integral representation of the power
$\alpha\in (0,1)$ of the operator $(\theta\cdot \nabla)^{2}$ and a stochastic
representation of the related solutions.
\begin{te}
The stochastic solution of the fractional differential equation
\begin{equation}\label{x}
\big(\frac{\partial}{\partial t}
+\left(-(\boldsymbol{\theta}\cdot\nabla)^2\right)^{\alpha}\big)u(\mathbf{x},t)=0,
\quad \mathbf{x} \in \mathbb{R}^d, \, t>0, \alpha \in (0,1),
\end{equation}
subject to the initial condition
$u(\mathbf{x},0)=f(\mathbf{x})\in L^1(\mathbb{R}^d),$
is given by
\begin{equation}\label{st}
\mathcal{I}_t^{\alpha}=\boldsymbol{\theta}\cdot\mathbf{B}_{\mathfrak{H}_t^\alpha}.
\end{equation}
In equation \eqref{x}, the power $\alpha \in (0,1)$ of the operator $(\boldsymbol{\theta}\cdot\nabla)^2$, is given by
\begin{equation}\label{pote}
-\left(-(\boldsymbol{\theta}\cdot\nabla)^2\right)^{\alpha}f(\mathbf{x})=
C(\alpha)\frac{1}{2}\int_{\mathbb{R}^d}\frac{\left(f(\mathbf{y}+\mathbf{x})+f(\mathbf{x}-\mathbf{y})-
2f(\mathbf{x}) \right)}{
|\boldsymbol{\theta}\cdot\mathbf{y}|^{2\alpha+1}}d\mathbf{y},
\end{equation}
with
$$C(\alpha)=
\frac{1}{\pi}\Gamma(2\alpha+1)\sin(\pi\alpha).$$
\end{te}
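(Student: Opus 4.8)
For the statement, write $\mathcal{A}:=(\boldsymbol{\theta}\cdot\nabla)^2$. The claim contains two essentially independent assertions — the hypersingular representation \eqref{pote} of $\left(-\mathcal{A}\right)^{\alpha}$, and the stochastic representation \eqref{st} of the solution of \eqref{x} — and I would establish them in this order.

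For \eqref{pote}, the plan is to start from the subordination formula for the generator of a subordinate process recalled above, namely that for a Feller generator $\mathcal{L}$ with semigroup $P_s=e^{s\mathcal{L}}$,
\[ -(-\mathcal{L})^{\alpha}f(\mathbf{x})=\frac{\alpha}{\Gamma(1-\alpha)}\int_0^{\infty}\big(P_sf(\mathbf{x})-f(\mathbf{x})\big)\,\frac{ds}{s^{\alpha+1}}, \]
and specialize it to $\mathcal{L}=\mathcal{A}=(\boldsymbol{\theta}\cdot\nabla)^2$, whose semigroup is given explicitly by \eqref{sera}. Translating $\mathbf{y}\mapsto\mathbf{y}+\mathbf{x}$ inside \eqref{sera} and symmetrizing via the involution $\mathbf{y}\mapsto-\mathbf{y}$ (using $P_s1=1$ to produce the $-2f(\mathbf{x})$ term) turns $P_sf(\mathbf{x})-f(\mathbf{x})$ into $\tfrac12\int_{\mathbb{R}^d}(f(\mathbf{x}+\mathbf{y})+f(\mathbf{x}-\mathbf{y})-2f(\mathbf{x}))\,\tfrac{e^{-|\boldsymbol{\theta}\cdot\mathbf{y}|^2/(4s)}}{\sqrt{4\pi s}}\,d\mathbf{y}$; interchanging the $s$- and $\mathbf{y}$-integrations leaves the one-dimensional integral
\[ \int_0^{\infty}\frac{e^{-a^2/(4s)}}{\sqrt{4\pi s}}\,\frac{ds}{s^{\alpha+1}}=\frac{4^{\alpha}\,\Gamma\!\left(\alpha+\tfrac12\right)}{\sqrt{\pi}}\,a^{-2\alpha-1},\qquad a:=|\boldsymbol{\theta}\cdot\mathbf{y}|, \]
to be evaluated by the change of variable $u=a^2/(4s)$. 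Collecting constants yields the prefactor $\dfrac{\alpha\,4^{\alpha}\,\Gamma\!\left(\alpha+\tfrac12\right)}{2\sqrt{\pi}\,\Gamma(1-\alpha)}$, and it only remains to check that this equals $\tfrac12 C(\alpha)$ with $C(\alpha)=\pi^{-1}\Gamma(2\alpha+1)\sin(\pi\alpha)$; this follows from the Legendre duplication formula $\Gamma(\alpha)\Gamma\!\left(\alpha+\tfrac12\right)=2^{1-2\alpha}\sqrt{\pi}\,\Gamma(2\alpha)$, the identity $\alpha\,\Gamma(2\alpha)=\tfrac12\Gamma(2\alpha+1)$, and the reflection formula $\Gamma(\alpha)\Gamma(1-\alpha)=\pi/\sin(\pi\alpha)$.

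For \eqref{st} I would invoke the Bochner subordination principle \eqref{convAalpha} exactly as in Section~4. Since $(\mathcal{I}_t)_{t\ge0}=(\boldsymbol{\theta}\cdot\mathbf{B}_t)_{t\ge0}$ is the stochastic solution of the second order equation \eqref{calo}, with transition semigroup $P_t=e^{t(\boldsymbol{\theta}\cdot\nabla)^2}$ given by \eqref{sera}, applying \eqref{convAalpha} with $\mathcal{A}=-(\boldsymbol{\theta}\cdot\nabla)^2$ gives
\[ u(\mathbf{x},t)=e^{-t\left(-(\boldsymbol{\theta}\cdot\nabla)^2\right)^{\alpha}}f(\mathbf{x})=\mathbb{E}\,e^{\mathfrak{H}^{\alpha}_t(\boldsymbol{\theta}\cdot\nabla)^2}f(\mathbf{x})=\int_0^{\infty}h_{\alpha}(s,t)\,P_sf(\mathbf{x})\,ds, \]
with $(\mathfrak{H}^{\alpha}_t)_{t\ge0}$ an independent $\alpha$-stable subordinator; this is precisely the transition law of the time-changed process $\mathcal{I}_{\mathfrak{H}^{\alpha}_t}=\boldsymbol{\theta}\cdot\mathbf{B}_{\mathfrak{H}^{\alpha}_t}=\mathcal{I}^{\alpha}_t$, which is the claim. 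As a cross-check one passes to Fourier variables: the symbol of $(\boldsymbol{\theta}\cdot\nabla)^2$ is $(-i\mathbf{k}\cdot\boldsymbol{\theta})^2=-(\mathbf{k}\cdot\boldsymbol{\theta})^2$, so the symbol of $\left(-(\boldsymbol{\theta}\cdot\nabla)^2\right)^{\alpha}$ is $|\mathbf{k}\cdot\boldsymbol{\theta}|^{2\alpha}$ and \eqref{x} integrates to $\widehat{u}(\mathbf{k},t)=\widehat{f}(\mathbf{k})\,e^{-t|\mathbf{k}\cdot\boldsymbol{\theta}|^{2\alpha}}$, while by the Laplace transform \eqref{lap} of the stable subordinator $\mathbb{E}\,e^{i(\mathbf{k}\cdot\boldsymbol{\theta})\mathcal{I}^{\alpha}_t}=\mathbb{E}\,e^{-\mathfrak{H}^{\alpha}_t(\mathbf{k}\cdot\boldsymbol{\theta})^2}=e^{-t|\mathbf{k}\cdot\boldsymbol{\theta}|^{2\alpha}}$, so the two representations agree.

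The step I expect to be delicate is analytic rather than algebraic: justifying the Fubini interchange and, above all, giving a rigorous meaning to the hypersingular integral \eqref{pote}. The weight $|\boldsymbol{\theta}\cdot\mathbf{y}|^{-2\alpha-1}$ is singular along the entire hyperplane $\{\boldsymbol{\theta}\cdot\mathbf{y}=0\}$, so \eqref{pote} should be read as a principal value, relying on the second order cancellation of $f(\mathbf{x}+\mathbf{y})+f(\mathbf{x}-\mathbf{y})-2f(\mathbf{x})$ in the direction $\boldsymbol{\theta}$ — equivalently, on the fact that, because of the degenerate form of \eqref{sera}, the operator acts only through the directional variable $\boldsymbol{\theta}\cdot\mathbf{x}$, along which the problem is genuinely one-dimensional. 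Once this is granted, the remaining steps are routine bookkeeping of the Gamma factors and of the subordination identity.
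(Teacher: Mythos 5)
Your proposal follows essentially the same route as the paper's proof: the Balakrishnan--Bochner formula $-(-\mathcal{A})^{\alpha}f=\frac{\alpha}{\Gamma(1-\alpha)}\int_0^\infty (P_sf-f)s^{-\alpha-1}ds$ applied to the explicit degenerate heat semigroup \eqref{sera}, evaluation of the resulting $s$-integral to produce the kernel $|\boldsymbol{\theta}\cdot\mathbf{y}|^{-2\alpha-1}$, the duplication/reflection identities for $C(\alpha)$, symmetrization to interpret the hypersingular integral, and Bochner subordination $u(\mathbf{x},t)=\int_0^\infty h_\alpha(s,t)P_sf(\mathbf{x})\,ds$ for the stochastic representation via $\boldsymbol{\theta}\cdot\mathbf{B}_{\mathfrak{H}^{\alpha}_t}$. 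The only differences are cosmetic (you symmetrize before integrating in $s$, whereas the paper integrates first and then justifies the principal-value/symmetrized form by a Fourier computation, and you add a Fourier-symbol cross-check), so the argument is correct and matches the paper's.
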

\begin{proof}
Let us prove \eqref{pote}.
The general
expression for the power $\alpha$ of the operator $\mathcal{A}$ is given by (see
for example \cite{libre, Jacobs})
\begin{equation}\label{pote1}
-(-\mathcal{A})^{\alpha}f(\mathbf{x})=
\frac{\alpha}{\Gamma(1-\alpha)}\int_0^{\infty}\frac{\left(P_sf(\mathbf{x})-f(\mathbf{x})\right)ds}{s^{\alpha+1}},
\end{equation}
where $P_s=e^{s\mathcal{A}}$, is the transition semigroup related to equation \eqref{levo}
with representation \eqref{af} for $\mathcal{A}$.\\
By using equation
\eqref{pote1} and \eqref{sera}, we have that
\begin{align}
-(-(\boldsymbol{\theta}\cdot
\nabla)^2)^{\alpha}f(\mathbf{x})&=\frac{\alpha}{\Gamma(1-\alpha)}\int_0^{\infty}\left(P_s
f(\mathbf{x})-f(\mathbf{x})\right)\frac{ds}{s^{\alpha+1}}\\
\nonumber
&=\frac{\alpha}{\Gamma(1-\alpha)}\int_{\mathbb{R}^d}\left(f(
\mathbf{y})-f(\mathbf{x})
\right)\left[\int_0^{\infty}\frac{e^{-\frac{|\boldsymbol{\theta}\cdot(\mathbf{y}-\mathbf{x})|^2}{4s}}}{\sqrt{4\pi
s}}\frac{ds}{s^{\alpha+1}}\right]d\mathbf{y}\\
\nonumber
&=\frac{4^{\alpha}\alpha}{\Gamma(1-\alpha)}\frac{\Gamma(\alpha+\frac{1}{2})}{\sqrt{\pi}}\int_{\mathbb{R}^d}
\frac{f(\mathbf{y})-f(\mathbf{x})}{
|\boldsymbol{\theta}\cdot(\mathbf{y}-\mathbf{x})|^{2\alpha+1}}d\mathbf{y},
\end{align}

and therefore, we arrive at the following representation
\begin{equation}\label{lapla}
-(-(\boldsymbol{\theta}\cdot
\nabla)^2)^{\alpha}f(\mathbf{x})=C(\alpha)\int_{\mathbb{R}^d}\frac{f(\mathbf{y})-f(\mathbf{x})
}{
|\boldsymbol{\theta}\cdot(\mathbf{y}-\mathbf{x})|^{2\alpha+1}}d\mathbf{y},
\end{equation}
where, in view of the duplication formula
$$\Gamma(2\alpha)=\frac{4^{\alpha-1}}{\sqrt{\pi}}\Gamma(\alpha)\Gamma(\alpha+\frac{1}{2}),$$
we get that
\begin{equation}
C(\alpha)=\frac{4^{\alpha}\alpha}{\Gamma(1-\alpha)}\frac{\Gamma(\alpha+\frac{1}{2})}{\sqrt{\pi}}=
\frac{1}{\pi}\Gamma(2\alpha+1)\sin(\pi\alpha).
\end{equation}
We notice that \eqref{lapla}, must be considered in principal value,
due to the singular kernel. However, we have that
\begin{align}\label{PV}
&p.v \,C(\alpha)
\int_{\mathbb{R}^d}\frac{f(\mathbf{y}+\mathbf{x})-f(\mathbf{x}) }{
|\boldsymbol{\theta}\cdot \mathbf{y}|^{2\alpha+1}}d\mathbf{y}\\
\nonumber
&=C(\alpha)\frac{1}{2}\int_{\mathbb{R}^d}\frac{\left(f(\mathbf{y}+\mathbf{x})+f(\mathbf{x}-\mathbf{y})-
2f(\mathbf{x}) \right)}{
|\boldsymbol{\theta}\cdot\mathbf{y}|^{2\alpha+1}}d\mathbf{y}.
\end{align}
Indeed, taking the Fourier transform of the last term, we obtain
\begin{align}
&C(\alpha)\frac{1}{2}\int_{\mathbb{R}^d}\frac{\left(e^{i\mathbf{k}\cdot\mathbf{y}}+e^{-i\mathbf{k}\cdot\mathbf{y}}-2
\right)}{
|\boldsymbol{\theta}\cdot\mathbf{y}|^{2\alpha+1}}\widehat{f}(\mathbf{k})d\mathbf{y}\\
\nonumber
&=C(\alpha)\frac{1}{2}\int_{\mathbb{R}^d}\frac{\left(e^{i\mathbf{k}\cdot\mathbf{y}}-1\right)+\left(e^{-i\mathbf{k}\cdot\mathbf{y}}-1
\right)}{
|\boldsymbol{\theta}\cdot\mathbf{y}|^{2\alpha+1}}\widehat{f}(\mathbf{k})d\mathbf{y}\\
\nonumber
&=C(\alpha)\int_{\mathbb{R}^d}\frac{\left(e^{i\mathbf{k}\cdot\mathbf{y}}-1\right)}{
|\boldsymbol{\theta}\cdot\mathbf{y}|^{2\alpha+1}}\widehat{f}(\mathbf{k})d\mathbf{y},
\end{align}
which coincides with the Fourier transform of the first term in \eqref{PV}.\\
In order to prove that \eqref{st} is the stochastic solution of \eqref{x},
let us consider
\begin{align}\label{mer}
u(\mathbf{x},t)&=\int_0^{\infty}ds\,h_{\alpha}(s,t)P_s\,f(\mathbf{x})\\
\nonumber &=\int_0^{\infty}ds\, h_{\alpha}(s,t)e^{s\left(\boldsymbol{\theta}\cdot \nabla\right)^2}f(\mathbf{x})\\
\nonumber &=\int_0^{\infty}ds\, h_{\alpha}(s,t)e^{-s\left(-\left(\boldsymbol{\theta}\cdot \nabla\right)^2\right)}f(\mathbf{x})\\
\nonumber &=e^{-t\left(-\left(\boldsymbol{\theta}\cdot\nabla\right)^2\right)^{\alpha}}f(\mathbf{x}).
\end{align}
Then, \eqref{mer} is the solution to \eqref{x}, as claimed.

\end{proof}

\begin{os}
We notice that for $d=1$ and $\alpha \in (0,1)$, the equation
\eqref{pote} becomes
\begin{equation}
-\left(-\left(\frac{\partial}{\partial
x}\right)^2\right)^{\alpha}f(x)=C(\alpha)\int_{\mathbb{R}}\frac{f(y)-f(x)
}{ |y-
x|^{2\alpha+1}}dy=\frac{\partial^{2\alpha}f(x)}{\partial|x|^{2\alpha}},
\end{equation}
that is the Riesz fractional derivative as expected. Also, from equation \eqref{pote}, we find that

\begin{align}\label{lapl}
&\sum_{l=1}^d -(-(\boldsymbol{\theta}_l\cdot
\nabla)^2)^{\alpha}f(\mathbf{x})\\
\nonumber &=
\sum_{l=1}^d\int_{\mathbb{R}^d}\left(f(\mathbf{y})-f(\mathbf{x})\right)J(\boldsymbol{\theta}_l\cdot(\mathbf{x}-\mathbf{y}))d\mathbf{y},
\end{align}
where
\begin{equation}
 J(\boldsymbol{\theta}_l\cdot(\mathbf{x}-\mathbf{y}))=
C(\alpha)\frac{1}{
|\boldsymbol{\theta}_l\cdot(\mathbf{x}-\mathbf{y})|^{2\alpha+1}}.
\end{equation}
If $\boldsymbol{\theta}_i\equiv \mathbf{e}_i$, $i= 1,\cdots,d$ and $\alpha \in (0,1)$, then we have that
\begin{equation}
\sum_{l=1}^d -(-(\mathbf{e}_l\cdot
\nabla)^2)^{\alpha}f(\mathbf{x})= \sum_{l=1}^d \frac{\partial^{2\alpha}}{\partial|x_l|^{2\alpha}}f(\mathbf{x}).
\end{equation}
\end{os}

\begin{os}
Special care must be given to the case $\alpha = \frac{1}{2}$ in
$d=1$. In this case equation \eqref{lapl} becomes a Cauchy integral
\begin{equation}
-\left(-\left(\frac{\partial}{\partial
x}\right)^2\right)^{1/2}f(x)=\frac{p.v}{\pi}\int_{\mathbb{R}}\frac{f(y)-f(x)
}{ |y- x|^{2}}dy,
\end{equation}
where, as usual, "p.v." stands for ``principal value''.
\end{os}

\end{document}